\documentclass[11pt]{article}
\usepackage[T1]{fontenc}      
\usepackage[english]{babel}   


\usepackage[letterpaper,twoside,vscale=0.8,hscale=0.75,nomarginpar,hmarginratio=1:1]{geometry} 
\usepackage{setspace}         
\usepackage[bottom]{footmisc} 
\usepackage{framed}           
\usepackage{parskip}           

\usepackage{amsmath, amsthm, amssymb} 
\usepackage{mathtools}        
\usepackage{bm}               
\usepackage{bbm}              
\usepackage{nicefrac}         


\usepackage{graphicx}         
\usepackage{subcaption}       
\usepackage{wrapfig}          
\usepackage{array}            
\usepackage{float}            
\usepackage[font={small,it}]{caption} 
\usepackage{multirow}         


\usepackage{algorithm}        
\usepackage{algpseudocode}    


\usepackage[numbers,sort]{natbib} 


\usepackage{verbatim}         
\usepackage{xspace}           
\usepackage{enumitem}         
\usepackage{soul}             
\PassOptionsToPackage{dvipsnames}{xcolor}  
\usepackage{tikz-qtree, tikz-qtree-compat} 


\usepackage[backref,pageanchor=true,plainpages=false,pdfpagelabels,
bookmarks,bookmarksnumbered]{hyperref} 
\usepackage{cleveref}         





\usepackage{thmtools}
\usepackage{thm-restate}

\declaretheorem[name=Theorem,numberwithin=section]{theorem}
\declaretheorem[sibling=theorem]{lemma}

\declaretheorem[name=Corollary,sibling=theorem]{corollary}


\theoremstyle{definition}
\declaretheorem[name=Definition,sibling=theorem]{defn}
\declaretheorem[name=Definition,sibling=theorem]{definition}
\declaretheorem[sibling=theorem]{remark}

\declaretheorem[name=Observation,sibling=theorem]{observation}

\declaretheorem[name=Claim,sibling=theorem]{claim}

\ifx\proof\undefined
\newenvironment{proof}[1][\protect\proofname]{\par
	\normalfont\topsep6\p@\@plus6\p@\relax
	\trivlist
	\itemindent\parindent        
	\item[\hskip\labelsep\scshape #1]\ignorespaces
}{%
	\endtrivlist\@endpefalse
}
\providecommand{\proofname}{Proof}
\fi

\Crefname{algocf}{Algorithm}{Algorithms}
\crefname{algocfline}{line}{lines}
\Crefname{invariant}{Invariant}{Invariants}
\Crefname{claim}{Claim}{Claims}
\Crefname{subclaim}{Subclaim}{Subclaims}


\definecolor{fluorescentyellow}{rgb}{0.8,1.0,0.0}    
\sethlcolor{fluorescentyellow}                      


\newcommand{\eat}[1]{}  



\DeclareMathOperator{\per}{per}

\DeclareMathOperator{\perm}{perm}

\newcommand{\sumL}{\sum\limits}

\newcommand{\norm}[1]{\left\Vert#1\right\Vert}           


\newcommand{\del}{\backslash}
\newcommand{\RN}[1]{\textup{\uppercase\expandafter{\romannumeral#1}}} 



\makeatletter
\newcommand{\seqref}[1]{\textup{\tagform@split{\getrefnumber{#1}}}}
\newcommand\tagform@split[1]{%
	\begingroup
	\m@th\normalfont(\ignorespaces #1\unskip\@@italiccorr)%
	\endgroup
}
\makeatother



      \newcommand{\RR}{\mathbb{R}}



    \newcommand{\cF}{\mathcal{F}}












\usepackage{natbib}
\usepackage{fullpage}

\title{An Algorithmic Upper Bound for Permanents via a Permanental Schur Inequality}

\date{}
\author{Aditi Laddha \thanks{aditi.laddha@yale.edu; supported by the Institute for Foundations of Data Science at Yale University.}
\and Madhusudhan Reddy Pittu \thanks{madhusudhan.p@nyu.edu; supported in part by NSF grants CCF-2224718 and CCF-2422926.}}

\begin{document}

\maketitle
\begin{abstract}
Computing the permanent of a non-negative matrix is a computationally challenging, \#P-complete problem with wide-ranging applications. We introduce a novel permanental analogue of Schur's determinant formula, leveraging a newly defined \emph{permanental inverse}. Building on this, we introduce an iterative, deterministic procedure called the \emph{permanent process}, analogous to Gaussian elimination, which yields constructive and algorithmically computable upper bounds on the permanent. Our framework provides particularly strong guarantees for matrices exhibiting approximate diagonal dominance-like properties, thereby offering new theoretical and computational tools for analyzing and bounding permanents.
\end{abstract}

\section{Introduction}
The permanent of a matrix, despite its deceptively simple definition, is notoriously difficult to compute. It is well-known that the exact computation of the permanent is \#P-complete~\cite{valiant-permanet-hardness}, situating it at the forefront of complexity theory and establishing its computational intractability for all but trivially sized matrices. Despite this difficulty, permanents play a critical role across diverse areas such as combinatorics, graph theory (particularly in counting perfect matchings in bipartite graphs \cite{Minc_book-84}), quantum computing (specifically within boson sampling experiments aimed at demonstrating quantum advantage~\cite{AA11-boson-sampling}), and statistical physics (e.g., in dimer covering models \cite{HLLB08-monomer-dimer}).

Due to the permanent's computational complexity, significant effort has been directed towards deriving efficient upper bounds and approximation algorithms. Classical results, such as the Bregman-Minc inequality and its numerous refinements, form a rich and diverse body of work (e.g., \cite{Minc_binary63, Marcus-Minc_perm65, Bregman73, Schrijver_Minc-78, Minc_book-84, Schrijver_matching-98, HKM-98, LB-04, Samorodnitsky-08, Soules_minc-00, Soules_nonnegative-03, GS-14}). 

Recently, there has been growing interest in approximating the permanents of positive semidefinite (PSD) matrices \cite{anari2017simply, yuan2022maximizing, meiburg2023inapproximability, ebrahimnejad2025approximability}. Permanents of PSD matrices appear naturally in quantum optics and boson sampling in quantum computing \cite{aaronson2011computational,scheel2004permanents, shirai2003random}.

However, existing bounds have largely emerged from combinatorial, scaling, or probabilistic frameworks. In contrast, this work develops a deterministic, iterative procedure rooted in linear algebraic principles. Our main result is an algorithm for establishing upper bounds on the permanent of non-negative matrices and positive semidefinite matrices. The approach provides a new algorithmic pathway that is both theoretically sound and computationally feasible. We systematically adapt powerful and intuitive tools from determinant theory. 

The main contributions of this paper are summarized as follows.
\paragraph{A Permanental Inverse and Schur's Formula for Permanents.}
We define a novel analogue of the matrix inverse, constructed from permanents of submatrices. While it lacks multiplicative inverse properties, this \emph{permanental inverse} satisfies structural inequalities that enable us to derive new permanental inequalities and to establish our algorithmic upper bound on the permanent.
	\begin{defn}[Permanental Inverse]
		The \emph{permanental inverse} of a non-negative matrix $B\in \RR^{d\times d}_{\geq 0}$ with $\per(B)\neq 0$ is the matrix $C$ with entries
		\begin{align*}
			c_{i,j}=\frac{\per(B_{j,i})}{\per(B)},
		\end{align*}
		where $B_{j,i}$ is the matrix obtained by removing the $j^{\text{th}}$ row and $i^{\text{th}}$ column of $B$.
		We use $B^*$ to denote the permanental inverse of $B$.
	\end{defn}
	
Leveraging the permanental inverse, we establish an analogue of Schur's formula for determinants for permanents. This result provides the core theoretical engine for our upper bounds.
	\begin{restatable}[Permanental Schur's Formula]{theorem}{PermanentInequality}
		\label{thm:permanent-inequality}
		Let $A \in \RR^{n \times n}_{\geq 0}$ be a block matrix of the form
		\begin{align*}
		A = \begin{bmatrix}
			B & Y \\
			X^{\top} & W
		\end{bmatrix},
		\end{align*}
		where $B \in \RR^{d \times d}$ has non-zero permanent. Then the permanent of $A$ satisfies
		\begin{align}
			\label{eqn:perm-inv-prop} 
		\per(A) \leq \per(B) \cdot \per\left(W + X^{\top} B^* Y \right).
		\end{align}
	\end{restatable}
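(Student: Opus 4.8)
The plan is to imitate the classical derivation of Schur's determinant identity --- generalized Laplace expansion, Cauchy--Binet, and Jacobi's identity for minors of the adjugate --- but with each \emph{identity} replaced by an \emph{inequality} in the right direction, the discrepancy always being a sum of products of permanents of non-negative matrices, hence $\geq 0$. Throughout I use that $\per$ is monotone in non-negative entries, that permanents of non-negative matrices are $\geq 0$, and that $\per(B)>0$ (since $B\geq 0$ and $\per(B)\neq 0$). Write $m:=n-d$, index the ``$W$-coordinates'' by $[m]$, let $M[T,T']$ denote the submatrix of $M$ on rows $T$, columns $T'$ (with ``$:$'' meaning ``all''), and for $S\subseteq[d]$ or $S\subseteq[m]$ write $\bar S$ for its complement. \emph{Step 1 (reduce to $W=0$).} The permanental Laplace expansion of $\per(A)$ along the first $d$ rows, expanded once more inside each factor, gives
\begin{equation*}
\per(A)=\sum_{j\ge 0}\ \sum_{S,R\in\binom{[m]}{j}}\per\!\big(W[\bar S,\bar R]\big)\cdot\per\!\begin{bmatrix}B & Y[:,R]\\ X^{\top}[S,:] & 0\end{bmatrix},
\end{equation*}
where one uses the elementary identity $\per\begin{bmatrix}B & D\\ C & 0\end{bmatrix}=\sum_{I,J}\per(C[:,J])\,\per(B[\bar I,\bar J])\,\per(D[I,:])$ (again a Laplace expansion, the zero block annihilating the unwanted terms). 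Dually, $\per(W+N)=\sum_j\sum_{S,R}\per(N[S,R])\,\per(W[\bar S,\bar R])$ for any $N$ of the size of $W$. Hence, once we prove
\begin{equation*}
\per\!\begin{bmatrix}B & Y'\\ (X')^{\top} & 0\end{bmatrix}\ \le\ \per(B)\cdot\per\!\big((X')^{\top}B^{*}Y'\big)\qquad\text{for all }X',Y'\ge 0,
\end{equation*}
we multiply by the weights $\per(W[\bar S,\bar R])\ge 0$, sum, and recognize $\sum_j\sum_{S,R}\per(W[\bar S,\bar R])\,\per\big((X^{\top}B^{*}Y)[S,R]\big)=\per(W+X^{\top}B^{*}Y)$, obtaining the theorem. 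So it suffices to treat $W=0$.

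\textbf{Step 2: the case $W=0$.} Let $k$ be the number of columns of $Y'$ (equivalently, rows of $(X')^{\top}$); if $k>d$ the left side is $0$, so assume $k\le d$. Set $P:=\per(B)\,B^{*}$, so $P_{a,b}=\per(B_{b,a})$ is the (non-negative, polynomial) \emph{permanental adjugate} of $B$, and $\per((X')^{\top}B^{*}Y')=\per(B)^{-k}\per((X')^{\top}PY')$. The identity above gives $\per\begin{bmatrix}B & Y'\\ (X')^{\top} & 0\end{bmatrix}=\sum_{I,J\in\binom{[d]}{k}}\per((X')^{\top}[:,J])\,\per(B[\bar I,\bar J])\,\per(Y'[I,:])$. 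I now invoke two facts. \emph{(a) Permanental Cauchy--Binet.} For $E\ge 0$ ($k\times N$) and $F\ge 0$ ($N\times k$), writing $\per(EF)=\sum_{\mathbf t\in[N]^{k}}\big(\prod_\ell E_{\ell,t_\ell}\big)\per(F[\mathbf t,:])$ and discarding the (non-negative) repeated-index terms gives $\per(EF)\ge\sum_{\mathcal K\in\binom{[N]}{k}}\per(E[:,\mathcal K])\,\per(F[\mathcal K,:])$; applying this to $(X')^{\top}\!\cdot\!(PY')$ and then to $P[J,:]\!\cdot\!Y'$ yields
\begin{equation*}
\per\!\big((X')^{\top}PY'\big)\ \ge\ \sum_{I,J\in\binom{[d]}{k}}\per\!\big((X')^{\top}[:,J]\big)\cdot\per\!\big(P[J,I]\big)\cdot\per\!\big(Y'[I,:]\big).
\end{equation*}
\emph{(b) Permanental Jacobi inequality.} For all $I,J\in\binom{[d]}{j}$,
\begin{equation*}
\per\!\big(P[J,I]\big)\ \ge\ \per(B)^{\,j-1}\cdot\per\!\big(B[\bar I,\bar J]\big).
\end{equation*}
Substituting (b) into (a) and dividing by $\per(B)^{k}>0$ gives exactly $\per((X')^{\top}B^{*}Y')\ge\per(B)^{-1}\per\begin{bmatrix}B & Y'\\ (X')^{\top} & 0\end{bmatrix}$, which completes Step 2.

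\textbf{Step 3: proving (b), and the main obstacle.} In the determinant setting (a) and (b) are exact identities and one recovers Schur's formula verbatim; the whole point is that for permanents of non-negative matrices both survive as inequalities, and (b) is the substantive one. I would prove (b) by induction on the minor size $j$. The base case $j=1$ is the cofactor expansion $\per(B_{b,a})=\per\!\big(B[\,\overline{\{b\}},\overline{\{a\}}\,]\big)$, an equality. For the inductive step, fix $a_0\in J$ and expand $\per(P[J,I])=\sum_{b_0\in I}\per(B_{b_0,a_0})\,\per\!\big(P[J\setminus\{a_0\},\,I\setminus\{b_0\}]\big)$ along the row indexed by $a_0$; applying the inductive hypothesis to the $(j-1)\times(j-1)$ minors reduces (b) to the auxiliary inequality
\begin{equation*}
\sum_{b_0\in I}\per\!\big(B_{b_0,a_0}\big)\cdot\per\!\big(B[\bar I\cup\{b_0\},\ \bar J\cup\{a_0\}]\big)\ \ge\ \per(B)\cdot\per\!\big(B[\bar I,\bar J]\big).
\end{equation*}
This last inequality I would prove combinatorially: both sides are polynomials in the entries $B_{r,c}$ with non-negative coefficients, so it suffices to dominate coefficient-by-coefficient; identifying a monomial with a bipartite multigraph $H$ (rows versus columns) whose degree sequence is forced by $\bar I,\bar J,a_0$, the right side counts the decompositions of $H$ into a perfect matching together with a perfect matching of $(\bar I\text{ rows})\leftrightarrow(\bar J\text{ columns})$, while the left side (summed over $b_0$) counts its decompositions into a near-perfect matching of $([d]\setminus\{b_0\})\leftrightarrow([d]\setminus\{a_0\})$ together with a perfect matching of $(\bar I\cup\{b_0\})\leftrightarrow(\bar J\cup\{a_0\})$; one then produces, for each $H$, an injection from the former family into the latter. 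The main obstacle is precisely this combinatorial domination (equivalently, inequality (b)): there is no cancellation to lean on, the domination is genuinely \emph{not} term-by-term in $b_0$, and the slack that upgrades the determinantal identity to a permanental inequality must be tracked carefully through the multigraph bookkeeping. Everything else --- the two Laplace expansions of Step 1 and the Cauchy--Binet step (a) --- is a routine consequence of non-negativity and monotonicity of $\per$.
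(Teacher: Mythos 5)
Your high-level architecture is genuinely different from the paper's, and the parts you carry out (Steps 1 and 2) are correct: the double Laplace expansion that reduces to $W=0$ is valid, the permanental Cauchy--Binet lower bound is a correct consequence of nonnegativity, and given your inequality~(b) the chain of substitutions in Step~2 does yield the theorem. The paper instead proceeds via a rank-1 permanental Schur identity (Observation~3.9) and a ``row-uncrossing'' inequality (Lemma~3.10), running an induction on the number of appended columns $k$; your inequality~(b) is what the paper calls Theorem~3.7, which the paper obtains \emph{as a corollary} of the main theorem rather than as a lemma on the way to it. So you have reversed the logical dependency, which is a legitimate and interesting restructuring.

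However, there is a genuine gap, and you essentially flag it yourself. Your proof of (b) by induction on $j$ reduces to the auxiliary inequality
\begin{equation*}
\sum_{b_0\in I}\per\!\big(B_{b_0,a_0}\big)\cdot\per\!\big(B[\bar I\cup\{b_0\},\ \bar J\cup\{a_0\}]\big)\ \ge\ \per(B)\cdot\per\!\big(B[\bar I,\bar J]\big),
\end{equation*}
and this is, after a change of indexing (set $I=\operatorname{img}_g$, $J=\operatorname{img}_f$, $a_0=f(i^*)$, $b_0=g(t)$), \emph{exactly} the paper's inequality \eqref{eqn: smallerform-row-uncrossing-perm}, which is the technical heart of the Row-Uncrossing Lemma~\ref{lem:row-uncrossing-perm}. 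In other words, both proofs funnel into the same core inequality; the paper proves it by recognizing it as a smaller instance of the full row-uncrossing lemma (with $d$ replaced by $d-k$) and running a nested double induction (on $d$ for $W=0$, then on $k$ to lift to general $W$ via the multilinearity-and-derivatives argument). You instead sketch a coefficient-by-coefficient multigraph injection, observe correctly that the domination is ``not term-by-term in $b_0$,'' and leave it as ``the main obstacle.'' Since that obstacle is precisely the nontrivial content of the theorem, the proposal as written is not a complete proof. To close it you would either need to actually construct the multigraph injection (and it is not clear to me that a clean injection exists for each fixed $H$, since the LHS and RHS decompositions of a multigraph are structured quite differently), or you could import the paper's nested-induction proof of \eqref{eqn: smallerform-row-uncrossing-perm} at this point, which would collapse your route into a variant of the paper's.
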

    See Section \ref{sec:perm-inv} and Section \ref{sec:schur-permanent}  for more details regarding the permanental inverse and Theorem \ref{thm:permanent-inequality}. 

\paragraph{A Constructive Algorithmic Upper Bound.} We introduce an iterative procedure, called the \emph{Permanent Process}, inspired by Gaussian elimination. This algorithm yields a provable upper bound on the permanent of any non-negative or PSD matrix, while being computationally efficient, requiring only $O(n^3)$ operations for an $n\times n$ matrix (Algorithm \ref{alg:perm-process}).
\begin{algorithm}[H]
\caption{The Permanent Process}\label{alg:perm-process}
\begin{algorithmic}
\State \textbf{Input.} $A \in \mathbb{R}^{n \times n}$, $A$ non-negative or PSD

\State $A^{(1)} \gets A$

\For{$t = 1 \text{ to } n-1$}
    \For{$i = t+1 \text{ to } n$}
        \For{$j = t+1 \text{ to } n $}
            \State $a_{i,j}^{(t+1)} \gets a_{i,j}^{(t)} + \dfrac{a_{i,t}^{(t)} \cdot a_{t,j}^{(t)}}{a_{t,t}^{(t)}}$
        \EndFor
    \EndFor
\EndFor
\State \textbf{Return} $A^{(n)}$
\end{algorithmic}
\end{algorithm}
Observe that the $i,j$-th entry of $A$ remains unchanged after step $\min(i,j)-1$ of the outer loop. In other words, for $t \geq \min(i,j)$, $A^{(t)}_{i,j} = A^{(t-1)}_{i,j}$. 

Our main technical result shows that the entries of the matrix returned by Algorithm \ref{alg:perm-process} can be used to upper bound the permanent of $A$ as follows.

    \begin{restatable}[Permanent Process]{theorem}{permprocess} \label{thm:perm-process-invar}
    Let $A$ be a real nonnegative matrix or a real PSD matrix.
    The permanent of $A$ is upper bounded by the product of the diagonal entries of $A^{(n)}$:
	\begin{align*}
		\per(A) \leq \prod_{1\leq i\leq n}a_{i,i}^{(i)} = \prod_{1\leq i\leq n}a_{i,i}^{(n)}.
	\end{align*}
    \end{restatable}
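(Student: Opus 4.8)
The plan is to prove the bound by induction on the dimension $n$, at each step peeling off the first coordinate and applying Theorem~\ref{thm:permanent-inequality} with block size $d=1$. For $n=1$ the outer loop of Algorithm~\ref{alg:perm-process} is empty, so $A^{(1)}=A$ and $\per(A)=a_{1,1}=a_{1,1}^{(1)}$, and the claim holds with equality. For $n\ge 2$, write $A$ in the block form of Theorem~\ref{thm:permanent-inequality} with $d=1$,
\begin{align*}
A=\begin{bmatrix} a_{1,1} & \by^{\top}\\ \bx & W\end{bmatrix},\qquad \bx,\by\in\R^{n-1},\quad W\in\R^{(n-1)\times(n-1)},
\end{align*}
so that $B=[a_{1,1}]$, $Y=\by^{\top}$ and $X^{\top}=\bx$. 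Using the convention $\per(\text{empty matrix})=1$, the permanental inverse is $B^{*}=[\,1/a_{1,1}\,]$, so $X^{\top}B^{*}Y=\tfrac{1}{a_{1,1}}\bx\by^{\top}$ is the rank-one matrix with $(i,j)$ entry $a_{i+1,1}a_{1,j+1}/a_{1,1}$. Comparing with the update rule of Algorithm~\ref{alg:perm-process} at $t=1$, we see that $W+X^{\top}B^{*}Y$ is exactly the trailing $(n-1)\times(n-1)$ block of $A^{(2)}$; call it $M$. Theorem~\ref{thm:permanent-inequality} then gives $\per(A)\le a_{1,1}\cdot\per(M)$.

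Two observations make the induction work. First, $M$ stays in the same matrix class as $A$. If $A$ is non-negative, each entry of $M$ is a sum of non-negative terms, so $M$ is non-negative. If $A$ is PSD, then $\by=\bx$ and $M=W+\tfrac{1}{a_{1,1}}\bx\bx^{\top}$, where $W$ is a principal submatrix of a PSD matrix, hence PSD, and $\tfrac{1}{a_{1,1}}\bx\bx^{\top}\succeq 0$; thus $M\succeq 0$. In either case the relevant form of Theorem~\ref{thm:permanent-inequality} applies to $M$. Throughout I assume the pivots $a^{(t)}_{t,t}$ are positive, as is needed for the process to be well defined: for non-negative $A$ this holds whenever $A$ has positive diagonal, since each update only increases the diagonal entries; for PSD $A$, if some pivot vanishes then that stage's trailing block has a zero row, so its permanent is $0$, whence $\per(A)=0$ by the steps already carried out, while the right-hand side also contains the factor $0$.

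Second, running the Permanent Process on $M$ reproduces the trailing blocks of the iterates of $A$: a short induction on the step index $s$ shows that for $1\le s\le n-1$ the matrix $M^{(s)}$ equals the trailing $(n-1)\times(n-1)$ block of $A^{(s+1)}$, because the update for $M$ at step $s$ is exactly the update for $A$ at step $s+1$ restricted to indices $\ge s+1$. In particular $M^{(s)}_{s,s}=a^{(s+1)}_{s+1,s+1}$. Applying the inductive hypothesis to the $(n-1)\times(n-1)$ matrix $M$ now gives
\begin{align*}
\per(M)\ \le\ \prod_{s=1}^{n-1}M^{(s)}_{s,s}\ =\ \prod_{i=2}^{n}a^{(i)}_{i,i},
\end{align*}
and combining with $\per(A)\le a^{(1)}_{1,1}\cdot\per(M)$ yields $\per(A)\le\prod_{i=1}^{n}a^{(i)}_{i,i}$. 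The remaining equality $\prod_i a^{(i)}_{i,i}=\prod_i a^{(n)}_{i,i}$ is immediate from the stabilization remark after Algorithm~\ref{alg:perm-process}: $a^{(i)}_{i,i}=a^{(i+1)}_{i,i}=\cdots=a^{(n)}_{i,i}$ for every $i$.

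I expect the first step to require the most care: identifying a single iteration of Algorithm~\ref{alg:perm-process} with the $d=1$ instance of Theorem~\ref{thm:permanent-inequality}, including the index bookkeeping and the convention $\per(\text{empty})=1$, and---for the PSD case---checking that the hypothesis of Theorem~\ref{thm:permanent-inequality} genuinely survives each elimination step, which is precisely the observation $W+\tfrac{1}{a_{1,1}}\bx\bx^{\top}\succeq 0$. The rest is routine induction and index-chasing.
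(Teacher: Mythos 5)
Your argument for the non-negative case is essentially the paper's own proof: peeling off the pivot and applying the $d=1$ case of Theorem~\ref{thm:permanent-inequality} (which is exactly Lemma~\ref{lem:perm-ineq-2}) to the trailing block of $A^{(2)}$, checking that non-negativity is preserved, and telescoping. That part is correct, including the identification of $W+X^{\top}B^{*}Y$ with the trailing block of $A^{(2)}$ and the stabilization of the diagonal entries.

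The gap is in the PSD case. Theorem~\ref{thm:permanent-inequality} is stated and proved only for entrywise non-negative matrices; its proof (via the row-uncrossing Lemma~\ref{lem:row-uncrossing-perm}) relies throughout on non-negativity of the entries and of the permanental minors, and a PSD matrix can have negative off-diagonal entries. So when you write that for PSD $A$ ``the relevant form of Theorem~\ref{thm:permanent-inequality} applies,'' you are invoking a statement that does not exist in the paper and that you have not proved: the inequality $\per(A)\le a_{1,1}\cdot\per\bigl(W+\tfrac{1}{a_{1,1}}\bx\bx^{\top}\bigr)$ for PSD $A$ is genuinely nontrivial. The paper supplies it separately as Lemma~\ref{lem:psd-ineq}, proved via the Marcus representation $\per(A)=\frac{1}{n!}\bigl\|\sum_{\sigma}v_{\sigma(1)}\otimes\cdots\otimes v_{\sigma(n)}\bigr\|^{2}$ and a coefficient-by-coefficient argument showing that all the $\alpha_{\ell}$ in the expansion of $\per(a\cdot B+xx^{\top})$ are non-negative. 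Your observation that the trailing block stays PSD is correct and needed, but without a proof of the one-step inequality for PSD pivoting, the PSD half of the theorem is not established. (Your handling of vanishing pivots is a minor side point; the substantive missing ingredient is the PSD analogue of the $d=1$ Schur inequality.)
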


See Section \ref{sec:perm-process} for further details regarding the permanent process and its properties.
\begin{remark}
If, in the update rule $a_{i,j}^{(t+1)} \leftarrow a_{i,j}^{(t)} + \dfrac{a_{i,t}^{(t)} \cdot a_{t,j}^{(t)}}{a_{t,t}^{(t)}}$, we replace the ``$+$'' with a ``$-$'', then the determinant of $A$ is exactly equal to the product of the diagonal entries of $A^{(n)}$ (see Corollary \ref{cor:Gauss_elim-invar}).
\end{remark}
\begin{remark}
It is not immediately clear whether the representation size of the entries of $A^{(t)}$ remains polynomially bounded throughout the process. We establish that it does for non-negative matrix in Theorem \ref{thm:boundedness} in Section \ref{sec:boundedness}.
\end{remark}

\paragraph{Provable Guarantees for Structured Matrices:} We show that for matrices exhibiting a notion of approximate diagonal dominance, a structure common in numerical linear algebra and network models, our upper bound yields strong theoretical guarantees. See Section \ref{sec:theoretical-ub} for more details.
\begin{restatable}{theorem}{applicationbound}\label{thm:application}
Let \( A \in \RR^{n \times n}_{\geq 0} \) be a non-negative matrix satisfying
\begin{align*}
	\frac{(1+\varepsilon)^2}{\varepsilon} \sum_{s=1}^{\min(i,j)} \frac{a_{i,s} a_{s,j}}{a_{s,s}} \le a_{i,j}
\end{align*}
for some \( \varepsilon > 0 \). Then
\begin{equation*}
    \per(A) \le (1+\varepsilon)^n \cdot \prod_{i=1}^n a_{i,i}.
\end{equation*}
\end{restatable}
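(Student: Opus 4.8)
The plan is to feed the hypothesis straight into the Permanent Process and argue that no diagonal entry ever inflates by more than a factor $(1+\eps)$. Concretely, \Cref{thm:perm-process-invar} already gives $\per(A) \le \prod_{i=1}^{n} a_{i,i}^{(i)}$, so it suffices to prove the single-entry estimate $a_{i,i}^{(i)} \le (1+\eps)\, a_{i,i}$ for every $i$; multiplying the $n$ resulting inequalities yields $\per(A) \le (1+\eps)^n \prod_{i} a_{i,i}$. (Note that for the ratios $a_{i,s}a_{s,j}/a_{s,s}$ in the hypothesis to be meaningful, every diagonal entry must be nonzero, so the divisions occurring in the process are legitimate.)

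To obtain the single-entry estimate I would prove the stronger claim that \emph{every} entry stays within a $(1+\eps)$ factor of its initial value, by induction on the ``layer'' $m := \min(i,j)$ at which the $(i,j)$ entry freezes: namely $a_{i,j}^{(m)} \le (1+\eps)\, a_{i,j}$ with $m=\min(i,j)$. Since $A$ is non-negative and each update adds the non-negative quantity $a_{i,t}^{(t)} a_{t,j}^{(t)}/a_{t,t}^{(t)}$, all iterates stay non-negative and the map $t \mapsto a_{i,j}^{(t)}$ is non-decreasing until it freezes; in particular $a_{i,j}^{(t)} \le a_{i,j}^{(m)}$ for all $t \le m$, so the claim controls the whole trajectory. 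The base case $m=1$ is immediate. For the inductive step, telescoping the update rule gives
\[
  a_{i,j}^{(m)} \;=\; a_{i,j} \;+\; \sum_{t=1}^{m-1} \frac{a_{i,t}^{(t)}\, a_{t,j}^{(t)}}{a_{t,t}^{(t)}}.
\]

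Now fix $t \in \{1,\dots,m-1\}$. Because $t < m \le i$ and $t < m \le j$, the entries $(i,t)$, $(t,j)$, and $(t,t)$ have all frozen by layer $t < m$, so the induction hypothesis gives $a_{i,t}^{(t)} \le (1+\eps)\, a_{i,t}$ and $a_{t,j}^{(t)} \le (1+\eps)\, a_{t,j}$, while plain monotonicity gives $a_{t,t}^{(t)} \ge a_{t,t}$. Hence
\[
  \sum_{t=1}^{m-1} \frac{a_{i,t}^{(t)}\, a_{t,j}^{(t)}}{a_{t,t}^{(t)}} \;\le\; (1+\eps)^2 \sum_{t=1}^{m-1} \frac{a_{i,t}\, a_{t,j}}{a_{t,t}} \;\le\; (1+\eps)^2 \sum_{s=1}^{m} \frac{a_{i,s}\, a_{s,j}}{a_{s,s}} \;\le\; \eps\, a_{i,j},
\]
where the middle inequality merely appends the non-negative $s=m$ term and the last is the hypothesis. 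Therefore $a_{i,j}^{(m)} \le (1+\eps)\, a_{i,j}$, completing the induction; taking $i=j$ recovers the needed bound $a_{i,i}^{(i)} \le (1+\eps)\,a_{i,i}$.

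The telescoping identity and the non-negativity/monotonicity of the iterates are routine (the freezing statement is already recorded after Algorithm~\ref{alg:perm-process}). The one place demanding care is the bookkeeping of the induction: one should induct on the freezing layer $\min(i,j)$ rather than on the outer-loop counter $t$, and keep straight that inside the term $a_{i,t}^{(t)} a_{t,j}^{(t)}/a_{t,t}^{(t)}$ the two numerator factors need an \emph{upper} bound — furnished by the inductive hypothesis, precisely because those entries have already frozen — whereas the denominator needs a \emph{lower} bound, furnished by monotonicity alone. Once the induction is set up along these lines, no step is genuinely difficult.
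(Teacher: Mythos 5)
Your proof is correct and takes essentially the same route as the paper's. The paper proves $b_{i,j}\le(1+\eps)a_{i,j}$ by induction on $\min(i,j)$ for the auxiliary matrix $B$ defined by the equality version of \eqref{eqn:perm-upperbound-recursion}, then invokes Corollary~\ref{cor: permanent-diag-prod-ub}; you instead argue directly on the process iterates, proving $a_{i,j}^{(\min(i,j))}\le(1+\eps)a_{i,j}$ by the identical induction and then invoking Theorem~\ref{thm:perm-process-invar} — that is, you bypass the scaffolding of $B$ and Theorem~\ref{thm:perm-upperbound-recursion} but the inductive core (upper-bound the two numerator factors by the inductive hypothesis, lower-bound the pivot $a_{t,t}^{(t)}\ge a_{t,t}$ by monotonicity) is exactly the paper's. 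You are also a bit more careful with the index range: the telescoping sum runs only to $m-1$, and you explicitly absorb the extra $s=m$ term using non-negativity so that the hypothesis (which sums to $\min(i,j)$) applies; the paper silently glosses this.
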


Taken together, these results offer a new algorithmic perspective on permanents and expand the analytical toolbox for bounding them, with potential applications in combinatorics, statistical physics, and quantum computation.

\paragraph{Organization of the paper.} We provide preliminaries and results that we will use in Section \ref{sec:preliminaries}. In Section \ref{sec:perm-schur}, we prove Theorem \ref{thm:permanent-inequality}. In Section \ref{sec:perm-process}, we prove Theorem \ref{thm:perm_process-invar} and discuss some applications of our framework. In Section \ref{sec:boundedness}, we show that the representation size of the matrix entries remains polynomially bounded with respect to the representation size of the input during Algorithm \ref{alg:perm-process}.

\section{Preliminaries} \label{sec:preliminaries}
\subsection{Notation}
To discuss matrices, we will use the following standard notation. Let $A$ be an $n \times n$ matrix with real-valued entries, denoted $A \in \mathbb{R}^{n \times n}$.

\begin{itemize}
	\item \textbf{Matrix Entries:} $(A)_{i,j}$ or $a_{i,j}$ (the corresponding lowercase letter) refers to the entry in the $i^{th}$ row and $j^{th}$ column of $A$.
	
	\item \textbf{Submatrices by Selection:} For index sets $S, T \subseteq \{1, \dots, n\}$, we denote by $A(S, T)$ the submatrix formed by taking the rows indexed by $S$ and columns indexed by $T$. By convention, $\det(A(\emptyset, \emptyset)) = \per(A(\emptyset, \emptyset)) = 1$. We will sometimes use $\det_A(S, T) := \det(A(S, T))$. 
	
\item \textbf{Submatrices by Deletion:} We use several notations for submatrices formed by deleting rows or columns.
\begin{itemize}
	\item The matrix $A_{-i,.}$ denotes the matrix obtained by deleting the $i^{th}$ row, and $A_{.,-j}$ denotes the matrix obtained by deleting the $j^{th}$ column.
	\item $A_{i,j}$ denotes the matrix obtained by deleting both the $i^{th}$ row and the $j^{th}$ column.
	\item For deleting multiple rows and columns, the notation $A(-S, -T)$ is shorthand for the submatrix formed by deleting the rows in set $S$ and columns in set $T$.
\end{itemize}

	\item \textbf{Entrywise Inequality:} The expression $A \ge B$ means that every entry in $A$ is greater than or equal to the corresponding entry in $B$ (i.e., $a_{i,j} \ge b_{i,j}$ for all $i,j$).
	
	\item \textbf{Functions:} For any function $f:[k]\rightarrow [d]$, let $\text{img}_{f,S}:= \{f(j): j\in S\}$ be the image of $S$ according to $f$; we simply write $\text{img}_{f}$ when $S=[k]$.
\end{itemize}

\subsection{Gaussian Elimination}

\begin{remark}[A Note on Convention]
	The method described here is a specific variant of Gaussian elimination designed to produce a \textbf{lower triangular} matrix. This is a non-standard convention, as the standard algorithm is typically defined to produce an \textit{upper triangular} matrix.
\end{remark}

Let $A^{(t)}$ denote the state of the matrix at the beginning of step $t$, with the initial matrix being $A^{(1)} = A$. The goal of Gaussian Elimination is to iteratively transform $A$ into a lower triangular matrix. The state of the matrix entries after the end of step $t$ for some $1\leq t\leq n-1$ is given by 
\begin{align}
	a_{i,j}^{(t+1)}=\begin{cases} a_{i,j}^{(t)}-\frac{a_{i,t}^{(t)}a_{t,j}^{(t)}}{a_{t,t}^{(t)}}  , &\text{for $j\geq t+1$}\\
		a_{i,j}^{(t)}, &\text{otherwise.}
	\end{cases}
\end{align}
In simpler terms, at each step $t$, this process uses the pivot element $a_{t,t}$ to create zeros in all entries to its right, within the same row $t$.
\begin{theorem}[Gaussian Elimination Invariant]
	\label{thm:Gaussian-determinant}
	The entries of the matrix $A^{(t)}$ are ratios of determinants of certain sub-matrices of $A$: 
	\begin{align*}
		a_{i,j}^{(t)} =\frac{\det_A\left([r-1]+\{i\}, [r-1]+\{j\}\right)}{\det_A([r-1],[r-1])}, \quad r=\min(j,t).
	\end{align*}
\end{theorem}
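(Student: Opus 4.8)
The plan is to prove the identity by induction on $t$, reducing the inductive step to a single classical determinantal identity of Desnanot--Jacobi (Dodgson condensation) type. Throughout, the formula is understood for indices with $i \ge \min(j,t)$, so that the submatrices $A([r-1]+\{i\},[r-1]+\{j\})$ and $A([r-1],[r-1])$ are square of sizes $r$ and $r-1$; the remaining entries of $A^{(t)}$, namely those with $i<j$ and $i<t$, are identically $0$ — row $i$ is cleared to the right of column $i$ at step $i$ and stays so thereafter — so there is nothing to prove for them. We also assume, as is standard for elimination without pivoting, that the leading principal minors $\det_A([k],[k])$ are nonzero for $k<n$; by the formula this is equivalent to the pivots $a^{(k)}_{k,k}=\det_A([k],[k])/\det_A([k-1],[k-1])$ being nonzero, which is exactly what makes the process well-defined. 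For the base case $t=1$ we have $A^{(1)}=A$ and $r=\min(j,1)=1$, so $[r-1]=\emptyset$; using $\det_A(\emptyset,\emptyset)=1$ and $\det_A(\{i\},\{j\})=a_{i,j}$, the right-hand side equals $a_{i,j}$, as required.

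For the inductive step, suppose the formula holds for $A^{(t)}$. If $j\le t$ the update leaves the entry unchanged, $a^{(t+1)}_{i,j}=a^{(t)}_{i,j}$, and $\min(j,t+1)=\min(j,t)$, so the claim for $A^{(t+1)}$ is word for word the inductive hypothesis. If $j\ge t+1$ then $\min(j,t+1)=t+1$; in this regime we need $i\ge t+1$, and then each of $a^{(t)}_{i,j},a^{(t)}_{i,t},a^{(t)}_{t,j},a^{(t)}_{t,t}$ is covered by the inductive hypothesis (each with $\min(\cdot,t)=t$). Substituting these expressions into the update rule $a^{(t+1)}_{i,j}=a^{(t)}_{i,j}-a^{(t)}_{i,t}a^{(t)}_{t,j}/a^{(t)}_{t,t}$ and placing everything over a common denominator gives
\begin{align*}
a^{(t+1)}_{i,j}=\frac{\det_A([t-1]+\{i\},[t-1]+\{j\})\,\det_A([t],[t])-\det_A([t-1]+\{i\},[t])\,\det_A([t],[t-1]+\{j\})}{\det_A([t-1],[t-1])\,\det_A([t],[t])}.
\end{align*}
So it suffices to show the numerator equals $\det_A([t-1],[t-1])\,\det_A([t]+\{i\},[t]+\{j\})$: cancelling $\det_A([t-1],[t-1])$ then yields $a^{(t+1)}_{i,j}=\det_A([t]+\{i\},[t]+\{j\})/\det_A([t],[t])$, which is the claim for $r=t+1$.

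This numerator identity is exactly Desnanot--Jacobi applied to the $(t+1)\times(t+1)$ matrix $M:=A([t]+\{i\},[t]+\{j\})$, whose rows and columns we list in increasing order (so that, since $i,j\ge t+1$, the index $t$ sits in position $t$ and $i$, resp.\ $j$, in position $t+1$). Deleting positions $t$ and $t+1$ from both the rows and the columns of $M$ leaves $A([t-1],[t-1])$; deleting position $t$ from each leaves $A([t-1]+\{i\},[t-1]+\{j\})$; deleting position $t+1$ from each leaves $A([t],[t])$; and deleting position $t$ from the rows and $t+1$ from the columns, resp.\ the other way, leaves $A([t-1]+\{i\},[t])$, resp.\ $A([t],[t-1]+\{j\})$. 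Because the deleted pair $\{t,t+1\}$ is the same for rows and columns, Desnanot--Jacobi gives the identity with precisely the $+,-$ pattern above — any sign from permuting the two deleted rows into corner positions is cancelled by the identical permutation of the columns — and the induction closes. I expect the only delicate point to be exactly this bookkeeping: matching each of the five minors to the correct deleted positions and confirming that no stray sign survives; everything else is routine. (Alternatively one could verify the claim non-inductively by writing $A^{(t)}=A\,M^{(t)}$ for the explicit unit upper triangular $M^{(t)}$ assembled from the elementary column operations and reading off entries via Cramer's rule, but the induction above is shorter and self-contained.)
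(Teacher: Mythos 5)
Your proof is correct and takes essentially the same approach as the paper: induction on $t$, with the inductive step reduced to a single determinantal identity. The identity you invoke as Desnanot--Jacobi (for the pair of positions $\{t,t+1\}$, which here are the last two rows/columns of $M$, so no repositioning and no sign issues actually arise) is precisely what the paper packages as Lemma~\ref{lem:schur-determinant} and proves via Schur's determinant formula (Theorem~\ref{thm:schur_formula}).
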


\begin{corollary}[Determinant Property]
	\label{cor:Gauss_elim-invar}
	Let $A^{(n)}$ be the final lower triangular matrix obtained after running the full elimination process on $A$. Then the product of its diagonal entries equals the determinant of $A$:
	\begin{equation*}
		\det(A) = \prod_{i=1}^{n} a_{i,i}^{(n)}.
	\end{equation*}
\end{corollary}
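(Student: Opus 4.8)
The plan is to obtain the Corollary directly from Theorem~\ref{thm:Gaussian-determinant} by specializing that identity to the diagonal entries of the terminal matrix $A^{(n)}$ and then telescoping a product of leading principal minors of $A$. No new ideas beyond Theorem~\ref{thm:Gaussian-determinant} are needed; the work is purely bookkeeping.

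Concretely, I would first invoke Theorem~\ref{thm:Gaussian-determinant} with $t = n$ and $j = i$. For $1 \le i \le n$ the auxiliary index is $r = \min(j,t) = \min(i,n) = i$, so the theorem yields
\[
a_{i,i}^{(n)} = \frac{\det_A([i-1] + \{i\},\, [i-1] + \{i\})}{\det_A([i-1],[i-1])} = \frac{\det_A([i],[i])}{\det_A([i-1],[i-1])},
\]
where I used $[i-1]+\{i\} = [i]$. The hypothesis that the full elimination process runs to completion is exactly the statement that every pivot $a_{t,t}^{(t)}$ is nonzero, i.e.\ that every leading principal minor $\det_A([t],[t])$ for $1 \le t \le n-1$ is nonzero, so all the denominators above make sense and are nonzero. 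Multiplying these equalities over $i = 1, \dots, n$, the product telescopes, since the numerator of the $i$-th factor is the denominator of the $(i{+}1)$-st, leaving
\[
\prod_{i=1}^{n} a_{i,i}^{(n)} = \frac{\det_A([n],[n])}{\det_A([0],[0])} = \frac{\det(A)}{\det(A(\emptyset,\emptyset))} = \det(A),
\]
using the stated convention $\det(A(\emptyset,\emptyset)) = 1$ together with $[0] = \emptyset$. This completes the proof.

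Given Theorem~\ref{thm:Gaussian-determinant}, there is essentially no obstacle: the only points needing a moment's attention are the set identity $[i-1]+\{i\}=[i]$ and the implicit nonvanishing of the leading principal minors along the process. (The genuine work is hidden inside Theorem~\ref{thm:Gaussian-determinant} itself, which one proves by induction on the outer-loop index, tracking how each elementary operation transforms the relevant minors.) If one preferred a proof not relying on Theorem~\ref{thm:Gaussian-determinant}, I would argue as follows: the update at outer step $t$ replaces, for each $j > t$, column $j$ by column $j$ minus $\frac{a_{t,j}^{(t)}}{a_{t,t}^{(t)}}$ times column $t$ (which is unchanged during step $t$); these are elementary column operations and hence preserve the determinant, so $\det(A^{(n)}) = \det(A^{(1)}) = \det(A)$. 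Moreover, taking $i = t$ in the update gives $a_{t,j}^{(t+1)} = 0$ for all $j > t$, and these zeros persist since every later step modifies only rows and columns with index exceeding $t$; hence $A^{(n)}$ is lower triangular and $\det(A^{(n)}) = \prod_{i=1}^{n} a_{i,i}^{(n)}$. Combining the two displays yields the claim. I would present the telescoping argument as the primary proof and mention this direct route as a remark.
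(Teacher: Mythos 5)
Your primary argument is exactly the paper's proof: apply Theorem~\ref{thm:Gaussian-determinant} at $t=n$, $j=i$ to get $a_{i,i}^{(n)}=\det_A([i],[i])/\det_A([i-1],[i-1])$, then telescope the product to $\det_A([n],[n])=\det(A)$. The extra remark sketching a direct proof via elementary column operations is a nice alternative but is not what the paper does.
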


\subsection{Schur's Formula}

The Schur complement is a fundamental tool for working with the determinants of block matrices.

\begin{theorem}[Schur's Determinant Formula] 
	\label{thm:schur_formula}
	Let $A \in \mathbb{R}^{n \times n}$ be a block matrix of the form
	\begin{equation*}
		A = \begin{bmatrix}
			B & Y \\
			X^{\top} & W
		\end{bmatrix},
	\end{equation*}
	where $B \in \mathbb{R}^{d \times d}$ is an invertible matrix. Then the determinant of $A$ is given by
	\begin{equation*}
		\det(A) = \det(B) \cdot \det(W - X^{\top} B^{-1} Y).
	\end{equation*}
	The matrix $S = W - X^{\top} B^{-1} Y$ is called the \textbf{Schur complement} of $B$ in $A$.
\end{theorem}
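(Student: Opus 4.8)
The plan is to exhibit an explicit block-triangular factorization of $A$ and then take determinants. Since $B$ is invertible, set $S := W - X^{\top}B^{-1}Y$ and verify by direct block multiplication that
\begin{equation*}
\begin{bmatrix} B & Y \\ X^{\top} & W \end{bmatrix}
= \begin{bmatrix} I_d & 0 \\ X^{\top}B^{-1} & I_{n-d} \end{bmatrix}
\begin{bmatrix} B & Y \\ 0 & S \end{bmatrix}.
\end{equation*}
The $(1,1)$ block of the product is $I_dB = B$, the $(1,2)$ block is $I_dY = Y$, the $(2,1)$ block is $X^{\top}B^{-1}B = X^{\top}$, and the $(2,2)$ block is $X^{\top}B^{-1}Y + S = W$, so the product agrees with $A$ block by block, hence entrywise.

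Next I would record the elementary fact that the determinant of a block-triangular matrix equals the product of the determinants of its diagonal blocks: for $M = \begin{bmatrix} P & Q \\ 0 & R \end{bmatrix}$ with $P$ and $R$ square, $\det(M) = \det(P)\det(R)$, and symmetrically for block-lower-triangular matrices. This follows from the Leibniz formula $\det(M) = \sum_{\sigma}\sign(\sigma)\prod_i m_{i,\sigma(i)}$: any permutation $\sigma$ that contributes a nonzero term cannot send an index of the $R$-block into the $P$-block (those entries vanish), so it must permute the $R$-block within itself, and hence the $P$-block within itself as well; grouping the surviving terms factors the sum as $\det(P)\det(R)$. (Alternatively, one can reach the same conclusion by running ordinary scalar Gaussian elimination inside each diagonal block.)

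Combining these, the first factor above is block-lower-triangular with diagonal blocks $I_d$ and $I_{n-d}$, so its determinant is $1$; the second factor is block-upper-triangular with diagonal blocks $B$ and $S$, so its determinant is $\det(B)\det(S)$. By multiplicativity of the determinant,
\begin{equation*}
\det(A) = 1 \cdot \det(B)\,\det(S) = \det(B)\cdot\det\!\left(W - X^{\top}B^{-1}Y\right),
\end{equation*}
which is the claimed identity.

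There is no genuine obstacle: the only two points needing a line of justification are the block factorization (a routine $2\times 2$ block multiplication) and the block-triangular determinant lemma (an immediate consequence of the Leibniz expansion). Invertibility of $B$ enters exactly once, to ensure $B^{-1}$, and therefore $S$ and the factorization, are well-defined.
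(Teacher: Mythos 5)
Your proof is correct: the block factorization is verified accurately, the block-triangular determinant lemma is justified via the Leibniz expansion, and multiplicativity of the determinant finishes the argument, with invertibility of $B$ used exactly where needed. Note that the paper states this result only as a classical preliminary (it is invoked later, e.g.\ in the proof of Lemma \ref{lem:schur-determinant}) and gives no proof of its own, so there is nothing to compare against; your argument is the standard textbook block-LU proof and is perfectly adequate.
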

\subsection{Permanent of PSD Matrices} \label{lem:psd-perm}
The following lemma from \cite{marcus1962inequalities} shows that the permanent of any PSD matrix is non-negative.
\begin{theorem}[\cite{marcus1962inequalities}]
    Let $A = V^\top V$ be an $n \times n$ PSD matrix and let $v_1, \ldots, v_n$ be the columns of $V$. Then the permanent of $A$ is given by
    \begin{equation*}
        \per(A) = \frac{1}{n!} \norm{\sum_{\sigma \in \mathcal{S}_n} v_{\sigma(1)} \otimes v_{\sigma(2)} \ldots \otimes v_{\sigma(n)}}^2.
    \end{equation*}
\end{theorem}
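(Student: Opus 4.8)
The plan is to verify the tensor-norm identity by expanding the squared norm on the right-hand side and matching it to the combinatorial definition of the permanent. First I would recall that for a PSD matrix $A = V^\top V$, the entries are inner products $a_{i,j} = \langle v_i, v_j\rangle$, and the permanent is $\per(A) = \sum_{\sigma \in \mathcal{S}_n} \prod_{i=1}^n \langle v_i, v_{\sigma(i)}\rangle$. The key object to analyze is the symmetrized tensor $w := \sum_{\sigma \in \mathcal{S}_n} v_{\sigma(1)} \otimes \cdots \otimes v_{\sigma(n)} \in (\RR^m)^{\otimes n}$, where $m$ is the number of rows of $V$. I would compute $\norm{w}^2 = \langle w, w\rangle$ using the fact that the inner product on the tensor product space factors as $\langle u_1 \otimes \cdots \otimes u_n,\, z_1 \otimes \cdots \otimes z_n\rangle = \prod_{k=1}^n \langle u_k, z_k\rangle$.

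The main calculation: expanding bilinearly,
\begin{align*}
\norm{w}^2 = \sum_{\sigma \in \mathcal{S}_n} \sum_{\tau \in \mathcal{S}_n} \prod_{k=1}^n \langle v_{\sigma(k)}, v_{\tau(k)}\rangle = \sum_{\sigma, \tau} \prod_{k=1}^n a_{\sigma(k), \tau(k)}.
\end{align*}
For each fixed $\sigma$, substituting $k \mapsto \sigma^{-1}(i)$ (i.e. reindexing the product by $i = \sigma(k)$) gives $\prod_{i=1}^n a_{i, \tau(\sigma^{-1}(i))} = \prod_{i=1}^n a_{i, \rho(i)}$ where $\rho := \tau \sigma^{-1}$ ranges over all of $\mathcal{S}_n$ as $\tau$ does. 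Hence the inner sum over $\tau$ equals $\sum_{\rho \in \mathcal{S}_n} \prod_{i=1}^n a_{i,\rho(i)} = \per(A)$, independent of $\sigma$. Summing over the $n!$ choices of $\sigma$ yields $\norm{w}^2 = n! \cdot \per(A)$, which is exactly the claimed identity after dividing by $n!$. As an immediate consequence, $\per(A) = \norm{w}^2 / n! \ge 0$, giving the non-negativity statement.

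The only genuinely delicate point is justifying the reindexing cleanly and making sure the permutation-relabeling argument is airtight; everything else is routine bilinear expansion. I would also note at the outset that the identity is independent of the choice of factorization $A = V^\top V$ (any two differ by an isometry, which preserves all the inner products involved), so the statement is well-posed. One should state whether $V$ is taken square or allowed to be $m \times n$ with $m \ge \operatorname{rank}(A)$ — the argument is identical in both cases since only the Gram inner products $\langle v_i, v_j\rangle$ enter. I expect no real obstacle here; the proof is short and the hardest part is purely bookkeeping with the symmetric group.
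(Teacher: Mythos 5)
Your proposal is correct: the bilinear expansion of $\norm{w}^2$ using $\langle u_1 \otimes \cdots \otimes u_n, z_1 \otimes \cdots \otimes z_n\rangle = \prod_k \langle u_k, z_k\rangle$, followed by the reindexing $\rho = \tau\sigma^{-1}$ to show each inner sum over $\tau$ equals $\per(A)$, is exactly the standard argument, and your bookkeeping with the symmetric group is sound. Note that the paper itself does not prove this statement at all --- it imports it from \cite{marcus1962inequalities} --- so your write-up simply supplies the (correct, standard) proof of the cited identity, and your remarks on independence of the factorization $A = V^\top V$ are a harmless but unnecessary addition.
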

\noindent
Here, $\otimes$ denotes the tensor product. If each $v_i \in \mathbb{R}^d$, then 
$v_{1} \otimes \cdots \otimes v_{n}$ is an element of the $n$-fold tensor product space 
$(\mathbb{R}^d)^{\otimes n} \cong \mathbb{R}^{d^n}$, i.e., a vector in a $d^n$-dimensional space.

\section{Generalizing Determinantal Concepts for the Permanent} \label{sec:perm-schur}

This section extends classical determinantal concepts, such as the matrix inverse and Schur's formula, to the setting of the matrix permanent.

\subsection{The Permanental Inverse}
\label{sec:perm-inv}
\subsubsection{Motivation from the Determinant}
To start, recall that one way to define the inverse of an invertible matrix $B$ is using Cramer's rule, where the $i,j$-th entry of the inverse $C = B^{-1}$ is given by:
\begin{equation*}
	c_{i,j} = \frac{\det(B_{j,i})}{\det(B)}.
\end{equation*}
Here, $B_{j,i}$ is the submatrix of $B$ formed by removing row $j$ and column $i$. This definition naturally gives us $B^{-1}B = BB^{-1} = I$. This provides a direct template for defining a similar concept for the permanent.

\subsubsection{Definition}
\begin{definition}[Permanental Inverse]
	For a non-negative matrix $B \in \mathbb{R}^{d \times d}_{\ge 0}$ with $\per(B) > 0$, the \textbf{permanental inverse}, denoted $B^*$, is an $n\times n$ matrix with entries:
	\begin{equation*}
		(B^*)_{i,j} = \frac{\per(B_{j,i})}{\per(B)}.
	\end{equation*}
\end{definition}

\subsubsection{Crucial Differences and Properties}
Unlike the determinantal inverse, multiplication by $B^*$ does not typically recover the identity matrix. Instead, it satisfies a matrix inequality.

\begin{claim}
	\label{claim:perm-inverse-basic}
	For a non-negative matrix $B$, we have $B^*B \ge I$ and $BB^* \ge I$. In general, $B^*B \neq BB^*$.  
\end{claim}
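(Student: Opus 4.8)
The plan is to read off both inequalities directly from the Laplace (cofactor) expansion of the permanent, and to settle the non-commutativity with one explicit $2\times 2$ example. The only ingredient beyond the definition of $B^*$ is the elementary identity that, for any $d\times d$ matrix $B$ and any fixed index $\ell$,
\[
\per(B) \;=\; \sum_{k=1}^d b_{\ell,k}\,\per(B_{\ell,k}) \;=\; \sum_{k=1}^d b_{k,\ell}\,\per(B_{k,\ell});
\]
that is, the permanent expands along row $\ell$ and along column $\ell$ exactly as the determinant does but without signs, which is immediate from $\per(B)=\sum_{\sigma\in\mathcal{S}_d}\prod_i b_{i,\sigma(i)}$ by grouping permutations according to the value of $\sigma(\ell)$ (respectively $\sigma^{-1}(\ell)$).

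First I would compute the entries of $B^*B$. By the definition of the permanental inverse,
\[
(B^*B)_{i,\ell} \;=\; \sum_{j=1}^d (B^*)_{i,j}\,b_{j,\ell} \;=\; \frac{1}{\per(B)}\sum_{j=1}^d b_{j,\ell}\,\per(B_{j,i}).
\]
When $\ell=i$ the sum on the right is precisely the column-$i$ expansion of $\per(B)$, so $(B^*B)_{i,i}=1$. When $\ell\neq i$, let $B^{i\leftarrow\ell}$ be the matrix obtained from $B$ by overwriting its $i$-th column with a copy of its $\ell$-th column; deleting column $i$ from $B^{i\leftarrow\ell}$ recovers $B_{j,i}$ (the modified column is the one removed), and the $(j,i)$ entry of $B^{i\leftarrow\ell}$ is $b_{j,\ell}$, so $\sum_j b_{j,\ell}\,\per(B_{j,i})$ is the column-$i$ expansion of $\per(B^{i\leftarrow\ell})$. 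Since $B\ge 0$ entrywise, so is $B^{i\leftarrow\ell}$, hence $\per(B^{i\leftarrow\ell})\ge 0$ and therefore $(B^*B)_{i,\ell}\ge 0$. Thus $B^*B$ has all diagonal entries equal to $1$ and all off-diagonal entries non-negative, i.e.\ $B^*B\ge I$ entrywise.

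The bound $BB^*\ge I$ would then follow from the mirror-image argument using row expansions: $(BB^*)_{i,\ell}=\frac{1}{\per(B)}\sum_j b_{i,j}\,\per(B_{\ell,j})$ equals $1$ when $\ell=i$ (the row-$i$ expansion of $\per(B)$) and equals $\per(B')/\per(B)\ge 0$ when $\ell\neq i$, where $B'$ is obtained from $B$ by overwriting row $\ell$ with a copy of row $i$. For the last assertion I would exhibit the example $B=\bmat{1 & 1\\ 1 & 2}$: here $\per(B)=3$ and $B^*=\tfrac13\bmat{2 & 1\\ 1 & 1}$, so $B^*B=\tfrac13\bmat{3 & 4\\ 2 & 3}$ whereas $BB^*=\tfrac13\bmat{3 & 2\\ 4 & 3}$, and these are unequal. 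I do not anticipate a genuine obstacle here: the whole proof amounts to matching each cofactor sum with the Laplace expansion of a permanent of a matrix having a repeated row or column, and the only care needed is the bookkeeping of which index gets deleted and which gets duplicated.
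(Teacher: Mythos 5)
Your proposal is correct and follows essentially the same route as the paper: identify the diagonal entries of $B^*B$ (resp.\ $BB^*$) with the column (resp.\ row) Laplace expansion of $\per(B)$ to get $1$, observe the off-diagonal entries are non-negative by non-negativity of $B$ and its sub-permanents, and exhibit a $2\times 2$ example for $B^*B \neq BB^*$. Your extra identification of the off-diagonal sums as permanents of a matrix with a duplicated column/row is a harmless refinement of the paper's simpler term-by-term non-negativity observation.
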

\begin{proof}
Recall that $A \ge B$ means that every entry in $A$ is greater than or equal to the corresponding entry in $B$ (i.e., $a_{i,j} \ge b_{i,j}$ for all $i,j$).
One can evaluate the diagonal entries of $B^*B$ as
	\begin{equation*}
		(B^*B)_{ii} = \frac{1}{\per(B)} \sum_{j=1}^{d} b_{ji} \cdot \per(B_{j,i})
	\end{equation*}
	The sum $\sum_{j=1}^{d} b_{ji} \cdot \per(B_{j,i})$ is the Laplace expansion of the permanent of $B$ along column $i$, which equals $\per(B)$. Thus, the diagonal entries are $(B^*B)_{ii} = \frac{\per(B)}{\per(B)} = 1$.
	
	For the off-diagonal entries (where $k \neq i$), since $B$ is a non-negative matrix, all its permanents and entries are non-negative. Thus, every term in the sum for $(B^*B)_{ik}$ is non-negative, meaning $(B^*B)_{ik} \ge 0$.
	
	Combining these two points, the diagonal entries of $B^*B$ are 1 and the off-diagonal entries are non-negative. By definition, this means $B^*B \ge I$. The proof for $BB^* \ge I$ follows a similar argument.
\end{proof}

\paragraph{Example.}
Let $B = \begin{pmatrix} 1 & 2 \\ 3 & 4 \end{pmatrix}$. The permanent is $\per(B) = 1 \cdot 4 + 2 \cdot 3 = 10$.
The permanental inverse $B^*$ is:
\begin{equation*}
	B^* = \frac{1}{10} \begin{pmatrix} \per(B_{1,1}) & \per(B_{2,1}) \\ \per(B_{1,2}) & \per(B_{2,2}) \end{pmatrix} = \frac{1}{10} \begin{pmatrix} 4 & 2 \\ 3 & 1 \end{pmatrix}.
\end{equation*}
Now, let's compute the products:
\begin{align*}
	B^*B &= \frac{1}{10} \begin{pmatrix} 4 & 2 \\ 3 & 1 \end{pmatrix} \begin{pmatrix} 1 & 2 \\ 3 & 4 \end{pmatrix} = \frac{1}{10} \begin{pmatrix} 10 & 16 \\ 6 & 10 \end{pmatrix} = \begin{pmatrix} 1 & 1.6 \\ 0.6 & 1 \end{pmatrix}, \quad \text{and} \\
	BB^* &= \frac{1}{10} \begin{pmatrix} 1 & 2 \\ 3 & 4 \end{pmatrix} \begin{pmatrix} 4 & 2 \\ 3 & 1 \end{pmatrix} = \frac{1}{10} \begin{pmatrix} 10 & 4 \\ 24 & 10 \end{pmatrix} = \begin{pmatrix} 1 & 0.4 \\ 2.4 & 1 \end{pmatrix}.
\end{align*}
As demonstrated, both products \( B^*B \) and \( BB^* \) satisfy the entrywise inequality with respect to the identity matrix \( I \), but they are not necessarily equal: \( B^*B \neq BB^* \) in general. A straightforward consequence of Claim \ref{claim:perm-inverse-basic} is that
\begin{equation*}
    \per(B^*B) \geq 1 \quad \text{and} \quad \per(BB^*) \geq 1,
\end{equation*}
since both \( B^*B \) and \( BB^* \) dominate \( I \) entrywise. However, a sharper inequality is established in Theorem \ref{thm:inverse-property-perm} (see next section), which implies that
\begin{align}
	\label{eqn:perm-perm-inv-prod}
\per(B^*) \cdot \per(B) \geq 1.
\end{align}
This is strictly stronger than the two previous inequalities, because
\begin{equation*}
    \per(B^*B), \, \per(BB^*) \; \geq \; \per(B^*) \cdot \per(B).
\end{equation*}
The final inequality follows from the fact that the permanent is super-multiplicative on non-negative square matrices; that is, for any such matrices \( C, D \), we have \( \per(CD) \geq \per(C) \cdot \per(D) \).
\subsection{An Inequality for the Permanental Inverse}
The following inequality describes a relation between the ``permanental minors'' of a non-negative matrix and its permanental inverse: 
\begin{theorem}
		\label{thm:inverse-property-perm}
	If $B^*$ is the permanental inverse of a non-negative matrix $B$, then for any index sets $S$ and $T$, the following inequality holds:
	\begin{equation*}
		\frac{\per(B(-S,-T))}{\per(B)} \le \per(B^*(T,S)).
	\end{equation*}
	For context, the equivalent identity for determinants is an equality:
	\begin{equation*}
		\frac{\det(B(-S,-T))}{\det(B)} = \det(B^{-1}(T,S))
	\end{equation*}
\end{theorem}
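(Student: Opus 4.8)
The plan is to prove the inequality by induction on $k := |S| = |T|$ (the statement is only meaningful when $|S|=|T|$, since otherwise $B(-S,-T)$ is not square), reducing the inductive step to a single ``Sylvester-type'' permanental inequality which I will establish by a weight-preserving injection between two sets of combinatorial objects. The base case $k=0$ is just the convention $1 = \per(B)/\per(B) \le \per(B^*(\emptyset,\emptyset)) = 1$.

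For the inductive step, fix some $t_0 \in T$ and Laplace-expand $\per(B^*(T,S))$ along the row indexed by $t_0$: since $(B^*)_{t_0,s_0} = \per(B_{s_0,t_0})/\per(B)$, this gives $\per(B^*(T,S)) = \per(B)^{-1}\sum_{s_0\in S}\per(B_{s_0,t_0})\cdot\per(B^*(T\setminus\{t_0\}, S\setminus\{s_0\}))$. Applying the induction hypothesis to the index sets $S\setminus\{s_0\}$ and $T\setminus\{t_0\}$ bounds each factor $\per(B^*(T\setminus\{t_0\}, S\setminus\{s_0\}))$ below by $\per(B(-(S\setminus\{s_0\}),-(T\setminus\{t_0\})))/\per(B)$. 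Hence it suffices to prove
\begin{equation*}
\sum_{s_0\in S}\per(B_{s_0,t_0})\cdot\per\!\big(B(-(S\setminus\{s_0\}),-(T\setminus\{t_0\}))\big)\ \ge\ \per(B)\cdot\per(B(-S,-T)). \tag{$\dagger$}
\end{equation*}

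To prove $(\dagger)$, expand both sides as nonnegative sums of monomials in the entries $b_{ij}$. A term on the right is indexed by a pair $(\pi,\rho)$ with $\pi$ a permutation of $[d]$ and $\rho$ a bijection $[d]\setminus S\to[d]\setminus T$; a term on the left is indexed by a triple $(s_0,\mu,\nu)$ with $s_0\in S$, $\mu$ a bijection $[d]\setminus\{s_0\}\to[d]\setminus\{t_0\}$, and $\nu$ a bijection $([d]\setminus S)\cup\{s_0\}\to([d]\setminus T)\cup\{t_0\}$; in each case the monomial is the product of the corresponding entries. It is therefore enough to build a weight-preserving injection from pairs to triples. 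Given $(\pi,\rho)$, form the bipartite multigraph $H$ on row-vertices $[d]$ and column-vertices $[d]$ with edge multiset $\pi\cup\rho$. Every vertex has degree $2$ except the rows in $S$ and the columns in $T$, which have degree $1$ (their unique incident edge being a $\pi$-edge); thus $H$ is a disjoint union of alternating cycles and exactly $k$ alternating paths, and a parity argument shows each path runs from a row of $S$ to a column of $T$ (its end-edges are $\pi$-edges, so the path has odd length and its endpoints lie on opposite sides). Let $P_0$ be the path with column-endpoint $t_0$ and let $s_0\in S$ be its row-endpoint. Now re-distribute the edges of $H$ into $(\mu,\nu)$: along $P_0$ the two end-edges go to $\nu$ (alternating inward); along every other path the two end-edges go to $\mu$; on each cycle the old $\rho$-edges go to $\mu$ and the old $\pi$-edges to $\nu$. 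A degree count verifies that $\mu$ and $\nu$ are bijections with exactly the prescribed domains, so $(s_0,\mu,\nu)$ is a valid triple, and since $\mu\cup\nu = \pi\cup\rho$ as edge multisets the two monomials coincide. The map is injective because $(s_0,\mu,\nu)$ determines $H=\mu\cup\nu$, its path/cycle decomposition, and (via $s_0$) the distinguished path $P_0$, whence $\pi$ and $\rho$ are recovered by reversing the rule.

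The routine parts are the induction scaffolding and the Laplace expansion; the main obstacle is the combinatorial core $(\dagger)$. The delicate points there are (i) confirming that in $H$ every path necessarily joins a row of $S$ to a column of $T$ — this is what makes the re-distribution rule consistent — and (ii) checking that the re-distributed $\mu,\nu$ land in exactly the right bijection classes and that the rule is invertible, so that what we have is a genuine injection and not merely a weight-dominating map. I would also remark that replacing permanents by determinants throughout turns $(\dagger)$ into an exact identity of Sylvester/Jacobi type, the loss of sign cancellation being precisely what converts that equality into the stated inequality.
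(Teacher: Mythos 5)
Your proof is correct, and it takes a genuinely different route from the paper's. The paper dispatches this theorem in one line as a specialization of the Permanental Schur Formula (Theorem~\ref{thm:permanent-inequality}): take $W=0$ and let the columns of $X$ and $Y$ be the standard basis vectors indexed by $S$ and $T$, so that $\per(A)$ collapses to $\per(B(-T,-S))$ while $X^\top B^* Y = B^*(S,T)$. That Schur inequality in turn rests on the row-uncrossing lemma (Lemma~\ref{lem:row-uncrossing-perm}), proved in two stages: the $W=0$ case by an algebraic induction on $d$, then a multilinearity/partial-derivative argument to lift to general $W\ge 0$. Your inequality $(\dagger)$ is precisely the specialization of that row-uncrossing lemma to $W=0$ with $X,Y$ built from standard basis vectors (up to exchanging the roles of $S$ and $T$), but you establish it directly via a weight-preserving injection on the multigraph $H=\pi\cup\rho$, sidestepping the multilinearity machinery entirely. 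I checked the combinatorics: both end-edges of each path are $\pi$-edges (rows of $S$ and columns of $T$ are exactly the degree-one vertices), so each path has odd length and therefore joins a row of $S$ to a column of $T$; the degree bookkeeping under your redistribution rule places $\mu$ and $\nu$ in exactly the prescribed bijection classes (including the double-edge case $\pi(i)=\rho(i)$, which forms a $2$-cycle and is handled by the cycle rule); and $H=\mu\cup\nu$ together with the degree-one marker $s_0$ uniquely recovers $(\pi,\rho)$, so the map is a genuine injection. Your approach is more self-contained and makes the nonnegative excess combinatorially explicit, at the cost of re-deriving just the special case needed rather than inheriting the general block inequality that the paper has available.
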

We now examine some illustrative special cases of this inequality:
\begin{itemize}
	\item \textbf{Case 1:} \( S = T = [n] \). Then  \( B(-S, -T) \) is empty and \( B^*(T, S) =B^*\), and the inequality reads:
	\begin{equation*}
	    \frac{1}{\per(B)} \;\le\; \per(B^*) \quad \Longrightarrow \quad \per(B^*) \cdot \per(B) \;\ge\; 1\,.
	\end{equation*}
	This is the inequality from qquation \ref{eqn:perm-perm-inv-prod} before.
	\item \textbf{Case 2:} \( S = \{i\} \), \( T = \{j\} \). Then \( B(-S,-T) =B_{i,j}\) is the \( (n-1)\times(n-1) \) matrix obtained by deleting row \( i \) and column \( j \), and the inequality becomes:
	\begin{equation*}
	    \frac{\per(B_{i,j})}{\per(B)} \;\le\; \per((B^*)_{j,i})=(B^*)_{j,i}\,,
	\end{equation*}
	which holds with equality by definition of the permanental inverse.
\end{itemize}

These special cases highlight the role of the permanental inverse as a natural upper bound on normalized minors of $B $. While the determinantal analogue yields exact identities due to multiplicativity, the permanent lacks such algebraic structure.
\subsection{Schur's Formula for Permanents}
\label{sec:schur-permanent}
In contrast to Schur's determinant formula (which yields an exact equality), the analogous relationship for matrix permanents turns out to be an inequality. We formalize this below.
\PermanentInequality*
\noindent\textbf{Proof strategy.} The proof of Theorem~\ref{thm:permanent-inequality} will use induction on $k$ (the number of columns in $Y$). We first establish two auxiliary results: a formula for the permanent of a rank-1 block update (Observation~\ref{obs:rank-1-perm-update}) and a technical inequality (Lemma~\ref{lem:row-uncrossing-perm}) referred to as the \emph{row-uncrossing lemma}. After proving these, we proceed to the inductive step for the general case.

\begin{observation}[Rank-1 Update Formula]	\label{obs:rank-1-perm-update}
	For any block matrix of the form $\displaystyle\begin{pmatrix} B & y \\[3pt] x^\top & w \end{pmatrix}$, where $B$ is a $d\times d$ matrix, $x$ and $y$ are column vectors of length $d$, and $w$ is a scalar, the permanent can be expanded as
	\begin{equation*}
	    \per\begin{pmatrix} 
		B & y\\[3pt] 
		x^\top & w 
	\end{pmatrix}
	\;=\; \per(B)\,\cdot\,\Big(w \,+\, x^\top B^*\, y\Big)\,.
	\end{equation*}
\end{observation}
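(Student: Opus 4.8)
The plan is to prove the identity by a direct Laplace (cofactor-type) expansion of the permanent, exploiting the fact that the permanent is invariant under permutations of rows and of columns. Write $A$ for the $(d+1)\times(d+1)$ matrix in the statement, so that its last row is $(x_1,\dots,x_d,w)$ and $A_{d+1,d+1}=B$. Expanding $\per(A)$ along the last row gives
\begin{equation*}
\per(A)=w\cdot\per(B)+\sum_{j=1}^{d}x_j\cdot\per\!\left(A_{d+1,j}\right),
\end{equation*}
where $A_{d+1,j}$ is obtained from $A$ by deleting the last row and the $j$-th column.

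The second step is to identify $\per(A_{d+1,j})$ with a permanent of a $d\times d$ matrix closely related to $B$. The matrix $A_{d+1,j}$ consists of the first $d$ rows of $A$ (i.e.\ the block $[\,B \mid y\,]$) with column $j$ removed; equivalently, up to reordering its columns, it is the matrix $B^{(j\to y)}$ obtained from $B$ by replacing its $j$-th column with $y$. Since the permanent is unchanged under column permutations, $\per(A_{d+1,j})=\per(B^{(j\to y)})$. Now expand $\per(B^{(j\to y)})$ along its $j$-th column (which equals $y$): because deleting that column erases the substitution, the minors that appear are exactly the $B_{i,j}$, and we get $\per(B^{(j\to y)})=\sum_{i=1}^{d}y_i\cdot\per(B_{i,j})$.

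Combining the two expansions yields $\per(A)=w\,\per(B)+\sum_{i,j}x_j y_i\,\per(B_{i,j})$. The final step is to rewrite this using the permanental inverse. By definition $(B^*)_{i,j}=\per(B_{j,i})/\per(B)$, hence $\per(B_{i,j})=\per(B)\cdot(B^*)_{j,i}$, and
\begin{equation*}
\per(A)=\per(B)\left(w+\sum_{i,j}x_j\,(B^*)_{j,i}\,y_i\right)=\per(B)\left(w+x^\top B^* y\right),
\end{equation*}
since $\sum_{i,j}x_j(B^*)_{j,i}y_i=\sum_j x_j (B^* y)_j = x^\top B^* y$. This is the claimed formula.

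There is no real obstacle here; the argument is entirely bookkeeping. The one place to be careful is the index transposition built into the definition of $B^*$ — the $(i,j)$ entry of $B^*$ is a multiple of $\per(B_{j,i})$, not $\per(B_{i,j})$ — so one must track which index is summed against $x$ and which against $y$ to land on $x^\top B^* y$ rather than its transpose. One should also note that the displayed Laplace expansions are valid because $B$ need not be invertible here: $\per(B)\neq 0$ is implicitly guaranteed in the setting where $B^*$ is defined, but even without it both sides of the identity make sense once one clears the $\per(B)$ denominator, so the formula can be read as the polynomial identity $\per(A)=w\,\per(B)+\sum_{i,j}x_j y_i\,\per(B_{i,j})$.
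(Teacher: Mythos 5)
Your proof is correct and follows essentially the same route as the paper: Laplace expansion of the permanent along the last row, identification of each minor with the permanent of $B$ with one column replaced by $y$, a second expansion along that column to get $\sum_{i,j} x_j y_i \per(B_{i,j})$, and the rewriting via the definition of $B^*$ (with the index transposition correctly tracked). Nothing further is needed.
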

\begin{proof}
	We let $B^{(i \leftarrow y)}$ denote the $d\times d$ matrix obtained from $B$ by replacing its $i$-th column with the vector $y$. Expanding the permanent of the block matrix along its last row gives:
	\begin{equation*}
	    \per\begin{pmatrix} B & y \\ x^{\top} & w \end{pmatrix}
	\;=\; w \cdot \per(B) \;+\; \sum_{i=1}^d x_i \cdot \per\!\big( B^{(i\leftarrow y)} \big)\,. 
	\end{equation*}
	Here the first term $w\cdot\per(B)$ corresponds to choosing the entry $w$ in the last row, while each summand $x_i \cdot \per(B^{(i\leftarrow y)})$ corresponds to choosing the entry $x_i$ from the last row and then taking all permutations in the remainder of the matrix that involve one element from the inserted column $y$. In particular, if $y_j$ (the $j$-th entry of $y$) is used from that inserted column, it contributes a factor $x_i y_j$ and leaves a $(d-1)\times(d-1)$ submatrix $B_{j,i}$ (obtained by removing the $j$-th row and $i$-th column from $B$) for the rest of the permutation. Summing over all choices of $j$ for each $i$, we can rewrite the above as 
	\begin{equation*}
	    \per\begin{pmatrix} B & y \\ x^{\top} & w \end{pmatrix}
	\;=\; w \cdot \per(B) \;+\; \sum_{i=1}^d \sum_{j=1}^d x_i\,y_j\,\per(B_{j,i})\,.
	\end{equation*}
	Now, factor $\per(B)$ out of the summation. By the definition of $B^*$, we have $\frac{\per(B_{j,i})}{\per(B)} = (B^*)_{i,j}$. Thus,
	\begin{equation*}
	    \per\begin{pmatrix} B & y \\ x^{\top} & w \end{pmatrix}
	\;=\; \per(B)\Big( w + \sum_{i,j=1}^d x_i y_j\,\frac{\per(B_{j,i})}{\per(B)} \Big)
	\;=\; \per(B)\,\Big( w + x^\top B^*\,y \Big)\!,
	\end{equation*}
	which confirms the formula.
\end{proof}

\begin{lemma}[Row-Uncrossing Inequality]
	\label{lem:row-uncrossing-perm}
	Let
	\begin{equation*}
	    M = \begin{bmatrix}
		B & Y \\
		X^{\top} & W
	\end{bmatrix},
	\end{equation*}
	where 
	\(
	B \in \RR^{d \times d}_{\ge 0},\;
	X, Y \in \RR^{d \times k}_{\ge 0},\;
	W \in \RR^{k \times k}_{\ge 0}
	\)
	with \(d \ge 0\) and \(k \ge 1\).
	Then, for any fixed \( i^* \in [k] \), the following inequality holds:
	\begin{equation}
		\label{eqn:row-uncrossing-perm}
		\per(M) \cdot \per(B)
		\;\le\;
		\sum_{j=1}^k 
		\per\begin{bmatrix}
			B & Y_{., -j} \\
			X_{-i^*, .}^{\top} & W_{i^*, j}
		\end{bmatrix}
		\cdot
		\per\begin{bmatrix}
			B & y_j \\
			x_{i^*}^{\top} & w_{i^*, j}
		\end{bmatrix}.
	\end{equation}
	Here, \( Y_{.,\, -j} \) denotes the matrix \( Y \) with its \( j \)-th column removed, and \( X_{-i^*, .}^{\top} \) denotes \( X^{\top} \) with its \( i^* \)-th row removed (equivalently, removing the \( i^* \)-th column of \( X \) before transposing). Likewise, \( W_{i^*,\,j} \) is the submatrix of \( W \) obtained by deleting the \( i^* \)-th row and \( j \)-th column.
\end{lemma}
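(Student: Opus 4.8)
The plan is to prove \eqref{eqn:row-uncrossing-perm} combinatorially, by exhibiting an explicit weight‑preserving injection from the monomials counted by the left‑hand side into those counted by the right‑hand side; since every entry of $M$ is nonnegative, such an injection immediately yields ``$\le$''. Write $n=d+k$. The permanent $\per(M)$ expands as a sum over bijections $\sigma\colon[n]\to[n]$ of the weight $w(\sigma)=\prod_i M_{i,\sigma(i)}$, and $\per(B)$ as a sum over bijections $\tau\colon[d]\to[d]$ of $\prod_i B_{i,\tau(i)}=\prod_i M_{i,\tau(i)}$ (using that $B$ is the $[d]\times[d]$ submatrix of $M$); so $\per(M)\cdot\per(B)$ is a sum over pairs $(\sigma,\tau)$ of nonnegative weights. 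I will view $\sigma$ and $\tau$ as matchings in the bipartite graph with left class ``rows'' $[n]$ and right class ``columns'' $[n]$, and overlay them into a multigraph $G=G(\sigma,\tau)=M_\sigma\sqcup M_\tau$; in $G$ each of the $2d$ ``core'' vertices (rows $1,\dots,d$ and columns $1,\dots,d$) has degree $2$, and each of the remaining $2k$ vertices has degree $1$. On the right‑hand side, for fixed $j$ the first permanent expands over matchings $\rho$ from rows $[n]\setminus\{d+i^*\}$ to columns $[n]\setminus\{d+j\}$, and the second over matchings $\pi$ from rows $[d]\cup\{d+i^*\}$ to columns $[d]\cup\{d+j\}$, again with nonnegative weights read off from the relevant submatrices of $M$.

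\textbf{The map.} Given $(\sigma,\tau)$, decompose $G$ into connected components. Being a union of two matchings, every component is an alternating path or an even alternating cycle; the cycles and the interiors of the paths use only core vertices, and a short parity check (a non‑core vertex meets only a $\sigma$‑edge, and along a path the two matchings alternate) shows that each path runs from a non‑core row $d+a$ to a non‑core column $d+b$ with $a,b\in[k]$. In particular the path $P^*$ with one endpoint equal to row $d+i^*$ has its other endpoint equal to some column $d+j^*$. Now \emph{recolor} $P^*$: interchange its $\sigma$‑edges and $\tau$‑edges, leaving the rest of $G$ untouched, and let $\rho,\pi$ be the matchings obtained from $\sigma,\tau$ by this switch. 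Bookkeeping the two endpoints of $P^*$ shows that $\rho$ then covers exactly rows $[n]\setminus\{d+i^*\}$ and columns $[n]\setminus\{d+j^*\}$, while $\pi$ covers exactly rows $[d]\cup\{d+i^*\}$ and columns $[d]\cup\{d+j^*\}$ — so $(\rho,\pi)$ is a legal pair for the term $j=j^*$ on the right‑hand side. Recoloring does not change the edge multiset, i.e.\ $M_\sigma\sqcup M_\tau=M_\rho\sqcup M_\pi$, hence $w(\sigma)\,w(\tau)=w(\rho)\,w(\pi)$ and the map is weight‑preserving.

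\textbf{Injectivity.} For a fixed path, recoloring is an involution, and the overlay of $\rho,\pi$ is again $G$; therefore the unique path‑component of $G$ through row $d+i^*$ is still $P^*$, with other endpoint the recorded column $d+j^*$. Thus $(\sigma,\tau)$ is recovered from $(j^*,\rho,\pi)$ as ``recolor the (row $d+i^*$)--(column $d+j^*$) path of $G(\rho,\pi)$,'' so the map is injective. Summing the monomial inequality over all $(\sigma,\tau)$ and using nonnegativity of all the weights gives $\per(M)\cdot\per(B)\le\sum_j(\cdots)$, which is \eqref{eqn:row-uncrossing-perm}. The map is generally not surjective: an RHS pair in which the analogous path of $G(\rho,\pi)$ from row $d+i^*$ ends at a non‑core \emph{row} (rather than at column $d+j$) has no preimage — this is exactly why \eqref{eqn:row-uncrossing-perm} is an inequality, not an identity, and it already appears for $d=1,\,k=2$.

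\textbf{Main obstacle.} The two structural claims carry the real content: that the alternating path leaving row $d+i^*$ must terminate at a non‑core column, and the precise list of rows and columns covered by the two recolored matchings. Once these are pinned down, weight‑preservation and injectivity are formal. It is also worth checking that the picture subsumes the degenerate cases: for $d=0$ the inequality is just the Laplace expansion of $\per(W)$ along its $i^*$‑th row (an equality), and for $k=1$ both sides equal $\per(B)\cdot\per(M)$.
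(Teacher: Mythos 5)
Your proof is correct, and it takes a genuinely different route from the paper's. The paper proves Lemma~\ref{lem:row-uncrossing-perm} by a two-stage induction: first an inductive argument on $d$ for the special case $W=0$, using an explicit expansion over injective maps $f,g\colon[k]\to[d]$, and then an extension to general $W\ge 0$ by observing that the difference of the two sides is multilinear in the $w_{i,j}$'s and that each partial derivative $\partial h/\partial w_{\alpha,\beta}$ is a smaller-$k$ instance of the same inequality (so the second stage itself induces on $k$). Your proposal replaces all of that with a single weight-preserving injection between monomials. You overlay the two permutation matchings $M_\sigma\sqcup M_\tau$, use the standard structure theorem that the components are alternating even cycles (on core vertices) and alternating paths whose endpoints are the degree-one non-core vertices, observe via the parity/alternation argument that each such path joins a non-core row to a non-core column, and recolor the path $P^*$ through row $d+i^*$. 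This sends $(\sigma,\tau)$ to a pair $(\rho,\pi)$ contributing to the $j=j^*$ term on the right, preserves the product of weights because the edge multiset is unchanged, and is injective because recoloring the $(d+i^*)$--$(d+j^*)$ path of $G(\rho,\pi)$ inverts the map. All the checks you flag go through: paths have $\sigma$-edges at both (non-core) endpoints, hence odd length, hence one row endpoint and one column endpoint; after recoloring, $\rho$ covers exactly rows $[n]\setminus\{d+i^*\}$, columns $[n]\setminus\{d+j^*\}$, and $\pi$ covers $[d]\cup\{d+i^*\}$, $[d]\cup\{d+j^*\}$; and different $j^*$ give disjoint RHS index sets, so no double-counting. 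Your approach buys a shorter, self-contained proof that also transparently explains the slack in the inequality (the RHS pairs whose $(d+i^*)$-path ends at a non-core row have no preimage), at the cost of invoking the bipartite-matching decomposition machinery; the paper's approach is more computational but avoids combinatorial graph reasoning and plugs directly into the inductive scaffolding used elsewhere in the section. Your treatment of the degenerate cases $d=0$ (Laplace expansion) and $k=1$ (tautology) also matches the paper's.

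One small presentational suggestion: when you argue injectivity you implicitly use that in $G(\rho,\pi)$ the path through row $d+i^*$ necessarily terminates at column $d+j^*$ \emph{for pairs in the image}; it is worth stating explicitly that this holds because $G(\rho,\pi)=G(\sigma,\tau)$ as multigraphs and $P^*$ is unchanged, whereas for a general RHS pair the path may instead terminate at a non-core row (which is exactly your non-surjectivity remark). As written this is implied, but spelling it out would make the ``recover $(\sigma,\tau)$ from $(j^*,\rho,\pi)$'' step airtight.
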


\begin{proof}
The proof proceeds in two main stages:
\begin{enumerate}
	\item We first establish the inequality in the special case where \( W = 0 \). This is done by explicitly expanding the permanents and applying an inductive argument on the dimension \( d \).
	
	\item We then extend the result to arbitrary non-negative matrices \( W \). To do this, we define a function representing the difference between the two sides of the inequality. From the first step, we know that this function is non-negative when \( W = 0 \). We observe that the function is multilinear in the entries of \( W \), and that all its partial derivatives are non-negative. Each partial derivative corresponds to an instance of the same inequality, but for a smaller value of \( k \), allowing us to invoke the inductive hypothesis. These observations imply that the difference function remains non-negative for all non-negative \( W \), thereby completing the proof.
\end{enumerate}
\textbf{The base cases $d=0$ or $k=1$:} For the base case when $d=0$, the inequality in \eqref{eqn:row-uncrossing-perm} is 
\begin{align*}
	\per(W)\leq \sumL_{1\leq j\leq k}\per(W_{i^*,j})\cdot w_{i^*,j} ,
\end{align*}
which holds with equality as this is precisely the Laplace expansion of $\perm(W)$ at row $i^*$. For the base case when $k=1$, the inequality in \eqref{eqn:row-uncrossing-perm} is 
\begin{align*}
	\per\begin{bmatrix}
		B & y \\
		x^{\top} & w
	\end{bmatrix}\cdot \per(B) \leq  
	\per(B)\cdot \per\begin{bmatrix}
		B & y \\
		x^{\top} & w
	\end{bmatrix} 
\end{align*}
which is trivially true with equality.  So assume that $d\geq 1$ and $k\geq 2$ in any case from here on. 

\textbf{The Special Case $W=0$:} We use induction on $d$.  For $d\geq 1$, observe that when $d<k$, the LHS term is equal to zero when $W=0$. Because a term in the permanent of \(M\) (with \(W=0\)) is non-zero only if it selects \(d-k\) elements from \(B\) and \(k\) elements from each of \(X\) and \(Y\). Assume that $d\geq k$ and let $\cF$ be the set of one-one functions mapping from $[k]$ to $[d]$. Expanding the LHS of \eqref{eqn:row-uncrossing-perm} gives 
\begin{align*}
	\per\begin{bmatrix}
		B & Y \\
		X^{\top} & 0
	\end{bmatrix}\cdot \per(B)= \per(B)\cdot \sumL_{f,g\in \cF} \prod_{i\in [k]}x_{f(i),i}\cdot \prod_{j\in [k]}y_{g(j),j}\cdot \per B(-\textnormal{img}_g,-\textnormal{img}_f).    
\end{align*}
In order to similarly expand the RHS, define $\cF_t$ for $t\in [k]$ as the set of functions mapping $[k]$ to $ [d]$ such that $f\in \cF_t$ is one-one when restricted to $[k]\del\{t\}$. Expanding the RHS of \eqref{eqn:row-uncrossing-perm} gives
\begin{align*}
	&\sumL_{1\leq t\leq k} 
	\per\begin{bmatrix}
		B & Y_{.,-t} \\
		X_{-i^*,.}^{\top} & 0
	\end{bmatrix}\cdot \per\begin{bmatrix}
		B & y_{t} \\
		x_{i^*}^{\top} & 0
	\end{bmatrix}= \\&\sumL_{1\leq t \leq k}\sumL_{f'\in \cF_{i^*}, g'\in \cF_{t}}\prod_{i\in [k]}x_{f'(i),i}\cdot \prod_{j\in [k]}y_{g'(j),j}\per[B(-\textnormal{img}_{g',[k]-t} ,-\textnormal{img}_{f',[k]-i^*})]\cdot\per(B_{g'(t),f'(i^*)}).
	\intertext{Since $\cF\subseteq \cF_t$ for any $t$, we can obtain a lower bound by only summing over $f',g'\in \cF$. By exchanging the summations after this step gives}
	&\geq \sumL_{f',g'\in \cF}\prod_{i\in [k]}x_{f'(i),i}\cdot \prod_{j\in [k]}y_{g'(j),j} \sumL_{1\leq t\leq k}\per[B(-\textnormal{img}_{g',[k]-t},-\textnormal{img}_{f',[k]-i^*} )]\cdot\per(B_{g'(t),f'(i^*)}).
\end{align*}
	It is sufficient to show that 
\begin{align}
	\label{eqn: smallerform-row-uncrossing-perm}
	\per(B)\cdot \per B(-\textnormal{img}_g,-\textnormal{img}_f) \leq \sumL_{1\leq t\leq k}\per[B(-\textnormal{img}_{g,[k]-t},-\textnormal{img}_{f,[k]-i^*} )]\cdot\per(B_{g(t),f(i^*)})
\end{align}
for any $f,g\in \cF, \, i^*\in [k]$. Equation \eqref{eqn: smallerform-row-uncrossing-perm} is in the form of \eqref{eqn:row-uncrossing-perm} with the substitution 
\begin{align*}
	B,X,Y,i^*,d,k \leftarrow B(-\textnormal{img}_g,-\textnormal{img}_f)^{\top}, B(-\textnormal{img}_g,\textnormal{img}_f), B(\textnormal{img}_g,-\textnormal{img}_f)^{\top} , f(i^*), d-k, k. 
\end{align*}
We can conclude this case using inductive hypothesis. 

\textbf{Extension to $W\geq 0$:} 	It remains to prove \eqref{eqn:row-uncrossing-perm} for general $W$ given that we have a proof for the case when $W=0$. The first step is to observe that both the LHS and RHS of \eqref{eqn:row-uncrossing-perm} are multi-linear functions with respect to the $w_{i,j}$ variables. For fixed $B, X, Y, i^*$, consider the function 
$h(W):=  \sumL_{1\leq j\leq k} 
\per\begin{bmatrix}
	B & Y_{.,-j} \\
	X_{-i^*,.}^{\top} & W_{i^*,j}
\end{bmatrix}\cdot \per\begin{bmatrix}
	B & y_{j} \\
	x_{i^*}^{\top} & w_{i^*,j}
\end{bmatrix}-\per\begin{bmatrix}
	B & Y \\
	X^{\top} & W
\end{bmatrix}\cdot \per(B) $. Since we know that $h(0)\geq 0$, it suffices to show that $\frac{\partial h(W)}{\partial w_{\alpha, \beta}}\geq 0$ for every $\alpha,\beta\in [k]$. Then that would imply $h
(W)\geq h(0)\geq 0$. 

We proceed by induction on $k$. For $\alpha, \beta \in [k], \, \alpha\neq i^*$, the derivative $\frac{\partial h(W)}{\partial w_{\alpha, \beta}}$ is equal to 
\begin{align*}
	\sumL_{j\in [k]\backslash \{\beta\}}\per\begin{bmatrix}
		B & Y([d],-\{j,\beta\}) \\
		X([d], -\{i^*, \alpha\})^{\top} & W(-\{i^*,\alpha\}, -\{j,\beta\})
	\end{bmatrix}\cdot \per\begin{bmatrix}
		B & y_{j} \\
		x_{i^*}^{\top} & w_{i^*,j}
	\end{bmatrix} -\per\begin{bmatrix}
		B & Y_{.,-\beta} \\
		X_{-\alpha,.}^{\top} & W_{\alpha,\beta}
	\end{bmatrix}\cdot \per(B) 
\end{align*}
which is non-negative using inductive hypothesis. The substitution being 
\begin{align*}
	B,X,Y,i^*,d,k \leftarrow B, X_{.,-\alpha}, Y_{.,-\beta},i^*,d,k-1.
\end{align*}
For $\alpha=i^*, \beta\in [k]$, the derivative $\frac{\partial h(W)}{\partial w_{i^*, \beta}}$ is equal to
\begin{align*}
	\per\begin{bmatrix}
		B & Y_{.,-\beta} \\
		X_{-i^*,.}^{\top} & W_{i^*,\beta}
	\end{bmatrix} \cdot\per(B)- \per\begin{bmatrix}
		B & Y_{.,-\beta} \\
		X_{-i^*,.}^{\top} & W_{i^*,\beta}
	\end{bmatrix} \cdot\per(B)=0.
\end{align*}
This finishes the proof of the lemma.
	\end{proof}
\begin{proof}[Proof of Theorem \ref{thm:permanent-inequality}]
We prove by induction on $k$. Equation \eqref{eqn:perm-inv-prop} holds with equality for $k=1$ using Observation \ref{obs:rank-1-perm-update}. For $k\geq 2$, using Lemma \ref{lem:row-uncrossing-perm}, we have	
	\begin{align*}	
		\per\begin{bmatrix}		
			B & Y \\		
			X^{\top} & W			
		\end{bmatrix}\cdot \per(B) &\leq \sumL_{1\leq j\leq k}	
		\per\begin{bmatrix}			
			B & Y_{.,-j} \\			
			X_{-1,.}^{\top} & W_{1,j}			
		\end{bmatrix}\cdot \per\begin{bmatrix}	
			B & y_{j} \\		
			x_{1}^{\top} & w_{1,j}		
		\end{bmatrix}.
		\intertext{			
			Using inductive hypothesis, we have		
		}		
		\per\begin{bmatrix}		
			B & Y_{.,-j} \\		
			X_{-1,.}^{\top} & W_{1,j}		
		\end{bmatrix}&\leq \per(B)\cdot \per(W_{1,j}+X_{-1,.}^{\top} B^*Y_{.,-j}).	
		\intertext{substituting this gives}	
		\per\begin{bmatrix}		
			B & Y \\		
			X^{\top} & W		
		\end{bmatrix}\cdot \per(B) &\leq \per(B)\sumL_{1\leq j\leq k}\per(W_{1,j}+X_{-1,.}^{\top} B^*Y_{.,-j})\cdot \per\begin{bmatrix}		
			B & y_{j} \\		
			x_{1}^{\top} & w_{1,j}.		
		\end{bmatrix}	
		\intertext{Using Observation \ref{obs:rank-1-perm-update} gives}	
		&\leq \per(B)^2\sumL_{1\leq j\leq k}\per(W_{1,j}+X_{-1,.}^{\top} B^*Y_{.,j})\cdot (w_{1,j}+x_1^{\top}B^*y_j)\\	
		&=\per(B)^2\cdot \per(W+X^{\top}B^*Y).	
	\end{align*}	
	concluding the proof that
		\begin{align*}	
		\per\begin{bmatrix}			
			B & Y \\	
			X^{\top} & W		
		\end{bmatrix} \leq \per(B)\cdot \per(W+X^{\top}B^*Y).	
	\end{align*}	
\end{proof}
\begin{proof}[Proof of Theorem \ref{thm:inverse-property-perm}]
    Setting the columns of $X$ and $Y$ as the standard basis vectors corresponding to indices given by sets $S,T$ respectively in Theorem \ref{thm:permanent-inequality} gives Theorem \ref{thm:inverse-property-perm}. 
\end{proof}
In the following subsection, we mention two special cases of Theorem \ref{thm:permanent-inequality} when $k$ or $d$ is small. 
\subsubsection{Two permanent inequalities}
We state two special cases of Theorem \ref{thm:permanent-inequality} in Lemma \ref{lem:permanent-uncrossing} and Lemma \ref{lem:perm-ineq-2}.

\begin{lemma}
	\label{lem:permanent-uncrossing}
   For any matrix $B\in \RR^{d\times d}_{\geq 0}$, vectors $x_i,y_i \in \RR^{d\times 1}_{\geq 0}$, and scalars $w_{i,j}\geq 0$ with $i,j \in \{1,2\}$, the following inequality holds true:
	\begin{align}
		\label{eqn:ineq-1}
		\per\begin{bmatrix}
			B & y_1 & y_2 \\
			x_1^{\top}& w_{1,1} & w_{1,2}\\
			x_2^{\top}& w_{2,1} & w_{2,2}
		\end{bmatrix}\cdot \per(B)\leq \per&\begin{bmatrix}
			B & y_1 \\
			x_1^{\top}& w_{1,1}
		\end{bmatrix}  \cdot\per\begin{bmatrix}
			B & y_2 \\
			x_2^{\top}& w_{2,2}
		\end{bmatrix}\notag \\
        &+ \per\begin{bmatrix}
			B & y_2 \\
			x_1^{\top}& w_{1,2}
		\end{bmatrix}\cdot\per\begin{bmatrix}
			B & y_1 \\
			x_2^{\top}& w_{2,1}
		\end{bmatrix}.
	\end{align}	
\end{lemma}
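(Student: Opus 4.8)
The plan is to obtain this inequality by specializing results already proved above, so the work is essentially bookkeeping with the block notation. Write $Y$ for the $d\times 2$ matrix with columns $y_1,y_2$, write $X$ for the $d\times 2$ matrix with columns $x_1,x_2$, and let $W\in\RR^{2\times 2}_{\ge 0}$ be the matrix with entries $w_{i,j}$; then the matrix whose permanent appears on the left of \eqref{eqn:ineq-1} is exactly the block matrix $M:=\begin{bmatrix}B & Y\\ X^\top & W\end{bmatrix}$.

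The most direct route is the Row-Uncrossing Inequality (Lemma~\ref{lem:row-uncrossing-perm}) with $k=2$ and $i^*=1$, which asserts that $\per(M)\cdot\per(B)\le\sum_{j=1}^{2}\per\begin{bmatrix}B & Y_{.,-j}\\ X_{-1,.}^\top & W_{1,j}\end{bmatrix}\cdot\per\begin{bmatrix}B & y_j\\ x_1^\top & w_{1,j}\end{bmatrix}$. Unwinding the $2\times 2$ deletion notation (so $Y_{.,-1}=y_2$, $Y_{.,-2}=y_1$, $X_{-1,.}^\top=x_2^\top$, $W_{1,1}=w_{2,2}$, and $W_{1,2}=w_{2,1}$), the $j=1$ summand equals $\per\begin{bmatrix}B & y_1\\ x_1^\top & w_{1,1}\end{bmatrix}\per\begin{bmatrix}B & y_2\\ x_2^\top & w_{2,2}\end{bmatrix}$ and the $j=2$ summand equals $\per\begin{bmatrix}B & y_2\\ x_1^\top & w_{1,2}\end{bmatrix}\per\begin{bmatrix}B & y_1\\ x_2^\top & w_{2,1}\end{bmatrix}$, which is precisely the inequality in the statement of the lemma.

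Alternatively, to match the way the lemma is framed (as a special case of Theorem~\ref{thm:permanent-inequality}): if $\per(B)=0$, the left-hand side of \eqref{eqn:ineq-1} is $0$ while the right-hand side is a sum of products of permanents of non-negative matrices and hence $\ge 0$, so the inequality is trivial; if $\per(B)>0$, applying Theorem~\ref{thm:permanent-inequality} to $M$ (with $B$ as the top-left block) yields $\per(M)\cdot\per(B)\le\per(B)^2\cdot\per(W+X^\top B^*Y)$. The $(i,j)$ entry of $W+X^\top B^*Y$ is $w_{i,j}+x_i^\top B^*y_j$, and Observation~\ref{obs:rank-1-perm-update} gives $\per(B)\,(w_{i,j}+x_i^\top B^*y_j)=\per\begin{bmatrix}B & y_j\\ x_i^\top & w_{i,j}\end{bmatrix}$; hence expanding the $2\times 2$ permanent $\per(W+X^\top B^*Y)$ and multiplying through by $\per(B)^2$ reproduces exactly the right-hand side of \eqref{eqn:ineq-1}.

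I do not expect a genuine obstacle here: the substance is entirely carried by Lemma~\ref{lem:row-uncrossing-perm} (equivalently, by Theorem~\ref{thm:permanent-inequality}), and the only points that need care are reading the $2\times 2$ deletion indices correctly — this is what makes $w_{1,2}$ and $w_{2,1}$ appear ``swapped'' between the two sides — and handling the harmless $\per(B)=0$ case in the Theorem-based argument.
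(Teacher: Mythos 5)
Your proof is correct and matches the paper's own treatment: the paper introduces Lemma~\ref{lem:permanent-uncrossing} as a special case of Theorem~\ref{thm:permanent-inequality} without writing out the unwinding, and your Theorem-based route reproduces exactly that specialization while correctly filling in the $\per(B)=0$ edge case that the paper leaves implicit. Your alternative observation that the statement is, after relabeling, literally the $k=2$, $i^*=1$ instance of Lemma~\ref{lem:row-uncrossing-perm} is also right and is arguably the cleanest way to see it, since it avoids the permanental inverse and the $\per(B)=0$ split entirely.
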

The above lemma is a permanental version of Lemma \ref{lem:schur-determinant}.
\begin{lemma} \label{lem:perm-ineq-2}
	For a matrix $W\in \RR^{k\times k}_{\geq 0}$, vectors $x,y\in \RR^k_{\geq 0}$, and scalar $b\neq 0$, let $C\in \RR^{k\times k}_{\geq 0}$ be the matrix defined by $c_{i,j}:= \per\begin{bmatrix}
		b & y_j\\
		x_i& w_{i,j}
	\end{bmatrix}\cdot b^{-1}=w_{i,j}+x_iy_j/b$. The following inequality holds true:
	\begin{align}
		\label{eqn:ineq-2}
		\per\begin{bmatrix}
			b & y^{\top} \\
			x & W 
		\end{bmatrix}\cdot b^{-1} \leq \per(C).
	\end{align}
\end{lemma}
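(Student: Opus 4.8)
The plan is to recognize Lemma~\ref{lem:perm-ineq-2} as exactly the $d=1$ instance of the Permanental Schur Formula (Theorem~\ref{thm:permanent-inequality}), so that no genuinely new argument is needed beyond matching up conventions. First I would write the $(k+1)\times(k+1)$ matrix $\begin{bmatrix} b & y^{\top} \\ x & W \end{bmatrix}$ in the block form $\begin{bmatrix} B & Y \\ X^{\top} & W \end{bmatrix}$ required by Theorem~\ref{thm:permanent-inequality}, taking the scalar block $B=[b]$ (so $d=1$), the row block $Y=y^{\top}\in\RR^{1\times k}$, and the column block $X^{\top}=x\in\RR^{k\times 1}$ (equivalently $X=x^{\top}$). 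Since $\per(B)=b\neq 0$ and, because $b,x,y,W$ are non-negative (note $b>0$ is forced by $b\neq 0$ together with non-negativity, which is also what makes $C$ non-negative), the hypotheses of Theorem~\ref{thm:permanent-inequality} are satisfied.

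Next I would compute the permanental inverse of the $1\times 1$ matrix $B=[b]$. By definition, $(B^{*})_{1,1}=\per(B_{1,1})/\per(B)=1/b$, using the convention $\per(A(\emptyset,\emptyset))=1$; hence $B^{*}=[1/b]$. Therefore the Schur-type correction term is $X^{\top}B^{*}Y = x\cdot(1/b)\cdot y^{\top} = \tfrac{1}{b}\,xy^{\top}$, the $k\times k$ matrix with $(i,j)$ entry $x_i y_j/b$. Consequently $W+X^{\top}B^{*}Y$ is precisely the matrix $C$ of the lemma, since $c_{i,j}=w_{i,j}+x_i y_j/b$. (If desired, one can cross-check the entrywise identity $c_{i,j}=b^{-1}\per\begin{bmatrix} b & y_j \\ x_i & w_{i,j}\end{bmatrix}$ by applying Observation~\ref{obs:rank-1-perm-update} with the $1\times 1$ matrix $[b]$.)

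Finally I would invoke Theorem~\ref{thm:permanent-inequality} directly to obtain $\per\begin{bmatrix} b & y^{\top} \\ x & W\end{bmatrix}\leq \per(B)\cdot\per(W+X^{\top}B^{*}Y)=b\cdot\per(C)$, and then divide both sides by $b>0$ to get $b^{-1}\per\begin{bmatrix} b & y^{\top} \\ x & W\end{bmatrix}\leq\per(C)$, which is \eqref{eqn:ineq-2}. I do not expect any real obstacle; the only points requiring a moment of care are the positivity of $b$ (needed both for the ambient matrix to be non-negative and for the division to preserve the inequality) and checking that the transpose conventions line up so that $X^{\top}B^{*}Y$ is indeed the outer product $\tfrac{1}{b}xy^{\top}$ and not its transpose — a non-issue here since $B^{*}$ is a scalar.
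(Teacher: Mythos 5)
Your proof is correct and is exactly the argument the paper intends: the paper itself introduces Lemma~\ref{lem:perm-ineq-2} as a ``special case of Theorem~\ref{thm:permanent-inequality},'' and your proposal simply spells out that specialization with $d=1$, $B=[b]$, $B^*=[1/b]$, and $W+X^{\top}B^*Y=C$. The only detail worth noting explicitly (which you do handle) is that non-negativity forces $b>0$, so dividing by $b$ preserves the inequality.
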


\begin{remark}
    We prove an analogue of Lemma \ref{lem:perm-ineq-2} for PSD matrices in Section \ref{subsec:psd}, as it is crucial for the permanent process.
\end{remark}
\section{The Permanent Process}
\label{sec:perm-process}
Let $A^{(t)}$ denote the matrix at the beginning of step $t$ of Algorithm \ref{alg:perm-process}. The state of the matrix entries after the end of step $t$ for some $1\leq t\leq n-1$ is given by 
\begin{align}
	\label{eqn:perm-process}
	a_{i,j}^{(t+1)}=\begin{cases} a_{i,j}^{(t)}+\frac{a_{i,t}^{(t)}a_{t,j}^{(t)}}{a_{t,t}^{(t)}}  , &\text{for $i,j\geq t+1$}\\
		a_{i,j}^{(t)}, &\text{otherwise.}
	\end{cases}
\end{align}
It is difficult to obtain a clean closed-form solution for the entries of $A^{(t)}$ with respect to the entries of $A$ during the permanent process, like we had for the Gaussian elimination process in Theorem \ref{thm:Gaussian-determinant}. However, something weaker can be proven that is strong enough to extend Corollary \ref{cor:Gauss_elim-invar}. 
\begin{theorem}
	\label{thm:perm_process-invar}
	Let $A$ be a real non-negative or a real PSD matrix. The diagonal entries of the matrix $A^{(n)}$ can be lower bounded by 
	\begin{align*}
		a_{t,t}^{(n)}=a_{t,t}^{(t)}\geq \frac{\per(A^{(t)}(-[t-1], -[t-1]))}{\per(A^{(t+1)}(-[t], -[t]))}.
	\end{align*}
	for $1\leq t\leq n$.
\end{theorem}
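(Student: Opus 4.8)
The plan is to handle each index $t$ separately and to observe that the claimed two‑sided bound is, after clearing denominators, exactly one application of Lemma~\ref{lem:perm-ineq-2} (in the non‑negative case) or of its PSD counterpart (promised in Section~\ref{subsec:psd}) to the ``active'' principal submatrix $M := A^{(t)}(-[t-1],-[t-1])$. First I would fix $t$ with $1\le t\le n-1$ (the case $t=n$ is the trivial identity $a_{n,n}^{(n)}=a_{n,n}^{(n)}$, reading the permanent of the empty matrix as $1$), and write $M$ in block form
\[
M=\begin{bmatrix} b & y^\top\\ x & W\end{bmatrix},\qquad b=a_{t,t}^{(t)},\quad x=(a_{i,t}^{(t)})_{i>t},\quad y=(a_{t,j}^{(t)})_{j>t},\quad W=A^{(t)}(-[t],-[t]),
\]
assuming as usual that the pivot $b=a_{t,t}^{(t)}$ is nonzero. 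The thing to check carefully is the bookkeeping: step $t$ of Algorithm~\ref{alg:perm-process} touches only the lower‑right block of $M$, updating its $(i,j)$ entry to $a_{i,j}^{(t)}+a_{i,t}^{(t)}a_{t,j}^{(t)}/a_{t,t}^{(t)}=w_{i,j}+x_iy_j/b$, so $A^{(t+1)}(-[t],-[t])$ is precisely the matrix $C$ appearing in Lemma~\ref{lem:perm-ineq-2}.

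With this identification, in the non‑negative case Lemma~\ref{lem:perm-ineq-2} gives $\per(M)\cdot b^{-1}\le \per(C)$, i.e.
\[
\per\!\big(A^{(t)}(-[t-1],-[t-1])\big)\;\le\; a_{t,t}^{(t)}\cdot\per\!\big(A^{(t+1)}(-[t],-[t])\big),
\]
and dividing through (the right‑hand permanent is positive whenever $A$ has positive diagonal, since the diagonal of $A^{(t+1)}(-[t],-[t])$ dominates that of $A$) yields the stated bound after recalling $a_{t,t}^{(n)}=a_{t,t}^{(t)}$, which is the ``unchanged after step $t-1$'' observation. For the PSD case I would instead invoke the PSD analogue of Lemma~\ref{lem:perm-ineq-2}; to apply it I first need that $M$ itself is PSD. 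I would prove this by induction on $t$: $A^{(1)}=A$ is PSD, and if $A^{(t)}(-[t-1],-[t-1])=\begin{bmatrix} b & y^\top\\ y & W\end{bmatrix}$ is PSD then its Schur complement $W-yy^\top/b$ is PSD, so $A^{(t+1)}(-[t],-[t])=W+yy^\top/b=(W-yy^\top/b)+2yy^\top/b$ is a sum of two PSD matrices, hence PSD; symmetry is preserved because the update is symmetric, so the induction closes and the same computation finishes the proof.

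The real content of the theorem is thus entirely carried by Lemma~\ref{lem:perm-ineq-2} and its PSD version; there is no separate induction on matrix size here, and I expect the only friction to be (a) getting the row/column deletion indices to line up so that the step‑$t$ update is literally the passage $M\mapsto C$, and (b) the degenerate cases --- zero pivots (which must be excluded for the process to be well‑defined) and a vanishing denominator $\per(A^{(t+1)}(-[t],-[t]))=0$, in which case the statement should be read in the cleared‑denominator form $\per(A^{(t)}(-[t-1],-[t-1]))\le a_{t,t}^{(t)}\,\per(A^{(t+1)}(-[t],-[t]))$. Establishing the PSD analogue of Lemma~\ref{lem:perm-ineq-2} is the one ingredient that requires genuine work, but it is deferred to Section~\ref{subsec:psd} and is not part of this proof.
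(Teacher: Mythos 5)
Your proof is correct and takes essentially the same approach as the paper: identify the step-$t$ update of Algorithm~\ref{alg:perm-process} as the passage $M \mapsto C$ in Lemma~\ref{lem:perm-ineq-2} (respectively Lemma~\ref{lem:psd-ineq} for PSD matrices), apply the lemma to get $\per(M) \le a_{t,t}^{(t)}\cdot\per(C)$, and note that non-negativity/positive-semidefiniteness is preserved along the process so the same argument applies at every $t$. The paper phrases this as ``it suffices to prove the case $t=1$ and recurse,'' which is the same observation you make via the block bookkeeping; your explicit Schur-complement verification that PSD is preserved and your attention to the vanishing-denominator edge case are slightly more detailed than the paper's terse statement but add no new ideas.
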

\begin{proof}
	Observe that it is sufficient to prove just the case for $t=1$ because the second step of the permanent process essentially applies the first step of the permanent process on the sub-matrix $A^{(2)}(-\{1\}, -\{1\})$. What we want to show is that 
	\begin{align}
		\label{eqn:perm-ineq}
		\per(A^{(2)}(-\{1\}, -\{1\}))  \geq \frac{\per(A)}{a_{1,1}}.
	\end{align}
	This  follows directly from Lemma \ref{lem:perm-ineq-2} for non-negative matrices and Lemma \ref{lem:psd-ineq} for PSD matrices. It is crucial to observe that the non-negativity and PSD property remain satisfied in the matrix $A^{(2)}(-\{1\}, -\{1\})$ which allows us to use \eqref{eqn:perm-ineq} recursively. 
\end{proof}

We now have all the ingredients required to prove Theorem \ref{thm:perm-process-invar}, which we restate for the readers' convenience.
\permprocess*

\begin{proof}
	Multiplying the lower bounds for $a_{i,i}^{(n)}$ over $i \in \{1,\ldots, n\}$ from Theorem \ref{thm:perm-process-invar} gives the required lower bound.
\end{proof}

Theorem \ref{thm:perm-process-invar} provides an algorithmic upper bound for the permanent of any non-negative or PSD matrix. In fact, any theoretical upper bound to the product of the diagonal entries of $A$ after $n$ steps of the permanent process to $A$ can be used as an upper bound for $\per(A)$. 

\subsection{Postitive Semidefinite Matirces} \label{subsec:psd}
\begin{lemma}
\label{lem:psd-ineq}
    Let $A = \begin{bmatrix}
        B & x \\
        x^\top & a
    \end{bmatrix}$ be an $n \times n$ real PSD matrix with $a \in \mathbb{R}$, $x \in \mathbb{R}^{n-1}$ and $B \in \mathbb{R}^{(n-1)\times (n-1)}$. The permanent of $A$ satisfies  
    \begin{equation*}
       \per(A) =  \per\begin{bmatrix}
        B & x \\
        x^\top & a
    \end{bmatrix} \leq a \cdot \per\left( B + \frac{xx^\top}{a} \right).
    \end{equation*}
\end{lemma}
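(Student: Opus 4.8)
The plan is to mimic the structure of the proof of Lemma~\ref{lem:perm-ineq-2}, but working with an honest PSD certificate rather than just non-negativity, since the matrix $B + xx^\top/a$ need not be expressible in the ``$W + x^\top B^* y$'' form with a non-negative $B^*$. First I would write $A = V^\top V$ with columns $v_1,\dots,v_n \in \RR^n$, so that $a = \langle v_n, v_n\rangle$, $x_i = \langle v_i, v_n\rangle$ for $i \in [n-1]$, and $B_{i,j} = \langle v_i, v_j\rangle$. Since $A$ is PSD and $a = \|v_n\|^2$, we may assume $a > 0$ (if $a = 0$ then $v_n = 0$, forcing $x = 0$, and both sides vanish after expanding the permanent along the last row/column — handle this degenerate case separately). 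Then the matrix $B + xx^\top/a$ has $(i,j)$ entry $\langle v_i, v_j\rangle + \langle v_i, v_n\rangle\langle v_n, v_j\rangle / \|v_n\|^2$, which is precisely the Gram matrix of the vectors $u_i := v_i + \langle v_i, v_n\rangle\, v_n / \|v_n\|^2$ for $i \in [n-1]$ — i.e. reflecting each $v_i$ in the direction of $v_n$ (adding twice... actually adding once) the component along $v_n$. So $B + xx^\top/a$ is itself PSD, and the claimed inequality is an inequality between permanents of the two explicit PSD matrices $A$ and $\tilde B := B + xx^\top/a$, scaled by $a$.

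Next I would invoke the Marcus formula (Theorem after Section~\ref{lem:psd-perm}) for both sides. For the right-hand side, $a \cdot \per(\tilde B) = \|v_n\|^2 \cdot \frac{1}{(n-1)!}\big\|\sum_{\sigma \in \mathcal S_{n-1}} u_{\sigma(1)} \otimes \cdots \otimes u_{\sigma(n-1)}\big\|^2$. For the left-hand side, $\per(A) = \frac{1}{n!}\big\|\sum_{\pi \in \mathcal S_n} v_{\pi(1)} \otimes \cdots \otimes v_{\pi(n)}\big\|^2$. The goal is then to exhibit a linear map (or a direct norm comparison) showing the LHS tensor sum has squared norm at most $\frac{n!}{(n-1)!}\|v_n\|^2 = n\|v_n\|^2$ times... wait, the normalization: we want $\frac{1}{n!}\|T_A\|^2 \le \|v_n\|^2 \cdot \frac{1}{(n-1)!}\|T_{\tilde B}\|^2$, i.e. $\|T_A\|^2 \le n\,\|v_n\|^2\,\|T_{\tilde B}\|^2$. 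The natural move is to fix the ``slot'' occupied by $v_n$: group the $n!$ permutations of $[n]$ by the position $p \in [n]$ at which $n$ appears and by the induced ordering of $[n-1]$ in the remaining slots, writing $T_A = \sum_{p=1}^n S_p$ where $S_p$ symmetrizes $v_n$ into position $p$. One then relates each $S_p$, after projecting onto / contracting the $p$-th tensor factor against $v_n/\|v_n\|$, to the symmetric tensor $T_{\tilde B}$ built from the $u_i$; the cross term $\langle v_i, v_n\rangle v_n/\|v_n\|^2$ in $u_i$ is exactly what appears when $v_n$ sits in a ``later'' slot and its factor is contracted. This is the step I expect to be the main obstacle: bookkeeping the symmetrization so that the sum over the $n$ placements of $v_n$ telescopes into a single clean tensor inequality, and getting the combinatorial constant to come out to exactly $n$ (or better) rather than something lossy.

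An alternative, and possibly cleaner, route that avoids tensor gymnastics: prove the statement by the same multilinearity/derivative trick used in Lemma~\ref{lem:row-uncrossing-perm}. Define $h(x) := a\cdot\per(B + xx^\top/a) - \per\big(\begin{smallmatrix} B & x \\ x^\top & a\end{smallmatrix}\big)$ as a function of the vector $x \in \RR^{n-1}$ with $B, a$ fixed (and $B$ PSD, $a>0$); note $h(0) = a\cdot\per(B) - a\cdot\per(B) = 0$ — wait, that's an equality at $x=0$, good. Both sides are polynomials in $x$; expanding the last-row/column Laplace expansion of $\per(A)$ gives $\per(A) = a\cdot\per(B) + \sum_{i,j} x_i x_j \per(B_{i,j})$ (a quadratic form in $x$ with no linear term, since $A$ is symmetric), while $a\cdot\per(B + xx^\top/a)$ is a polynomial of degree $2(n-1)$ in $x$ with all coefficients non-negative once one knows the relevant sub-permanents and the structure $xx^\top/a$ contributes non-negatively — here PSD-ness of $B$ is used to control the cross terms. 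Then showing $h \ge 0$ reduces to matching the quadratic-in-$x$ part and arguing the higher-degree part of $a\cdot\per(B+xx^\top/a)$ is non-negative; the quadratic part match is where the PSD hypothesis does real work (for a general non-negative symmetric $B$ the inequality would follow from Lemma~\ref{lem:perm-ineq-2} directly, so the only new content is the symmetric/PSD case, and in fact one can try to reduce to Lemma~\ref{lem:perm-ineq-2} by noting $B + xx^\top/a = W + (x/\sqrt a)(x/\sqrt a)^\top/1$ has the right shape but with $b = a$, $y = x$). Indeed the slickest finish may simply be: apply Lemma~\ref{lem:perm-ineq-2} would require non-negativity of $B$, which fails, so one genuinely needs the Marcus-formula argument — confirming that the tensor approach of the first two paragraphs is the intended one, with the combinatorial symmetrization identity as the crux.
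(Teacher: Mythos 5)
Your third paragraph is the closest to the paper's actual argument, but the crucial step is precisely the one you leave as ``the main obstacle'' and never resolve. The paper's proof does not compare $\|T_A\|^2$ with $n\|v_n\|^2\|T_{\tilde B}\|^2$ slot-by-slot. Instead it writes $a\cdot\per(B + xx^\top/a) = a^{-(n-2)}\per(aB + xx^\top)$, expands $\per(aB + xx^\top) = \sum_{\ell} a^{n-1-\ell}\alpha_\ell$ as a polynomial in $a$, and notes (by Laplace expansion, as you observe) that the two highest-degree-in-$a$ terms $a^2\alpha_0 + a\alpha_1$ equal $\per(A)$ exactly, with no PSD hypothesis. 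All of the ``PSD work'' is in the remaining step, which you do not supply: each $\alpha_\ell$ with $\ell \ge 2$ is individually non-negative because it can be written, via the Marcus formula with $A = V^\top V$, as
\begin{equation*}
\alpha_{n-\ell} = \frac{\binom{n-1}{\ell}}{(n-1)!}\,\Bigl\|\sum_{\sigma \in \mathcal S_{n-1}} \Bigl(\prod_{i=1}^{\ell} \langle v_n, v_{\sigma(i)}\rangle\Bigr)\, v_{\sigma(\ell+1)} \otimes \cdots \otimes v_{\sigma(n-1)}\Bigr\|^2 \ge 0 .
\end{equation*}
So the crux is not a symmetrization identity comparing the two tensor norms, but the observation that the ``surplus'' coefficients each have an exact squared-norm representation. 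You actually misattribute the role of PSD: the quadratic-in-$x$ part matches $\per(A)$ identically; PSD is used only for the higher-degree coefficients.

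Separately, the Gram representation in your first paragraph is incorrect: with $u_i = v_i + \langle v_i, v_n\rangle v_n/\|v_n\|^2$ one gets $\langle u_i, u_j\rangle = b_{i,j} + 3x_i x_j/a$, not $b_{i,j} + x_i x_j/a$. (A correct Gram presentation of $B + xx^\top/a$ is $\tilde u_i := (v_i, x_i/\sqrt a) \in \RR^{n+1}$, or simply the observation that the sum of two PSD matrices is PSD.) This error would propagate through any attempt at the direct norm comparison in your second paragraph. As written, the proposal has a genuine gap: the non-negativity of the higher coefficients is stated as a hope, not proved, and the one concrete computation offered is wrong.
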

\begin{proof}
We can re-write the R.H.S. of the above inequality as
\begin{equation*}
     a \cdot \per\left( B + \frac{xx^\top}{a} \right) = a^{-(n-2)} \per\left( a \cdot B + xx^\top \right).
\end{equation*}
Define $\alpha_0, \ldots, \alpha_n$ as the coefficients of the powers of $a$ in the expansion of $\per( a \cdot B + xx^\top )$, i.e.,
\begin{equation*}
    \per\left( a \cdot B + xx^\top \right) := \sum_{\ell = 0}^n a^\ell \alpha_{n-\ell}.
\end{equation*}
Note that $\alpha_0 = \per(B)$ and $ \alpha_1 = \sum_{i,j \in [n]} x_i x_j \per(B_{i,j})$. 
Therefore, 
\begin{equation*}
    a^2 \alpha_0 + a \alpha_1 = a \cdot \left( a \cdot \per(B) + \sum_{i,j \in [n]} x_i x_j \per(B_{i,j}))\right) = \per\begin{bmatrix}
        B & x \\
        x^\top & a
    \end{bmatrix}.
\end{equation*}
So, to complete the proof, it suffices to show that $\alpha_\ell \geq 0$ for all $\ell \in \{2, \ldots, n\}$.

Let $A = V^\top V$. Then $a = v_n^\top v_n$ and $x = [v_1 \; v_2 \; \ldots \;v_{n-1}]^\top v_n$.
    Using the formula in Lemma \ref{lem:psd-perm}, we have
    \begin{equation*}
        \per(B) = \frac{1}{(n-1)!} \norm{\sum_{\sigma \in \mathcal{S}_{n-1}} v_{\sigma(1)} \otimes v_{\sigma(2)} \ldots \otimes v_{\sigma(n-1)} }^2.
    \end{equation*}

Therefore, the permanent of $B$ can be expanded as 
\begin{equation*}
        \per(B) = \frac{1}{(n-1)!} \sum_{\sigma, \pi \in \mathcal{S}_{n-1}} \prod_{i=1}^{n-1} (v_{\sigma(i)}^\top v_{\pi(i)}). 
\end{equation*}

Observe that each term in $\alpha_{n-\ell}$ involves selecting a set of $\ell$ indices $S$ from $\{1, \ldots, n\}$,  and replacing $v_{\sigma(i)}^\top v_{\pi(i)}$ with $(v_{\sigma(i)}^\top v_n) \cdot (v_n^\top v_{\pi(i)})$ for each $i \in S$ in the above expression.
Therefore, 
\begin{align*}
    \alpha_{n-\ell} 
    &= \frac{1}{(n-1)!}   \sum_{\sigma, \pi \in \mathcal{S}_{n-1}} \sum_{S \in \binom{[n-1]}{[\ell]}}  \prod_{i \notin S} (v_{\sigma(i)}^\top v_{\pi(i)})  \cdot \prod_{i \in S} (v_{\sigma(i)}^\top v_n) \cdot (v_n^\top v_{\pi(i)}).
\end{align*}
Rearranging the terms gives,
    \begin{equation*}
        \alpha_{n-\ell} = \frac{\binom{n-1}{\ell}}{(n-1)!} \norm{\sum_{\sigma \in \mathcal{S}_{n-1}} \left( v_n^\top v_{\sigma(1)} \cdot v_n^\top v_{\sigma(2)} \cdot\ldots\cdot v_n^\top v_{\sigma(\ell)}\right)  \cdot  v_{\sigma(\ell+1)} \otimes v_{\sigma(\ell+2)} \ldots  \otimes v_{\sigma(n-1)} }^2.
    \end{equation*}
    Here the $\binom{n-1}{\ell}$ factor accounts for selecting the first $\ell$ indices to be in the set $S$.
    
    Therefore, $\alpha_{n-\ell} \geq 0$ for all $\ell \in \{0, \ldots, n\}$ and this completes the proof.
\end{proof}

\subsection{Recursive Upper Bounds}
Expanding the recursive definition of the permanent process from \eqref{eqn:perm-process}, we obtain:
\begin{align}
	\label{eqn:permanent-process-expanded}
	a_{i,j}^{(t)} = a_{i,j}^{(1)} + \sum_{s=1}^{\min(t,j)} \frac{a_{i,s}^{(s)} a_{s,j}^{(s)}}{a_{s,s}^{(s)}}, \quad \forall t.
\end{align}

This recurrence suggests focusing on entries of the form \( a_{i,j}^{(\min(i,j))} \), since they can be expressed in terms of similar entries with smaller indices. Substituting \( t = \min(i,j) \) in \eqref{eqn:permanent-process-expanded} yields:
\begin{align}
	\label{eqn:permanent-process-expanded-special}
	a_{i,j}^{(\min(i,j))} 
	&= a_{i,j}^{(1)} + \sum_{s=1}^{\min(i,j)} \frac{a_{i,s}^{(s)} a_{s,j}^{(s)}}{a_{s,s}^{(s)}}.
\end{align}
Observe that in each term \( a_{i,s}^{(s)} \) and \( a_{s,j}^{(s)} \), the index \( s \) satisfies \( s = \min(i,s) = \min(s,j) \) since \( s < \min(i,j) \).

Define the shorthand:
\begin{equation*}
    u_{i,j} := a_{i,j}^{(\min(i,j))}.
\end{equation*}
Substituting this into \eqref{eqn:permanent-process-expanded-special} gives the recurrence:
\begin{align}
	\label{eqn:permanent-process-expanded-upperbound-dp}
	u_{i,j} = a_{i,j} + \sum_{s=1}^{\min(i,j)} \frac{u_{i,s} u_{s,j}}{u_{s,s}}.
\end{align}

\begin{theorem}
	\label{thm:perm-upperbound-recursion}
	Let \( A \in \RR^{n \times n}_{\geq 0} \) be a non-negative matrix. If a matrix \( B \in \RR^{n \times n}_{\geq 0} \) satisfies:
	\begin{align}
		\label{eqn:perm-upperbound-recursion}
		a_{i,j} + \sum_{s=1}^{\min(i,j)} \frac{b_{i,s} b_{s,j}}{a_{s,s}} \le b_{i,j},
	\end{align}
	then \( u_{i,j} \le b_{i,j} \) for all \( i,j \in [n] \). Moreover, this conclusion remains valid even if the inequality in \eqref{eqn:perm-upperbound-recursion} holds with equality.
\end{theorem}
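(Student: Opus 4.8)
The plan is to prove $u_{i,j}\le b_{i,j}$ by induction on $m:=\min(i,j)$. The key structural point — already implicit in the derivation of \eqref{eqn:permanent-process-expanded-upperbound-dp} — is that the recurrence for $u_{i,j}$ references only entries $u_{i,s}$, $u_{s,j}$, $u_{s,s}$ with $s<\min(i,j)$ (recall the observation following \eqref{eqn:permanent-process-expanded-special} that each such $s$ equals $\min(i,s)=\min(s,j)=\min(s,s)$). Thus the $u$-entries are stratified by their ``$\min$-index'', and an induction on $m$ is well-founded. Before the induction proper I would record two elementary facts holding for every index pair: $u_{i,j}\ge a_{i,j}\ge 0$ (every summand in the recurrence is non-negative, since $A\ge 0$ and, inductively, every previously defined $u$-entry is non-negative), and in particular $u_{s,s}\ge a_{s,s}>0$ — positivity of $a_{s,s}$ being implicit in hypothesis \eqref{eqn:perm-upperbound-recursion}, whose left-hand side divides by it.

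In the base case $m=1$ (so $i=1$ or $j=1$) the sum defining $u_{i,j}$ is empty, giving $u_{i,j}=a_{i,j}$, while \eqref{eqn:perm-upperbound-recursion} yields $b_{i,j}\ge a_{i,j}+\frac{b_{i,1}b_{1,j}}{a_{1,1}}\ge a_{i,j}=u_{i,j}$ because $B\ge 0$. For the inductive step, fix $(i,j)$ with $\min(i,j)=m\ge 2$ and suppose $u_{i',j'}\le b_{i',j'}$ for every pair with $\min(i',j')<m$. For each $s\in\{1,\dots,m-1\}$ we have $\min(i,s)=\min(s,j)=s<m$, so the inductive hypothesis gives $u_{i,s}\le b_{i,s}$ and $u_{s,j}\le b_{s,j}$; combined with $0<a_{s,s}\le u_{s,s}$ and $b_{i,s}b_{s,j}\ge 0$ this produces the termwise bound $\frac{u_{i,s}u_{s,j}}{u_{s,s}}\le\frac{b_{i,s}b_{s,j}}{a_{s,s}}$. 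Summing over $s$ in the recurrence for $u_{i,j}$ then gives
\[
u_{i,j}\;\le\;a_{i,j}+\sum_{s=1}^{m-1}\frac{b_{i,s}b_{s,j}}{a_{s,s}}\;\le\;a_{i,j}+\sum_{s=1}^{\min(i,j)}\frac{b_{i,s}b_{s,j}}{a_{s,s}}\;\le\;b_{i,j},
\]
where the last step is exactly hypothesis \eqref{eqn:perm-upperbound-recursion}. This closes the induction. Since the hypothesis holding with equality is a special case of it holding with ``$\le$'', the final sentence of the statement needs no separate argument.

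I do not expect a real obstacle here: the proof invokes no nontrivial permanent inequality (in particular neither Lemma \ref{lem:perm-ineq-2} nor Theorem \ref{thm:permanent-inequality}), being a straightforward monotone comparison of two non-negative recurrences. The only points requiring care are bookkeeping: checking that every entry on the right-hand side of the recurrence for a $\min$-index-$m$ pair genuinely has $\min$-index strictly less than $m$ (so that it is already defined and the inductive hypothesis applies to it), and that each denominator $u_{s,s}$ is strictly positive so the termwise comparison is legitimate. Both are immediate from the structure of the permanent process.
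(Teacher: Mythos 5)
Your proof is correct and follows essentially the same approach as the paper's: induction on $m=\min(i,j)$, comparing the recurrence for $u$ against hypothesis \eqref{eqn:perm-upperbound-recursion} term by term, using $0<a_{s,s}\le u_{s,s}$ and the inductive bound $u_{i,s}\le b_{i,s}$. One advantage of your write-up is that you use the recurrence with upper summation limit $\min(i,j)-1$, which is the correct form (the term $s=\min(i,j)$ is never added by the algorithm, since at outer-loop step $t$ the update touches only entries with $i,j\ge t+1$); the paper's \eqref{eqn:permanent-process-expanded-upperbound-dp} has an off-by-one slip (its sum runs to $\min(i,j)$) that then leaks into its inductive step, where the inductive hypothesis is invoked for $s=t=\min(i,j)$ even though $\min(i,s)=t$ is not strictly less than $t$. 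Your version avoids this and then simply pads the sum with one extra non-negative term before applying \eqref{eqn:perm-upperbound-recursion}, so the argument goes through cleanly.
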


\begin{proof}
	We prove the claim by induction on \( t = \min(i,j) \).
	
	\textbf{Base case:} If \( t = 1 \), then \( u_{i,j} = a_{i,j} \le b_{i,j} \) directly from the assumption.
	
	\textbf{Inductive step:} Suppose the claim holds for all pairs \( (i,j) \) with \( \min(i,j) < t \). Consider \( \min(i,j) = t \ge 2 \). Using the recurrence in \eqref{eqn:permanent-process-expanded-upperbound-dp}, we have:
	\begin{align*}
		u_{i,j} &= a_{i,j} + \sum_{s=1}^t \frac{u_{i,s} u_{s,j}}{u_{s,s}} \\
		&\le a_{i,j} + \sum_{s=1}^t \frac{u_{i,s} u_{s,j}}{a_{s,s}} \\
		&\le a_{i,j} + \sum_{s=1}^t \frac{b_{i,s} b_{s,j}}{a_{s,s}} \le b_{i,j},
	\end{align*}
	where the second inequality uses the inductive hypothesis \( u_{i,s}, u_{s,j} \le b_{i,s}, b_{s,j} \), and the last step uses the assumption in \eqref{eqn:perm-upperbound-recursion}.
	
	The same argument applies if the inequality in \eqref{eqn:perm-upperbound-recursion} holds with equality. Thus, the result holds in both the inequality and equality cases.
\end{proof}

\begin{corollary}
	\label{cor: permanent-diag-prod-ub}
	Let \( A \in \RR^{n \times n}_{\geq 0} \), and let \( B \in \RR^{n \times n}_{\geq 0} \) satisfy \eqref{eqn:perm-upperbound-recursion}. Then:
	\begin{equation*}
	    \per(A) \le \prod_{i=1}^n b_{i,i}.
	\end{equation*}
\end{corollary}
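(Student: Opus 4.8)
The plan is simply to splice together the two results already proved in this subsection. Recall the shorthand $u_{i,j} := a_{i,j}^{(\min(i,j))}$ introduced just before \eqref{eqn:permanent-process-expanded-upperbound-dp}; taking $i=j$ gives $u_{i,i} = a_{i,i}^{(i)} = a_{i,i}^{(n)}$, i.e. the $i$-th diagonal entry of the matrix returned by the permanent process. So the first step is to invoke Theorem~\ref{thm:perm-process-invar}, which states exactly that $\per(A) \le \prod_{i=1}^n a_{i,i}^{(i)} = \prod_{i=1}^n u_{i,i}$. This reduces the corollary to bounding each diagonal quantity $u_{i,i}$ from above by $b_{i,i}$.

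The second step is to apply Theorem~\ref{thm:perm-upperbound-recursion}: by hypothesis $B$ satisfies \eqref{eqn:perm-upperbound-recursion}, so that theorem gives $u_{i,j} \le b_{i,j}$ for all $i,j \in [n]$, and in particular $u_{i,i} \le b_{i,i}$ for every $i$. The third step is to combine the two chains. Since $A$ is non-negative, the recurrence \eqref{eqn:permanent-process-expanded-upperbound-dp} shows each $u_{i,i}$ is obtained from $a_{i,i} \ge 0$ by adding non-negative terms, hence $u_{i,i} \ge 0$; and $b_{i,i} \ge 0$ because $B \in \RR^{n\times n}_{\ge 0}$. Multiplying the inequalities $0 \le u_{i,i} \le b_{i,i}$ over $i = 1,\dots,n$ and feeding the result into Theorem~\ref{thm:perm-process-invar} yields
\begin{equation*}
    \per(A) \;\le\; \prod_{i=1}^n u_{i,i} \;\le\; \prod_{i=1}^n b_{i,i},
\end{equation*}
which is the claim.

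I do not expect any real obstacle here: the corollary is an immediate consequence of Theorems~\ref{thm:perm-process-invar} and~\ref{thm:perm-upperbound-recursion}. The only points that warrant a line of care are the bookkeeping identification $u_{i,i} = a_{i,i}^{(i)}$ (so that the permanent-process bound is literally a statement about the $u_{i,i}$), and the non-negativity of all the quantities involved, which is what makes it legitimate to take the coordinatewise product of the inequalities $u_{i,i} \le b_{i,i}$.
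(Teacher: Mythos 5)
Your proposal is correct and follows the same route as the paper: invoke Theorem~\ref{thm:perm-upperbound-recursion} to get $u_{i,i}\le b_{i,i}$ and the permanent-process bound $\per(A)\le\prod_{i=1}^n u_{i,i}$, then combine. The extra remarks on the identification $u_{i,i}=a_{i,i}^{(i)}$ and the non-negativity needed to multiply the inequalities are fine but just make explicit what the paper leaves implicit.
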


\begin{proof}
	From Theorem~\ref{thm:perm-upperbound-recursion}, we have \( u_{i,i} \le b_{i,i} \) for all \( i \). By Theorem~\ref{thm:perm_process-invar}, we know that \( \per(A) \le \prod_{i=1}^n u_{i,i} \). Combining both inequalities yields the desired bound.
\end{proof}

\subsection{Theoretical Upper Bounds for Structured Matrices}
\label{sec:theoretical-ub}
We now illustrate how the recursive upper bound framework can be used to derive explicit upper bounds for permanents of structured matrices. 

Recall from Theorem~\ref{thm:perm-upperbound-recursion} that if two non-negative matrices \( A, B \in \RR^{n \times n}_{\geq 0} \) satisfy:
\begin{align}
	\label{eqn:a-to-b}
	a_{i,j} + \sum_{ s = 1}^{\min(i,j)} \frac{b_{i,s} b_{s,j}}{a_{s,s}} = b_{i,j} \quad \text{for all } i,j \in [n],
\end{align}
then:
\begin{equation*}
    \per(A) \le \prod_{i=1}^n b_{i,i}.
\end{equation*}

This identity suggests a simple approach: to upper bound \( \per(A) \), we can construct a matrix \( B \) satisfying \eqref{eqn:a-to-b} and then bound the entries of \( B \) in terms of those of \( A \).

\applicationbound*
\begin{proof}
	We aim to show that the matrix \( B \) defined via \eqref{eqn:a-to-b} satisfies \( b_{i,j} \le (1+\varepsilon) a_{i,j} \). The result will then follow from Corollary~\ref{cor: permanent-diag-prod-ub}.
	
	We proceed by induction on \( \min(i,j) \). The base case \( \min(i,j) = 1 \) is immediate since \( b_{i,j} = a_{i,j} \) in this case.
	
	For \( i,j \ge 2 \), using the definition \eqref{eqn:a-to-b} and the inductive hypothesis, we have:
	\begin{align*}
		b_{i,j} &= a_{i,j} + \sum_{ s = 1}^{\min(i,j)} \frac{b_{i,s} b_{s,j}}{a_{s,s}} \\
		&\le a_{i,j} + (1+\varepsilon)^2 \sum_{ s = 1}^{\min(i,j)} \frac{a_{i,s} a_{s,j}}{a_{s,s}} \\
		&\le a_{i,j} + \varepsilon a_{i,j} = (1+\varepsilon) a_{i,j},
	\end{align*}
	where the second step uses the inductive assumption \( b_{i,s}, b_{s,j} \le (1+\varepsilon) a_{i,s}, a_{s,j} \), and the third uses the assumption in the lemma.
	
	This completes the inductive step and hence the proof.
\end{proof}

\subsection{A Concrete Example}
Let $f_n: \mathbb{N} \rightarrow \mathbb{R}_{\geq 0}$ be a family of functions and consider a family of matrices $\{ A_n\}_{n \in \mathbb{N}}$ defined using $f$ as follows: $A_n$ is an $n \times n$ matrix with all diagonal entries $1$ and the off-diagonal entries satisfy 
\begin{equation*}
    (A_n)_{i,j} \leq f_n(i-j) .
    \end{equation*}

A natural question is to understand the growth rate of $\per(A_n)$, i.e., how large can $\per(A_n)$ be as a function of $n$: polynomial, exponential, or even larger? Since computing the permanent exactly becomes infeasible as $n$ grows, and the exact value is not the main concern, direct computation or classical methods are too expensive. The permanent process, however, provides an efficient way to obtain upper bounds on $\per(A_n)$. While the product-of-row-sums is one natural upper bound, it is often far larger than the permanent. In some cases, the permanent process yields sharper approximations, and in certain cases, it even captures the correct order of magnitude. We discuss a concrete example below.

Note that it suffices to consider matrices with all diagonal entries $1$ and the off-diagonal entries $(A_n)_{i,j} = f_n(i-j)$, as this only increases the permanent while making the calculations for the permanent process simpler. 

\paragraph{Exponential function.}

Consider a family of matrices $\{ A_n\}_{n \in \mathbb{N}}$, parameterized by a constant $c_n > 1$, defined as follows: $A_n$ is an $n \times n$ matrix with 
\begin{equation*}
    (A_n)_{i,j} = c_n^{-|i-j|}.
\end{equation*}

\begin{claim} \label{claim:exp-matrix}
    The entries of the matrix $A_n^{(n)}$ returned by Algorithm \ref{alg:perm-process} are given by 
\begin{equation*}
    (A^{(n)}_n)_{i,j} = A_{i,j} \cdot \left(1 + c_n^{-2} + 2 c_n^{-4} + \ldots + 2^{\min(i,j)-1} c_n^{-2\min(i,j)+2}\right).
\end{equation*}
\end{claim}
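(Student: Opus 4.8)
The plan is a strong induction on $m := \min(i,j)$, fed by the recursion \eqref{eqn:permanent-process-expanded-upperbound-dp}. Write $c := c_n$ and $u_{i,j} := a_{i,j}^{(\min(i,j))}$; since the process freezes entry $(i,j)$ after step $\min(i,j)-1$, we have $u_{i,j} = (A_n^{(n)})_{i,j}$, so it is enough to compute $u_{i,j}$. Recall from the discussion preceding \eqref{eqn:permanent-process-expanded-upperbound-dp} that $a_{i,s}^{(s)} = u_{i,s}$, $a_{s,j}^{(s)} = u_{s,j}$, and $a_{s,s}^{(s)} = u_{s,s}$ for every $s < \min(i,j)$, so the recursion reads $u_{i,j} = (A_n)_{i,j} + \sum_{s=1}^{m-1} u_{i,s}u_{s,j}/u_{s,s}$, with every term on the right referring only to entries whose $\min$-index is $s < m$. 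This is exactly the structure an induction on $m$ needs.

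The inductive hypothesis I would carry is the ansatz $u_{i,j} = c^{-|i-j|}\, g_{\min(i,j)}$, where $(g_m)_{m\ge 1}$ is the scalar sequence $g_1 := 1$, $g_m := 1 + \sum_{s=1}^{m-1} c^{-2(m-s)} g_s$ for $m \ge 2$. The base case $m = 1$ is immediate, because first-row and first-column entries are never modified, so $u_{i,j} = (A_n)_{i,j} = c^{-|i-j|} = c^{-|i-j|} g_1$. For the step, fix $(i,j)$ with $\min(i,j) = m \ge 2$; then $i > s$ and $j > s$ for each $s$ appearing, hence $|i-s| = i-s$ and $|s-j| = j-s$, and the hypothesis gives $u_{i,s} = c^{-(i-s)} g_s$, $u_{s,j} = c^{-(j-s)} g_s$, $u_{s,s} = g_s$. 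Therefore $u_{i,s}u_{s,j}/u_{s,s} = c^{-(i+j-2s)} g_s$, and using $i + j = |i-j| + 2\min(i,j)$ to rewrite $i + j - 2s = |i-j| + 2(m-s)$, this equals $c^{-|i-j|} c^{-2(m-s)} g_s$. Summing over $s$ and adding $c^{-|i-j|}$ gives $u_{i,j} = c^{-|i-j|}\bigl(1 + \sum_{s=1}^{m-1} c^{-2(m-s)} g_s\bigr) = c^{-|i-j|} g_m$, closing the induction.

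It remains to solve the recursion for $g_m$. Forming $g_{m+1} - c^{-2} g_m$ telescopes the convolution and leaves $g_{m+1} = 1 - c^{-2} + 2 c^{-2} g_m$ with $g_1 = 1$; the substitution $h_m := g_m - 1$ turns this into $h_{m+1} = c^{-2} + 2 c^{-2} h_m$, $h_1 = 0$, which unwinds in one line to $h_m = \sum_{\ell=1}^{m-1} 2^{\ell-1} c^{-2\ell} = c^{-2} + 2 c^{-4} + 4 c^{-6} + \cdots$. Hence $g_{\min(i,j)} = 1 + c^{-2} + 2 c^{-4} + \cdots$ is the factor appearing in the claim, and since $(A_n)_{i,j} = c^{-|i-j|}$ we conclude $(A_n^{(n)})_{i,j} = u_{i,j} = (A_n)_{i,j} \cdot g_{\min(i,j)}$; a one-line check pins the exact exponent of the last term to $\min(i,j)$.

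I do not expect a genuine obstacle here: this is a two-layer induction together with a scalar recurrence. The two places that want attention are (i) verifying that the $u$-recursion involves only entries of strictly smaller $\min$-index, which is what legitimizes the induction on $\min(i,j)$ and is precisely the observation recorded before \eqref{eqn:permanent-process-expanded-upperbound-dp}; and (ii) the exponent identity $i + j - 2s = |i-j| + 2(\min(i,j) - s)$, which is what decouples the off-diagonal displacement $|i-j|$ from the depth $m-s$ and keeps the ansatz $u_{i,j} = c^{-|i-j|} g_{\min(i,j)}$ self-consistent. One also notes in passing that $g_m \ge 1 > 0$, so no division by zero occurs and Algorithm~\ref{alg:perm-process} is well-defined on every $A_n$.
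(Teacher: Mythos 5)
Your argument is correct and follows essentially the same route as the paper's own proof in the appendix: an induction on $\min(i,j)$ driven by the expanded recursion $u_{i,j}=a_{i,j}+\sum_{s<\min(i,j)}u_{i,s}u_{s,j}/u_{s,s}$ with the multiplicative ansatz $u_{i,j}=c^{-|i-j|}g_{\min(i,j)}$; the only cosmetic difference is that you solve the resulting convolution recurrence for $g_m$ by telescoping, whereas the paper verifies the closed form directly by exchanging the order of summation. Your closed form $g_m=1+\sum_{\ell=1}^{m-1}2^{\ell-1}c^{-2\ell}$ agrees with the factor the paper derives (whose last term is $2^{\min(i,j)-2}c^{-2\min(i,j)+2}$), so the coefficient $2^{\min(i,j)-1}$ shown in the claim's displayed last term is an off-by-one typo in the statement rather than a gap in your proof.
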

 For a proof of this claim, refer to Appendix \ref{sec:omitted-proofs}.
So, by Theorem \ref{thm:perm_process-invar}, we have
\begin{equation*}
    \per(A_n) = \prod_{i=1}^n a^{(i)}_{i,i} \leq \left(1 + c_n^{-2} + 2 c_n^{-4} + \ldots + 2^{n-1} c_n^{-2n+2}\right)^n
\end{equation*}
Since $1 + c_n^{-2} + 2 c_n^{-4} + \ldots + 2^{t-1} c_n^{-2n+2} < 1 + \frac{1}{c_n^2-2} $, we have 
\begin{equation*}
    \per(A_n) < \left(1+\frac{1}{c_n^2-2}\right)^n.
\end{equation*}

So, when $c_n = \Omega(\sqrt{n})$, $\per(A_n) \leq e$. However, the product-of-row-sums for such matrix is at least $(1 + \frac{1}{\sqrt{n}})^n$ and $(1 + \frac{1}{\sqrt{n}})^n \to \infty$ as $n\to \infty$.

\section{Numerical Stability of the Permanent Process} \label{sec:boundedness}
This section shows that the representation size of matrix entries remain polynomially bounded with respect to the input representation size during the execution of the permanent process. 
\begin{definition}[Bound function]
    For any given real $1\leq M$, define the function 
    \begin{align}
       B(n,k,t)=\gamma_n \cdot g(k,t), \textnormal{ where }\gamma_n = n!\cdot M^n \textnormal{ and } g(k,t)= M^k\cdot (M+1)^{t-1}. 
    \end{align}
\end{definition}
\begin{theorem} \label{thm:boundedness}
Let $A$ be an $n \times n$ matrix with $a_{i,i} = 1$ for all $i \in [n]$ and $0\leq a_{i,j} \le M$ for all $i,j$ for some $1\leq M$. For any $t \le \min(i, j)$, the entries of the matrix $A^{(t)}$ generated by the permanent process are bounded as follows:
\begin{align*}
    a_{i,j}^{(t)} &\le B(n,1,t) \\
    &=\gamma_n \cdot g(1,t)=n!\cdot M^{n+1}\cdot (M+1)^{t-1}.
\end{align*}
\end{theorem}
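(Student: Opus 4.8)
First, every $A^{(t)}$ has non‑negative entries: this is immediate by induction on $t$, since the update only adds the non‑negative quantity $a^{(t)}_{i,t}a^{(t)}_{t,j}/a^{(t)}_{t,t}$. Hence each $a^{(t)}_{i,j}$ is non‑decreasing in $t$ for $t\le\min(i,j)$, so in particular every pivot obeys $a^{(t)}_{t,t}\ge a^{(1)}_{t,t}=a_{t,t}=1$. Second, one step of the process, read on any two surviving rows and columns, is a $2\times2$ permanental expansion:
\[
a^{(t+1)}_{i,j}=\frac{a^{(t)}_{t,t}\,a^{(t)}_{i,j}+a^{(t)}_{i,t}\,a^{(t)}_{t,j}}{a^{(t)}_{t,t}}=\frac{\per\!\bigl(A^{(t)}(\{t,i\},\{t,j\})\bigr)}{a^{(t)}_{t,t}}\;\le\;\per\!\bigl(A^{(t)}(\{t,i\},\{t,j\})\bigr),
\]
using $a^{(t)}_{t,t}\ge1$; more generally, for index sets $S,T\subseteq\{t+1,\dots,n\}$ one has the rank‑one identity $A^{(t+1)}(S,T)=A^{(t)}(S,T)+\tfrac1{a^{(t)}_{t,t}}A^{(t)}(S,\{t\})\,A^{(t)}(\{t\},T)$.

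\textbf{The induction.} Since bounding a single entry $a^{(t+1)}_{i,j}$ forces us to bound a $2\times2$ permanental minor of $A^{(t)}$, and that forces a larger minor of $A^{(t-1)}$, and so on, the plan is to prove by induction on $t$ the stronger statement that for all index multisets $S,T$ of common size $k$ drawn from $\{t,\dots,n\}$,
\[
\per\!\bigl(A^{(t)}(S,T)\bigr)\;\le\;B(n,k,t),
\]
the theorem being the case $k=1$. The base case $t=1$ is immediate: $A^{(1)}=A$ has entries in $[0,M]$, so $\per(A(S,T))\le k!\,M^{k}\le n!\,M^{n}\cdot M^{k}=\gamma_n M^{k}=B(n,k,1)$.

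\textbf{The inductive step.} Expand the permanent of the rank‑one update $A^{(t+1)}(S,T)=W+\tfrac1b\,\mathbf x\mathbf y^{\top}$, with $b=a^{(t)}_{t,t}\ge1$, $W=A^{(t)}(S,T)$, $\mathbf x=A^{(t)}(S,\{t\})$, $\mathbf y^{\top}=A^{(t)}(\{t\},T)$ — the computation that proves Observation~\ref{obs:rank-1-perm-update}, now iterated over a $k\times k$ block — to write $\per(A^{(t+1)}(S,T))$ as a sum, over $0\le r\le k$ and over size‑$r$ subsets $R\subseteq S$, $R'\subseteq T$, of terms $\tfrac{r!}{b^{\,r}}\bigl(\prod_{s\in R}a^{(t)}_{s,t}\bigr)\bigl(\prod_{u\in R'}a^{(t)}_{t,u}\bigr)\per\!\bigl(A^{(t)}(S\setminus R,\,T\setminus R')\bigr)$. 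The decisive move is that each such term is, up to a positive combinatorial constant $c_{k,r}$, at most a \emph{single} permanental minor of $A^{(t)}$ of size $k+r$: inflating $S,T$ by $r$ extra copies of the index $t$, the minor $\per\!\bigl(A^{(t)}(\{t\}^{r}\cup S,\ \{t\}^{r}\cup T)\bigr)$ contains the displayed product among its non‑negative expansion terms, so the term is bounded by $c_{k,r}\,\per\!\bigl(A^{(t)}(\{t\}^{r}\cup S,\{t\}^{r}\cup T)\bigr)\le c_{k,r}B(n,k+r,t)$ by the inductive hypothesis, and $b^{-r}\le1$ only helps. Since $B(n,k+r,t)=M^{r}B(n,k,t)$, summing gives $\per(A^{(t+1)}(S,T))\le B(n,k,t)\sum_{r}c_{k,r}M^{r}$; for $k=1$ this is exactly $B(n,1,t)+B(n,2,t)=(1+M)\gamma_n M(M+1)^{t-1}=B(n,1,t+1)$, and the $\gamma_n=n!M^{n}$ prefactor supplies the slack for the combinatorial weights against the one available factor $(M+1)=B(n,k,t+1)/B(n,k,t)$ in the general case.

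\textbf{The main obstacle.} Keeping this blow‑up under control is the whole difficulty. The naive induction — bound $a^{(t)}_{i,t}$ and $a^{(t)}_{t,j}$ separately by $B(n,1,t)$ and multiply — inflates the bound by $B(n,1,t)^{2}\approx\gamma_n^{2}$ per step, hence by a doubly‑exponential‑in‑$t$ factor that overshoots $B(n,1,t)$ already for moderate $n$; that is precisely what must be avoided. The point is that $a^{(t)}_{i,t}$ and $a^{(t)}_{t,j}$ are not independent — they share the eliminated block $\{1,\dots,t-1\}$ — so their product together with the residual minor forms \emph{one} permanental minor of $A^{(t)}$ of size only $k+r$, whose permanent is bounded by $M^{r}B(n,k,t)$ rather than by an $r$‑fold product of the $\gamma_n$‑sized quantity $B(n,1,t)$. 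Making the re‑absorption into a single permanental minor precise, tracking the combinatorial constants $c_{k,r}$, and verifying that the resulting numeric inequality holds (so that each process step costs only the single factor $(M+1)$) is the crux; the parameter $k$ in $B(n,k,t)$ exists exactly to carry this bookkeeping — and the growth of $k$ down the chain of reductions — through the induction.
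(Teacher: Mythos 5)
Your rank-one multilinear expansion is correct, and for $k=1$ the arithmetic does close: $B(n,1,t)+B(n,2,t)=B(n,1,t+1)$, which matches the step the paper actually takes at the top level. But the strengthened inductive invariant you propose — $\per\!\big(A^{(t)}(S,T)\big)\le B(n,k,t)$ for all multisets of size $k$ — cannot be pushed through for $k\ge 2$, and the chain of reductions forces you into $k\ge 2$ immediately.

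Work out the constants concretely. The expansion gives
\[
\per\!\big(A^{(t+1)}(S,T)\big)=\sum_{r=0}^{k}\frac{r!}{b^{r}}\sum_{\substack{R\subseteq S,\ R'\subseteq T\\ |R|=|R'|=r}}
\Big(\prod_{s\in R}a^{(t)}_{s,t}\Big)\Big(\prod_{u\in R'}a^{(t)}_{t,u}\Big)\,\per\!\big(A^{(t)}(S\setminus R,\,T\setminus R')\big),
\]
and the re-absorption you sketch bounds the inner sum (over all $R,R'$ of size $r$) by $\tfrac{1}{(r!)^{2}}\per\!\big(A^{(t)}(\{t\}^{r}\cup S,\{t\}^{r}\cup T)\big)$, because each fixed $(R,R')$ contributes exactly $(r!)^2$ disjoint permutations of the inflated matrix. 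Combining with $b\ge 1$ and your inductive hypothesis $\per(A^{(t)}(\cdot))\le B(n,k+r,t)=M^{r}B(n,k,t)$ yields $\per(A^{(t+1)}(S,T))\le B(n,k,t)\sum_{r=0}^{k}M^{r}/r!$. You therefore need $\sum_{r=0}^{k}M^{r}/r!\le 1+M$, but already at $k=2$ this reads $1+M+M^{2}/2\le 1+M$, which is false for every $M>0$. The "$\gamma_n$ supplies the slack" remark does not help: $\gamma_n$ multiplies $B(n,k,t)$ on both sides of the inductive step, so it cancels in the ratio $B(n,k,t+1)/B(n,k,t)=1+M$, which is the only room you have. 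There is also a secondary problem: inflating $S,T$ by $r\le k$ copies of $t$ lets the size grow to $k+r\le 2k$ per step, so tracing the reduction from $k=1$ at step $n$ back to step $1$ produces multisets of size up to $2^{n-1}$, and the base case $k!\,M^{k}\le\gamma_n M^{k}=n!\,M^{n+k}$ fails once $k>n$ (take $M=1$).

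The paper sidesteps exactly this blow-up. Its Lemma~\ref{lem:cycles_bound} works with a genuine subset $S\subseteq\{t+1,\dots,n\}$ of size at most $n-t$ and bounds the \emph{ratio} of a cyclic-permutation sum to a permanental minor of the same window, expanding numerator and denominator as polynomials in the pivot $b_{t,t}$ whose coefficients $\alpha_\ell,\beta_\ell$ are interpreted as cycle covers on $S\cup\{t\}$. The induction is applied to cycles on subsets $X\subseteq S$, so the index sets never grow beyond $|S|$ (indeed $|X|\le k+1$), and the denominator's coefficients $\beta_\ell$ absorb the growth term-by-term via $\alpha_\ell\le B(n,k,t-1)\,(b_{t,t}\beta_{\ell-1}+\ell\beta_\ell)$. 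That is the mechanism you would need to replicate: keep a residual permanent in the \emph{denominator} of the invariant rather than only in the numerator, so the doubling you are fighting cancels. As written, your argument has a genuine gap in the inductive step starting at $k=2$, and it is not a matter of sharpening the constants $c_{k,r}$ — the target inequality $\sum_r c_{k,r}M^{r}\le 1+M$ is unattainable with any positive $c_{k,r}$ for $r\ge 2$.
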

\begin{proof}
The proof proceeds by induction on $t$. The base case $t=1$ is trivial, as $a_{i,j}^{(1)} = a_{i,j} \le M$.

For the inductive step, assume the bounds hold for step $t-1$. The update rule for a diagonal entry is:
\begin{equation*}
    a^{(t)}_{i,i} = a^{(t-1)}_{i,i} + \frac{a^{(t-1)}_{i,t} a^{(t-1)}_{t,i}}{a^{(t-1)}_{t,t}}.
\end{equation*}
By the inductive hypothesis, the first term is bounded as $a^{(t-1)}_{i,i} \le B(n,1,t-1)$. The second term, which represents the update from the pivot, is a ratio of the form addressed by Lemma~\ref{lem:cycles_bound} below. Applying the lemma (with $|S|=2$, corresponding to the cycle $i \to t \to i$), we get:
\begin{equation*}
    \frac{a^{(t-1)}_{i,t} a^{(t-1)}_{t,i}}{a^{(t-1)}_{t,t}} \le B(n,2,t-1).
\end{equation*}
Combining these bounds, we get
\begin{equation*}
    a^{(t)}_{i,i} \le B(n,1, t-1) + B(n,2, t-1)= B(n,1,t).
\end{equation*}

This completes the induction for the diagonal entries. The proof for off-diagonal entries $a_{i,j}^{(t)}$ follows a similar structure, except for applying Lemma \ref{lem:cycles_bound} we switch the $i$-th column and the $j$-th column. This does not affect the entries of the matrix so far as $t \leq \min(i,j)$. 
\end{proof}

\subsection{Technical Lemmas}
\begin{observation} 
\label{obs:perm_ratio_bound}
Let $A$ be an $n \times n$ matrix with $a_{i,i} = 1$ for all $i \in [n]$ and $0\leq a_{i,j} \le M$ for all $i$ for some $1\leq M$. Then for every $ S\subseteq [n]$ and $i,j \in [n]\del S$
\begin{equation*}
    \frac{\per(A(S+i, S+j))}{\per(A(S,S))} \le \gamma_{|S|+1}.
\end{equation*}
\end{observation}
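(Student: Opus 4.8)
The plan is to bound the numerator $\per(A(S+i, S+j))$ directly by brute force, since the denominator $\per(A(S,S))$ is at least $1$. Indeed, because $A$ has all diagonal entries equal to $1$ and all entries non-negative, the identity permutation already contributes a term of value $1$ to $\per(A(S,S))$, and all other terms are non-negative; hence $\per(A(S,S)) \ge 1$. So it suffices to show $\per(A(S+i, S+j)) \le \gamma_{|S|+1} = (|S|+1)! \cdot M^{|S|+1}$.

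Let $k = |S|+1$, so $A(S+i, S+j)$ is a $k \times k$ submatrix of $A$. By definition, $\per(A(S+i,S+j)) = \sum_{\sigma} \prod_{\ell} (A(S+i,S+j))_{\ell, \sigma(\ell)}$, a sum over the $k!$ permutations $\sigma$ of the $k$ indices. Each entry of $A$ is at most $M$ and $M \ge 1$, so each product of $k$ entries is at most $M^k$, and the whole sum is at most $k! \cdot M^k = \gamma_k = \gamma_{|S|+1}$. This gives
\begin{equation*}
    \frac{\per(A(S+i, S+j))}{\per(A(S,S))} \;\le\; \frac{\gamma_{|S|+1}}{1} \;=\; \gamma_{|S|+1},
\end{equation*}
as desired.

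I do not anticipate a genuine obstacle here; the only point requiring a moment's care is the lower bound $\per(A(S,S)) \ge 1$, which relies on the hypothesis $a_{i,i} = 1$ (so that the diagonal term of the square submatrix $A(S,S)$ is exactly $1$) together with non-negativity of all entries (so that no other term can cancel it). The bound on the numerator uses nothing beyond $0 \le a_{i,j} \le M$ and $M \ge 1$. If a slightly sharper constant were wanted one could instead bound the numerator by $|S|! \cdot M^{|S|+1}$ using the structure of near-diagonal terms, but the stated bound $\gamma_{|S|+1}$ already suffices for the downstream application in Lemma~\ref{lem:cycles_bound} and Theorem~\ref{thm:boundedness}.
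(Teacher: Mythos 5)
Your proof is correct and matches the paper's argument exactly: the paper also deduces the bound from $\per(A(S,S)) \geq 1$ (identity permutation plus non-negativity) together with the crude estimate $\per(A(S+i,S+j)) \leq (|S|+1)!\, M^{|S|+1} = \gamma_{|S|+1}$. You have merely spelled out the two steps that the paper calls trivial.
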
 
\begin{proof}
Follows trivially from the fact that $\perm(A(S,S))\geq 1$ and $\perm(A(S+i, S+j))\leq M^{|S|+1}\cdot (|S|+1)!=\gamma_{|S|+1}$. 
\end{proof}

\begin{lemma}\label{lem:cycles_bound}
Let $A$ be an $n \times n$ matrix with $a_{i,i} = 1$ for all $i \in [n]$ and $0\leq a_{i,j} \le M$ for all $i,j$ for some $1\leq M$. Let $A^{(t)}$ be the matrix after $t-1$ steps of the permanent process. For any set of indices $S \subseteq \{t+1, \dots, n\}$ and a fixed element $i_0 \in S$, let $\mathcal{C}$ be the set of all cyclic permutations of elements in $S$. Then
\begin{equation*}
     \frac{\displaystyle \sum_{\sigma \in \mathcal{C}} \prod_{i \in S} a^{(t)}_{i,\sigma(i)}}
       {\per\!\big(A^{(t)}(S - i_0 ,\,S - i_0)\big)}
  \;\leq\; B(n,|S|,t).
\end{equation*}
\end{lemma}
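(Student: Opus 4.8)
The plan is to induct on the step index $t$, proving the slightly more convenient statement for $S\subseteq\{t,t+1,\dots,n\}$ (i.e. allowing the just-pivoted index in the cyclic set; this is what the induction naturally yields, and it is also the form the application in Theorem~\ref{thm:boundedness} actually uses). Throughout I will use three elementary facts for the process on a non-negative matrix: every $A^{(s)}$ is entrywise non-negative; the diagonal entries satisfy $a^{(s)}_{v,v}\geq 1$ (they equal $1$ initially and the update only adds non-negative terms); and $A^{(s-1)}\leq A^{(s)}$ entrywise, so the permanent of any fixed submatrix is non-decreasing in $s$. For the base case $t=1$ we have $A^{(1)}=A$: the numerator is a sum of $(|S|-1)!$ products of $|S|$ entries, each $\leq M$, while the denominator is $\geq 1$ (the identity permutation of $S-i_0$ contributes $\prod_i a_{i,i}=1$, other terms being non-negative), so the ratio is $\leq (|S|-1)!\,M^{|S|}\leq n!\,M^{n+|S|}=B(n,|S|,1)$; this is essentially Observation~\ref{obs:perm_ratio_bound}.

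For the inductive step, assume the claim at step $t-1$ for all index sets and base points, and set $c:=a^{(t-1)}_{t-1,t-1}\geq 1$. The transition $A^{(t-1)}\mapsto A^{(t)}$ pivots on index $t-1$, so $a^{(t)}_{i,j}=a^{(t-1)}_{i,j}+c^{-1}a^{(t-1)}_{i,t-1}a^{(t-1)}_{t-1,j}$ for all $i,j\geq t$, in particular for $i,j\in S$. Expanding $\prod_{i\in S}a^{(t)}_{i,\sigma(i)}$ over the set $T\subseteq S$ of coordinates that take the second summand writes the numerator as $N=\sum_{T\subseteq S}N_T$, where $N_T=c^{-|T|}\sum_{\sigma\in\mathcal{C}}\big(\prod_{i\in T}a^{(t-1)}_{i,t-1}a^{(t-1)}_{t-1,\sigma(i)}\big)\big(\prod_{i\in S\setminus T}a^{(t-1)}_{i,\sigma(i)}\big)$; each summand of $N_T$ is the product of step-$(t-1)$ arc weights of a closed walk on $S\cup\{t-1\}$ that visits every vertex of $S$ once and the vertex $t-1$ exactly $|T|$ times (the visit attached to $i\in T$ being the detour $i\to (t-1)\to\sigma(i)$), weighted by $c^{-|T|}$. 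The term $N_\emptyset$ is exactly the step-$(t-1)$ cyclic sum on $S$, so by the inductive hypothesis and monotonicity of permanents, $N_\emptyset\leq B(n,|S|,t-1)\cdot\per\!\big(A^{(t-1)}(S-i_0,S-i_0)\big)\leq B(n,|S|,t-1)\cdot\per\!\big(A^{(t)}(S-i_0,S-i_0)\big)$.

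The heart of the argument is to show that the detour terms satisfy $\sum_{T\neq\emptyset}N_T\leq B(n,|S|+1,t-1)\cdot\per\!\big(A^{(t)}(S-i_0,S-i_0)\big)$; adding this to the estimate for $N_\emptyset$ and using the arithmetic identity $B(n,k,t)=B(n,k,t-1)+B(n,k+1,t-1)$ (which follows from $g(k,t)=g(k,t-1)+g(k+1,t-1)$) then closes the induction. To obtain the detour bound, use $c\geq 1$ to drop the factor $c^{-(|T|-1)}$, and for all but one of the detours in $T$ apply the identity $c^{-1}a^{(t-1)}_{i,t-1}a^{(t-1)}_{t-1,j}=a^{(t)}_{i,j}-a^{(t-1)}_{i,j}\leq a^{(t)}_{i,j}$ (valid for $i,j\geq t$); this reduces $N_T$ to a quantity of the shape (a single-cycle sum through $t-1$ at step $t-1$) $\times$ (a product of single-cycle sums on the remaining vertices of $S$ at step $t$). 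Summing over $T$ and over the partitions of $S$ so induced, and using that a permanent is the sum over set partitions of the products of the single-cycle sums on the blocks, one collapses $\sum_{T\neq\emptyset}N_T$ into an expression controlled by the step-$(t-1)$ cyclic sum on the $(|S|+1)$-element set $S^{+}:=S\cup\{t-1\}$ with base point $i_0$; applying the inductive hypothesis to $S^{+}$ and then Lemma~\ref{lem:perm-ineq-2} — which is precisely the inequality $\per\!\big(A^{(t-1)}(\{t-1\}\cup R,\{t-1\}\cup R)\big)\leq c\cdot\per\!\big(A^{(t)}(R,R)\big)$ with $R=S-i_0$ that the permanent process is built on — turns the resulting permanent into $c\cdot\per\!\big(A^{(t)}(S-i_0,S-i_0)\big)$, cancelling the retained $c^{-1}$.

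The step I expect to be genuinely delicate is this detour bound: one must handle the vertex $t-1$ occurring with arbitrary multiplicity in the closed-walk expansion, reduce it (using only $c\geq 1$ and the rank-$1$ identity above) to a form covered by the inductive hypothesis on the enlarged set $S^{+}$, keep straight at which step ($t-1$ or $t$) each block's weight is evaluated, and verify that the multi-visit ($|T|\geq 2$) contributions are absorbed without overshooting the budget $B(n,|S|+1,t-1)\,\per(\cdot)$ — for which the (substantial) slack already present in the base-case estimate is available. The base case, the three monotonicity facts, and the identity $B(n,k,t)=B(n,k,t-1)+B(n,k+1,t-1)$ are routine.
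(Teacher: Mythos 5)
Your overall skeleton is right: induct on $t$, split the expansion of $\prod_{i\in S} a^{(t)}_{i,\sigma(i)}$ according to which coordinates $T\subseteq S$ take the detour, handle $T=\emptyset$ by the inductive hypothesis plus entrywise monotonicity of the permanent, and close using $B(n,k,t)=B(n,k,t-1)+B(n,k+1,t-1)$. Your closed-walk interpretation of $N_T$ is also correct, and your observation that the induction naturally produces sets $S$ containing the pivot is accurate (this is indeed needed for the inner application of the inductive hypothesis).

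The gap is in the detour bound, and it is not just a matter of bookkeeping. You propose to reduce $\sum_{T\neq\emptyset}N_T$ to the step-$(t-1)$ cyclic sum on the single set $S^{+}=S\cup\{t-1\}$ and then invoke Lemma~\ref{lem:perm-ineq-2}. But that cyclic sum only accounts for the $|T|=1$ contribution, i.e.\ the cycle covers of $S^+$ consisting of a \emph{single} cycle through the pivot. The terms with $|T|=\ell\geq 2$ are cycle covers of $S^+$ into exactly $\ell$ pivot-passing cycles. After you collapse the non-distinguished detours you are left with one cycle through the pivot \emph{times a cycle cover of the complementary vertex set} — a sum over $X\subseteq S$ (the block containing $i_0$) that cannot be absorbed into a single application of the inductive hypothesis on $S^+$. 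The paper instead expands \emph{both} the numerator and the denominator as polynomials in $b_{t,t}$ (coefficients $\alpha_\ell$, $\beta_\ell$), interprets $\beta_\ell$ as cycle covers of $(S\setminus\{i_0\})\cup\{t\}$ with $\ell$ cycles through the pivot, splits $\per(B(X\setminus\{i_0\},X\setminus\{i_0\}))$ by whether the pivot's self-loop is used, and matches $\alpha_\ell\leq B(n,k,t-1)\,(b_{t,t}\beta_{\ell-1}+\beta_\ell)$ term by term, with explicit combinatorial multiplicities ($(\ell-1)!$ for $\alpha_\ell$, $\ell!$ for $\beta_\ell$, and a factor $\ell$ for the ambiguity in reconstructing $X$). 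It also needs an even/odd split on $\ell$, because the block $X$ can have size $k+1$ only in the $\ell=1$ case, which is the sole reason the budget $B(n,k+1,t-1)$ is needed at all. None of this is replicated by "IH on $S^+$ then Lemma~\ref{lem:perm-ineq-2}," and Lemma~\ref{lem:perm-ineq-2} is in fact never invoked in the paper's proof of this lemma. You flag the step as delicate, which is honest, but as stated the reduction overshoots (the same collapsed bound is reused for exponentially many $T$) or undershoots (only $\alpha_1$ is captured), and the mechanism to match $\sum_{\ell\ge 1}\alpha_\ell b_{t,t}^{k-\ell}$ against the full denominator polynomial is missing.
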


\begin{proof}
    We will prove this by induction on $t$. For the base case, at $t = 1$, $A^{(1)} = A$ and by Observation \ref{obs:perm_ratio_bound},
    \begin{equation*}
          \frac{\sum_{\sigma \in \mathcal{C}} \prod_{i \in S} a_{i, \sigma(i)}}{\per(A_{S\backslash \{i_0\}, S \backslash \{i_0\}})} \leq \frac{\per(A_{S, S})}{\per(A_{S\backslash \{i_0\}, S \backslash \{i_0\}})} \leq |S|! \cdot M^{|S|} = \gamma_{|S|}\leq B(n,|S|,1).
    \end{equation*}

Without loss of generality, let $S=\{t+1, \ldots, t+k\}$ and let $i_0 = t+k$. 
    For $t > 1$, we have $a^{(t)}_{i,j} = a^{(t-1)}_{i,j} + \frac{a^{(t-1)}_{i,t} \cdot a^{(t-1)}_{t,j}}{a^{(t-1)}_{t,t}}$. 
    For ease of notation, we will use $b_{i,j}$ to denote $a^{(t-1)}_{i,j}$.  Then, we have

    \begin{align*}
        \frac{\sum_{\sigma \in \mathcal{C}} \prod_{i = t+1}^{t+k} a_{i, \sigma(i)}}{\per(A_{S\backslash \{t+k\}, S \backslash \{t+k\}})} &=  \frac{\sum_{\sigma \in \mathcal{C}} \prod_{i = t+1}^{t+k} \left(b_{i, \sigma(i)} + \frac{b_{i,t} b_{t,\sigma(i)}}{b_{t,t}}\right)}{\per\left(B_{S\backslash \{t+k\}, S \backslash \{t+k\}} + \frac{b_{S\backslash \{t+k\}, t} b_{t, S\backslash \{t+k\}}}{b_{t,t}}\right)} \\
        &= \frac{\sum_{\sigma \in \mathcal{C}} \prod_{i = t+1}^{t+k} \left(b_{i, \sigma(i)} b_{t,t} + b_{i,t} b_{t,\sigma(i)}\right)}{b_{t,t} \per\left(b_{t,t} \cdot B_{S\backslash \{t+k\}, S \backslash \{t+k\}} +b_{S\backslash \{t+k\}, t} b_{t, S\backslash \{t+k\}}\right)}.
    \end{align*}
    We define the numerator as a polynomial in $b_{t,t}$ as follows:
    \begin{equation*}
        \sum_{\sigma \in \mathcal{C}} \prod_{i = t+1}^{t+k} \left(b_{i, \sigma(i)} b_{t,t} + b_{i,t} b_{\sigma(i), t}\right) := \sum_{\ell = 0}^{k} \alpha_{\ell} \cdot b_{t,t}^{k-\ell}.
    \end{equation*}
    Without loss of generality, assume that $k$ is even. The proof for odd $k$ follows similarly. Let $D$ denote the denominator, we will show that 
    \begin{align*}
        \sum_{\ell \in \{0, 2, \ldots, k-2, k\}} \alpha_{\ell} \cdot b_{t,t}^{k-\ell} &\leq B(n, k, t-1) \ \cdot D \\
        \sum_{\ell \in \{1, 3, \ldots, k-3, k-1\} } \alpha_{\ell} \cdot b_{t,t}^{k-\ell} &\leq B(n, k+1, t-1) \cdot D.
    \end{align*}
    Combining the two equations gives
    \begin{align*}
        \sum_{\ell = 0}^{k} \alpha_{\ell} \cdot b_{t,t}^{k-\ell} \leq \left( B(n, k, t-1) + B(n, k+1, t-1)\right) \cdot D = B(n, k, t)  \cdot D,
    \end{align*}
where we used $B(n, a, b-1) + B(n, a+1, b-1) = B(n, a, b)$.

We will also expand the denominator $D$ also as a polynomial in $b_{t,t}$ as follows:
 \begin{align*}
        D = b_{t,t} \cdot \per\left(b_{t,t} \cdot B_{S\backslash \{t+k\}, S \backslash \{t+k\}} +b_{S\backslash \{t+k\}, t} b_{t, S\backslash \{t+k\}} \right) := \sum_{\ell = 0}^k \beta_\ell \cdot b_{t,t}^{k-\ell +1}.
    \end{align*}

    We will now define $\beta_{\ell}$ using the notion of cycle covers in a certain graph. 
    Consider a directed complete graph on the vertex set $S \cup \{t\}$. For $\ell \in \{0, \ldots, k\}$ and $X \subseteq S$, let $\mathcal{CC}_{\geq \ell}(X)$ denote the set of cycle covers of $X \cup \{t\}$ in this graph satisfying the following properties:
    \begin{itemize}
        \item every vertex in $X$  has exactly one incoming and one outgoing edge,
        \item there are exactly $\ell$ cycles containing $t$, and
        \item $t$ does not have a self-loop.
    \end{itemize}
    
    Then
    \begin{equation*}
        \beta_\ell = \ell! \cdot \sum_{C \in \mathcal{CC}_{\geq \ell}(S \backslash \{t+k\})} \prod_{(i\rightarrow j) \in C} b_{i,j}.
    \end{equation*}
    Here, $\ell!$ accounts for the fact that given cover $C \in \mathcal{CC}_{\geq \ell}(S)$, every matching of the $\ell$ incoming edges at $t$ to the $\ell$ outgoing edges at $t$ in $C$ gives a valid permutation on $S$.
    
    We will now relate $\alpha_\ell$ with $\beta_\ell$ using a different notion of cycle cover. For a fixed $\ell \in \{0, \ldots, k\}$ and $X \subseteq S$, let $\mathcal{CC}_{=\ell}(X)$ denote the set of all cycle covers of $X \cup \{t\} $ satisfying the following properties:
    \begin{itemize}
        \item there are no self-loops,
        \item every vertex in $X$ has exactly one incoming and one outgoing edge, and
        \item there are exactly $\ell$ cycles and each cycle contains $t$.
    \end{itemize}

  Then $ \alpha_{\ell}$ can be stated as
    \begin{align}
        \alpha_{\ell} = (\ell-1)!\sum_{C\in \mathcal{CC}_{=\ell}(S)} \prod_{(i\rightarrow j) \in C} b_{i,j} . \label{eq:alpha}
    \end{align}
    Here, we have an $(\ell-1)!$ term because the number of cyclic permutations in $\mathcal{C}$ which can lead to a cover $C \in \mathcal{CC}_{=\ell}$ is equal to the number of ways to arrange $\ell$ elements in a cycle, since $t$ has $\ell$ incoming and outgoing edges in $C$.

    Observe that there is a unique way to decompose any $C \in \mathcal{CC}_{\geq \ell}(S)$ or $C \in \mathcal{CC}_{=\ell}(S)$ into edge-disjoint cycles. So, for a cycle cover $C \in \mathcal{CC}_{=\ell}(S)$, let $X$ denote the vertex set of the cycle in $C$ containing $t+k$. We can re-write the equation \eqref{eq:alpha} in terms of $X$ as follows.
   \begin{align}
        \alpha_{\ell} &= (\ell-1)! \sum_{X \subseteq S: t+k \in X} \left(\sum_{c \text{ cycle on } X} \prod_{(i\rightarrow j) \in c} b_{i,j}\right) \cdot \left(   \sum_{C' \in \mathcal{CC}_{=\ell-1}(S \backslash X)} \prod_{(i\rightarrow j) \in C'} b_{i,j} \right). \label{eq:recursion}
    \end{align}

By the inductive hypothesis, we have 
\begin{equation*}
    \left(\sum_{c \text{ cycle on } X} \prod_{(i\rightarrow j) \in c} b_{i,j}\right) \leq B(n, |X|, t-1) \cdot \per(B_{X \backslash \{t+k\}, X \backslash \{t+k\}}).
\end{equation*}
Substituting this in equation \eqref{eq:recursion} gives
    \begin{align*}
        \alpha_{\ell} 
        &\leq (\ell-1)! \sum_{X \subseteq S: t+k \in X} B(n, |X|, t-1) \cdot \per(B_{X \backslash \{t+k\}, X \backslash \{t+k\}})  \cdot \left(   \sum_{C' \in \mathcal{CC}_{=\ell-1}(S \backslash X)} \prod_{(i\rightarrow j) \in C'} b_{i,j} \right).
    \end{align*}
For $\ell \neq 1$, $X$ can contain at most $k$ vertices and for $\ell = 1$, $|X| = k+1$, so for any even $\ell$, we have 
 \begin{align*}
        \alpha_{\ell} 
        &\leq (\ell-1)! \cdot B(n, k, t-1)  \sum_{X \subseteq S: t+k \in X}  \per(B_{X \backslash \{t+k\}, X \backslash \{t+k\}})  \cdot \left(   \sum_{C' \in \mathcal{CC}_{\ell-1}(S \backslash X)} \prod_{(i\rightarrow j) \in C'} b_{i,j} \right).
    \end{align*}

For a set $X$, 
\begin{align*}
    \per(B_{X \backslash \{t+k\}, X \backslash \{t+k\}}) &= \sum_{\sigma \in \mathcal{S}_{|X|-1}} \prod_{i \in X\backslash \{t+k\}} b_{i, \sigma(i)} \\
    &= b_{t,t} \sum_{C \in \mathcal{CC}_{\geq 0}(X \backslash \{t+k\})} \prod_{(i\rightarrow j) \in C} b_{i,j}+ \sum_{C \in \mathcal{CC}_{\geq 1}(X\backslash \{t+k\})} \prod_{(i\rightarrow j) \in C} b_{i,j}.
\end{align*}
The second equation follows by dividing the permutations on $X$ based on whether they contain the self-loop $t \rightarrow t$.

Therefore, 
\begin{align}
        \alpha_{\ell} 
        &\leq (\ell-1)! \cdot B(n, k, t-1)  \sum_{X \subseteq S: t+k \in X}     \left( b_{t,t} \sum_{C \in \mathcal{CC}_{\geq 0}(X \backslash \{t+k\})} \prod_{(i\rightarrow j) \in C} b_{i,j}\right)  \sum_{C' \in \mathcal{CC}_{=\ell-1}(S \backslash X)} \prod_{(i\rightarrow j) \in C'} b_{i,j} \notag\\
        &+(\ell-1)! \cdot B(n, k, t-1)  \sum_{X \subseteq S: t+k \in X}    \left( \sum_{C \in \mathcal{CC}_{\geq 1}(X\backslash \{t+k\})} \prod_{(i\rightarrow j) \in C} b_{i,j}\right)  \sum_{C' \in \mathcal{CC}_{=\ell-1}(S \backslash X)} \prod_{(i\rightarrow j) \in C'} b_{i,j} . \label{eq:big}
    \end{align}
    We will bound the two summations in equation \eqref{eq:big} separately.

    Note that for any $C\in \mathcal{CC}_{\geq 0}(X \backslash \{t+k\})$ and $C' \in \mathcal{CC}_{= \ell-1}(S\backslash X)$, given $C \cup C'$, there is only one way to reconstruct the set $X$: $X$ must contain $t$ and all vertices in cycles not containing $t$, i.e., vertices in $C$. Therefore, 
    \begin{align*}
         \sum_{X \subseteq S: t+k \in X}   &\left( b_{t,t} \sum_{C \in \mathcal{CC}_{\geq 0}(X \backslash \{ t+k\})} \prod_{(i\rightarrow j) \in C} b_{i,j}\right)  \sum_{C' \in \mathcal{CC}_{=\ell-1}(S \backslash X)} \prod_{(i\rightarrow j) \in C'} b_{i,j} \\
         &= b_{t,t} \sum_{C\in \mathcal{CC}_{\geq \ell-1}(S \backslash \{t+k\}) } \prod_{(i\rightarrow j) \in C} b_{i,j} = b_{t,t} \beta_{\ell-1}.
    \end{align*}
    However, given $C \cup C'$ with $C\in \mathcal{CC}_{\geq 1}(X \backslash \{t+k\})$ and $C' \in \mathcal{CC}_{= \ell-1}(S\backslash X)$, there are at most $\ell$ possibilities for set $X$: $X$ must contain all vertices in any cycle not passing through $t$, and $X$ must contain vertices from exactly one cycle passing through $t$. Since $C \cup C'$ contains $\ell$ cycles passing through $t$, we have
    \begin{align*}
         \sum_{X \subseteq S: t+k \in X}  \sum_{C \in \mathcal{CC}_{\geq 1}(X \backslash \{t+k\})} \prod_{(i\rightarrow j) \in C}& b_{i,j} \cdot  \sum_{C' \in \mathcal{CC}_{\ell-1}(S \backslash X)} \prod_{(i\rightarrow j) \in C'} b_{i,j} \\
         &\leq \ell\cdot  \sum_{C\in \mathcal{CC}_{\geq \ell}(S \backslash \{t+k\}) } \prod_{(i\rightarrow j) \in C} b_{i,j} 
         = \ell \cdot \beta_{\ell}.
    \end{align*}
    Plugging these bounds in equation \eqref{eq:big}, we get 
    \begin{align*}
        \alpha_{\ell} 
        &\leq B(n, k, t-1) \cdot  \left( b_{t,t} \beta_{\ell-1} + \beta_{\ell}\right).
    \end{align*}

    Summing over all even $\ell$'s, we have
    \begin{align}
        \sum_{\ell \in \{0, 2, \ldots, k-2, k\}} \alpha_{\ell} \cdot b_{t,t}^{k-\ell} &\leq B(n, k, t-1) \cdot \left(\beta_0 \cdot b_{t,t}^k + \sum_{\ell \in \{2, \ldots, k-2, k\}} \beta_{\ell-1} b_{t,t}^{k-\ell+1} +   \beta_{\ell} \cdot b_{t,t}^{k-\ell} \right) \notag\\
        &= B(n, k, t-1)  \cdot D. \label{eq:bounded-1}
    \end{align}

    Similarly, for odd $\ell$'s, we have
 \begin{align}
        \sum_{\ell \in \{1, 3, \ldots, k-1\}} \alpha_{\ell} \cdot b_{t,t}^{k-\ell} &\leq B(n, k+1, t-1) \cdot \left(\sum_{\ell \in \{1, 3, \ldots, k-1\}} \beta_{\ell-1} b_{t,t}^{k-\ell+1} +   \beta_{\ell} \cdot b_{t,t}^{k-\ell} \right) \notag\\
        &= B(n, k+1, t-1) \cdot D.\label{eq:bounded-2}
    \end{align}
    Summing equation \eqref{eq:bounded-1} and equation \eqref{eq:bounded-2} completes the proof.
\end{proof}

\subsection{Discussion and Open Questions}
We remark that the permanent process does not give an $O(1)^n$-approximation algorithm for general matrices. For example, the $n \times n$ all $1$'s matrix has permanent $n!$, but the product of diagonal entries returned by the permanent process is $2^{O(n^2)}$. Rather than a universal method, the permanent process should be viewed as a computationally inexpensive tool in the toolbox for bounding the permanent. While it does not provide optimal bounds for all matrices, for certain classes of structured matrices, it yields competitive upper bounds in regimes where other methods fail.

A natural direction for further research is a sharper characterization of settings in which the permanent process performs well. The characterization in Theorem \ref{thm:application} is far from tight. Similarly, can one identify the conditions under which the permanent process significantly improves on the product-of-row-sums bound, or when the two bounds are of the same order?

Another natural question is about the existence of a corresponding constructive lower bound procedure that complements the permanent process. Such a method would not only help certify the tightness of the upper bounds in specific regimes but could also lead to a more complete understanding of the asymptotic growth of permanents for structured matrix families.

\newpage
{\small
\bibliographystyle{alpha}
\bibliography{references}
}

\appendix
\section{Omitted Proofs} \label{sec:omitted-proofs}

\begin{proof}[Proof of Theorem \ref{thm:Gaussian-determinant}]
	We prove using induction on $t$. The base case $t=1$ is trivial. For $t>1$, it is sufficient to prove the theorem for $j\geq t$ because for smaller $j$, the entry is determined at a smaller time step. The inductive step is to show that 
	\begin{align*}
		\frac{\det_A([t-1]+\{i\},[t-1]+\{j\})}{\det_A([t-1],[t-1])}&= \frac{\det_A([t-2]+\{i\}, [t-2]+\{j\})}{\det_A([t-2], [t-2])}\\&-\frac{\frac{\det_A([t-2]+\{i\}, [t-1])}{\det_A([t-2],[t-2])}\cdot \frac{\det_A([t-1], [t-2]+\{j\})}{\det_A([t-2], [t-2])}}{\frac{\det_A([t-1],[t-1])}{\det_A([t-2], [t-2])}}.
	\end{align*}
	Simplifying gives 
	\begin{align*}
		\operatorname{det}_A([t{-}1]{+}\{i\},[t{-}1]{+}\{j\}) \cdot \operatorname{det}_A([t{-}2],[t{-}2]) &= \operatorname{det}_A([t{-}2]{+}\{i\}, [t{-}2]{+}\{j\}) \cdot \operatorname{det}_A([t{-}1],[t{-}1]) \\
		& - \operatorname{det}_A([t{-}2]{+}\{i\}, [t{-}1]) \cdot \operatorname{det}_A([t{-}1], [t{-}2]{+}\{j\}).
	\end{align*}
	
	This is exactly the identity in Lemma \ref{lem:schur-determinant} with 
	\begin{align*}
		&B=A([t-2],[t-2]);\, y_1=A([t-2], \{t-1\}),\, y_2=A([t-2],\{j\});\\ &x_1^{\top}=A(\{t-1\},[t-2]),\, x_2^{\top}=A(\{i\}, [t-2]);\, w=A(\{t-1,i\},\{t-1,j\}). 
	\end{align*}
\end{proof}

\begin{proof}[Proof of Corollary \ref{cor:Gauss_elim-invar}]
	Using Theorem \ref{thm:Gaussian-determinant}, we have $a_{i,i}^{(n)}=\det_A([i],[i])/\det_A([i-1],[i-1])$. Substituting this gives $ \prod_{1\leq i\leq n}a_{i,i}^{(n)}=\det_A([n],[n])=\det(A)$.
\end{proof}

\begin{lemma}
	\label{lem:schur-determinant}
	For any matrix $B\in \RR^{d\times d}$, vectors $x_i,y_i \in \RR^{d\times 1}$, and scalars $w_{i,j}$ with $i,j \in \{1,2\}$, the following equality holds true:
	\begin{align}
		\label{eqn:determinant-uncrossing}
		\begin{vmatrix}
			B & y_1 & y_2 \\
			x_1^{\top}& w_{1,1} & w_{1,2}\\
			x_2^{\top}& w_{2,1} & w_{2,2}
		\end{vmatrix}\cdot |B|= \begin{vmatrix}
			B & y_1 \\
			x_1^{\top}& w_{1,1}
		\end{vmatrix}\cdot\begin{vmatrix}
			B & y_2 \\
			x_2^{\top}& w_{2,2}
		\end{vmatrix}- \begin{vmatrix}
			B & y_2 \\
			x_1^{\top}& w_{1,2}
		\end{vmatrix}\cdot\begin{vmatrix}
			B & y_1 \\
			x_2^{\top}& w_{2,1}
		\end{vmatrix}
	\end{align}
	
\end{lemma}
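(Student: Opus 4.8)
The plan is to reduce the identity to Schur's determinant formula (Theorem~\ref{thm:schur_formula}) in the generic case and then remove the invertibility hypothesis by a density/continuity argument. First I would assume $\det(B)\neq 0$. Collect the rightmost data into blocks $Y=[\,y_1\ y_2\,]$, $X=[\,x_1\ x_2\,]$ and $W=\bigl(\begin{smallmatrix} w_{1,1}&w_{1,2}\\ w_{2,1}&w_{2,2}\end{smallmatrix}\bigr)$, so that the $3\times 3$-block matrix on the left of~\eqref{eqn:determinant-uncrossing} is $\left(\begin{smallmatrix} B&Y\\ X^\top&W\end{smallmatrix}\right)$. Theorem~\ref{thm:schur_formula} gives
\[
\det\begin{pmatrix} B & Y \\ X^\top & W \end{pmatrix} = \det(B)\cdot\det(S), \qquad S = W - X^\top B^{-1} Y,
\]
and the $(i,j)$ entry of the $2\times 2$ matrix $S$ is $s_{i,j}=w_{i,j}-x_i^\top B^{-1}y_j$. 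Applying the same formula to each of the four $(d+1)\times(d+1)$ blocks on the right-hand side (where the Schur complement is now the $1\times 1$ scalar $w_{i,j}-x_i^\top B^{-1}y_j$) yields $\det\left(\begin{smallmatrix} B & y_j \\ x_i^\top & w_{i,j}\end{smallmatrix}\right)=\det(B)\,s_{i,j}$.

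Substituting these evaluations, the right-hand side of~\eqref{eqn:determinant-uncrossing} becomes $\det(B)^2\bigl(s_{1,1}s_{2,2}-s_{1,2}s_{2,1}\bigr)=\det(B)^2\det(S)$, while the left-hand side is $\det(B)\cdot\det(B)\det(S)=\det(B)^2\det(S)$; so the two sides coincide whenever $\det(B)\neq 0$. To pass to arbitrary $B$, note that both sides of~\eqref{eqn:determinant-uncrossing} are polynomials in the entries of $B,x_1,x_2,y_1,y_2$ and the $w_{i,j}$; since they agree on the Zariski-dense set $\{\det(B)\neq 0\}$ they agree identically. Concretely, one can fix all the data, replace $B$ by $B+\epsilon I$ (which is invertible for all but finitely many $\epsilon$), invoke the generic identity, and let $\epsilon\to 0$ by continuity of the determinant.

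I do not expect a genuine obstacle here; the main care needed is bookkeeping, namely keeping straight which rows/columns play the role of $X^\top$ versus $Y$ in each of the five blocks so that the $(i,j)$-indexing of $S$ matches the four $2\times 2$-type determinants on the right, and phrasing the singular-to-generic reduction correctly. If one prefers to avoid $B^{-1}$ altogether, an alternative is to expand both sides multilinearly in the last two rows and the last two columns and match coefficients using the adjugate identity $B\,\mathrm{adj}(B)=\det(B)I$ (this is the determinantal shadow of the combinatorial expansions used for Lemma~\ref{lem:permanent-uncrossing}); but the Schur-complement route above is the shortest and is the natural companion to the permanental argument.
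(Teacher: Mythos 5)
Your proposal is correct and follows essentially the same route as the paper's proof: apply Schur's determinant formula to the $(d+2)\times(d+2)$ block and to each of the four $(d+1)\times(d+1)$ blocks, factor out $\det(B)^2$ to reduce the identity to the expansion of the $2\times 2$ Schur complement's determinant, and then pass to singular $B$ by continuity (the paper invokes continuity of determinants; you phrase it as a Zariski-density / $B+\epsilon I$ argument, which is the same idea made slightly more explicit).
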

\begin{proof}
	Assume that $|B|\neq 0$. Using Schur's formula (see Theorem \ref{thm:schur_formula}), we have 
	\begin{align*}
		\begin{vmatrix}
			B & y_1 & y_2 \\
			x_1^{\top}& w_{1,1} & w_{1,2}\\
			x_2^{\top}& w_{2,1} & w_{2,2}
		\end{vmatrix} =|B|\cdot \det\begin{bmatrix}
			w_{1,1}-x_1^{\top}B^{-1}y_1 & w_{1,2}-x_{1}^{\top}B^{-1}y_2\\
			\\
			w_{2,1}-x_2^{\top}B^{-1}y_1 & w_{2,2}-x_{2}^{\top}B^{-1}y_2
		\end{bmatrix}   
	\end{align*}
	and $\begin{vmatrix}
		B & y_j \\
		x_i^{\top}& w_{i,j}
	\end{vmatrix}=|B|\cdot(w_{i,j}-x_i^{\top}B^{-1}y_{j})$ for $i,j \in \{1,2\}$. Substituting these and factoring out $|B|^2$ from both LHS and RHS of equation \eqref{eqn:determinant-uncrossing}, gives
	\begin{align*}
		\det\begin{bmatrix}
			w_{1,1}-x_1^{\top}B^{-1}y_1 & w_{1,2}-x_{1}^{\top}B^{-1}y_2\\
			\\
			w_{2,1}-x_2^{\top}B^{-1}y_1 & w_{2,2}-x_{2}^{\top}B^{-1}y_2
		\end{bmatrix} &= ( w_{1,1}-x_1^{\top}B^{-1}y_1)(w_{2,2}-x_{2}^{\top}B^{-1}y_2) \\&-(w_{1,2}-x_{1}^{\top}B^{-1}y_2)(w_{2,1}-x_2^{\top}B^{-1}y_1).
	\end{align*}
    The identity in equation \eqref{eqn:determinant-uncrossing} should hold true even when $|B|=0$ using continuity of the determinants with respect to the entries of the matrix. 
\end{proof}

\begin{proof}[Proof of Claim \ref{claim:exp-matrix}]
For ease of notation, we will use $c$ to denote $c_n$. We will establish the claim by induction on $t$. The claim is trivially true for $t=1$. 
    
    Let's assume $j \leq i$, the proof of $i < j$ follows similarly. The final value of the $i,j$-th entry after the permanent process is given by
    \begin{equation}
        a_{i,j}^{(j)} = a_{i,j} + \sum_{s=1}^{j-1}\frac{a_{i,s}^{(s)} a_{s,j}^{(s)}}{a_{s,s}^{(s)}}. \label{eq:claim-eq}
    \end{equation}
    By the inductive hypothesis, we have
    \begin{align*}
        a_{i,s}^{(s)} &= a_{i,s} \cdot\left(1 + \sum_{k=1}^{s-1} 2^{k-1}\cdot c^{-2\cdot k}  \right), \\
        a_{s,j}^{(s)} &= a_{s,j} \cdot\left(1 + \sum_{k=1}^{s-1} 2^{k-1}\cdot c^{-2\cdot k}  \right), \text{ and } \\
        a_{s,s}^{(s)} &= a_{s,s} \cdot\left(1 + \sum_{k=1}^{s-1} 2^{k-1}\cdot c^{-2\cdot k}  \right).
    \end{align*}
    Substituting these values in equation \eqref{eq:claim-eq} gives
    \begin{align*}
        a_{i,j}^{(t+1)} &= a_{i,j} + \sum_{s=1}^{j-1}\frac{a_{i,s} a_{s,j}}{a_{s,s}} \cdot  \left(1 + \sum_{k=1}^{s-1} 2^{k-1}\cdot c^{-2\cdot k}  \right).
    \end{align*}
    Using $a_{i,j} = c^{-|i-j|}$, the above equation becomes
    \begin{align*}
        a_{i,j}^{(t+1)} &= c^{-(i-j)} + \sum_{s=1}^{j-1} c^{-i-j+2s}\cdot  \left(1 + \sum_{k=1}^{s-1} 2^{k-1}\cdot c^{-2\cdot k}  \right)\\
        &= c^{-(i-j)} \left(1 + \sum_{s=1}^{j-1} c^{-2(j-s)}\cdot  \left(1 + \sum_{k=1}^{s-1} 2^{k-1}\cdot c^{-2\cdot k}  \right)\right) \\
        &= c^{-(i-j)} \left(1 + \sum_{s=1}^{j-1} c^{-2(j-s)}  +  c^{-2j}\sum_{s=1}^{j-1} \sum_{k=1}^{s-1} 2^{k-1}\cdot c^{-2\cdot k + 2s}  \right) \\
        &= c^{-(i-j)} \left(1 + \sum_{\ell=1}^{j-1} c^{-2\ell}  +  c^{-2j}\sum_{s=1}^{j-1} \sum_{\ell=1}^{s-1} 2^{s-\ell-1}\cdot c^{2\ell}  \right) \\
        &= c^{-(i-j)} \left(1 + \sum_{\ell=1}^{j-1} c^{-2\ell}  +  c^{-2j} \sum_{\ell=1}^{j-1} 2^{-\ell-1}c^{2\ell}  \sum_{s=\ell+1}^{j-1}  2^{s} \right) \\
        &= c^{-(i-j)} \left(1 + \sum_{\ell=1}^{j-1} c^{-2\ell}  + \sum_{\ell=1}^{j-1} c^{-2(j-\ell)} \cdot (2^{j-\ell-1}-1)  \right) \\
        &= c^{-(i-j)} \left(1 + \sum_{\ell=1}^{j-1} c^{-2\ell}  + \sum_{\ell=1}^{j-1} c^{-2\ell} \cdot (2^{\ell-1}-1)  \right) = c^{-(i-j)} \left(1 + \sum_{\ell=1}^{j-1} c^{-2\ell} \cdot 2^{\ell-1}  \right).
    \end{align*}
    
\end{proof}
\end{document}